\newcommand{\MC}{{\bigcap}}
\newcommand{\tree}{{\mathcal T}}
\title{Capturing the Shape of a Point Set With a Line Segment} %TODO Please add
\author{Nathan van Beusekom}{Department of Mathematics and Computer Science, TU Eindhoven, the Netherlands}{n.a.c.v.beusekom@tue.nl}{https://orcid.org/0000-0003-1813-5299}{}%TODO mandatory, please use full name; only 1 author per \author macro; first two parameters are mandatory, other parameters can be empty. Please provide at least the name of the affiliation and the country. The full address is optional. Use additional curly braces to indicate the correct name splitting when the last name consists of multiple name parts.
\author{Marc van Kreveld}{Department of Information and Computing Sciences, Utrecht University, the Netherlands}{m.j.vankreveld@uu.nl}{https://orcid.org/0000-0001-8208-3468}{}
\author{Max van Mulken}{Department of Mathematics and Computer Science, TU Eindhoven, the Netherlands}{m.j.m.v.mulken@tue.nl}{https://orcid.org/0000-0001-6609-2057}{}
\author{Marcel Roeloffzen}{Department of Mathematics and Computer Science, TU Eindhoven, the Netherlands}{m.j.m.roeloffzen@tue.nl}{https://orcid.org/0000-0002-1129-461X}{}
\author{Bettina Speckmann}{Department of Mathematics and Computer Science, TU Eindhoven, the Netherlands}{b.speckmann@tue.nl}{https://orcid.org/0000-0002-8514-7858}{}
\author{Jules Wulms}{Department of Mathematics and Computer Science, TU Eindhoven, the Netherlands}{j.j.h.m.wulms@tue.nl}{https://orcid.org/0000-0002-9314-8260}{}
\authorrunning{N. van Beusekom et al.} %TODO mandatory. First: Use abbreviated first/middle names. Second (only in severe cases): Use first author plus 'et al.'
\keywords{Shape descriptor, Stabbing, Rotating calipers} %TODO mandatory; please add comma-separated list of keywords
\newcommand{\tangent}[0]{\ensuremath{\tau}}
\newcommand{\convexchain}[0]{\ensuremath{S}}
\newcommand{\convexhull}[1]{\ensuremath{\mbox{\sf CH}(#1)}}
\newcommand{\OPT}[0]{\ensuremath{\mbox{\sf OPT}}}
\newcommand{\circarc}[0]{\ensuremath{\overset{\frown}{c}}}
\newcommand{\eps}{\varepsilon}
\begin{document}

\maketitle

%TODO mandatory: add short abstract of the document
\begin{abstract}
Detecting location-correlated groups in point sets is an important task in a wide variety of applications areas. In addition to merely detecting such groups, the group's shape carries meaning as well. In this paper, we represent a group's shape using a simple geometric object, a line segment. Specifically, given a radius $r$, we say a line segment is \emph{representative} of a point set $P$ if it is within distance $r$ of each point $p \in P$. We aim to find the shortest such line segment. This problem is equivalent to stabbing a set of circles of radius $r$ using the shortest line segment. We describe an algorithm to find the shortest representative segment in $O(n \log h + h \log^3 h)$ time. Additionally, we show how to maintain a stable approximation of the shortest representative segment when the points in $P$ move.
\end{abstract}

\section{Introduction}

Studying location-correlated groups or clusters in point sets is of interest in a wide range of research areas. There are many algorithms and approaches to find such groups; examples include the well-known \emph{k-means clustering}~\cite{hartigan1979algorithm} or \emph{DBSCAN}~\cite{ester1996density}. In addition to the mere existence of such groups, the group's characteristics can carry important information as well. In wildlife ecology, for example, the perceived shape of herds of prey animals contains information about the behavioral state of animals within the herd~\cite{10.7554/eLife.19505}. Since shape is an abstract concept that can get arbitrarily complex, it is often useful to have a simplified representation of group shape that can efficiently be computed. 
The simplest shape (besides a point) that may represent a group is a line segment, suggesting that the group is stretched in a single direction.

When the points move in the plane, as is the case for animals, the representing line segment may change orientation and length. Also, it may disappear if the shape of the points is no longer captured well by a line segment. Conversely, it can also appear when the points form a segment-like shape again.

Let us concentrate on the static version of the problem first. There are a few simple ways to define a line segment for a set of points. We can use the width of the point set to define a narrowest strip, put tight semi-circular caps on it, and use the centers of these semi-circles as the endpoints of the line segment. We can also use the focal points of the smallest enclosing ellipse, and use them as the endpoints. We can also use a maximum allowed distance from the points to the line segment, and use the shortest line segment possible. The first and third option are based on a hippodrome shape (the Minkowski sum of a line segment and a disk). We note that the second and third option still need a threshold distance to rule out that points are arbitrarily far from the defining line segment. In particular, the case that a line segment is \emph{not} a suitable representation should exist in the model, and also the case where the line segment can become a single point. 
There are multiple other options besides the three given, for example, by using the first eigenvector of the points, or the diameter.

In this paper we study the model given by the third option: given a set $P$ of $n$ points in general position, we want to find the shortest line segment $q_1q_2$ such that all points are within distance $r$.
This model has several advantages: (i) It is a simple model. (ii) It naturally includes the case that no line segment represents the points, or a single point already represents the points. (iii) It guarantees that all points are close to the approximating line segment. (iv) It has desirable properties when the points move: in the first two options, there are cases where the points intuitively remain equally stretched in the direction of the line segment, but points moving orthogonally away from it yields a shorter(!) line segment. This issue does not occur in the chosen model.
Moreover, it was studied before in computational geometry, and we can build on existing algorithmic methods and properties.

The first algorithm published that solves the optimization version of the static problem (in fact, the first option) uses $O(n^4\log n)$ time, for a set of $n$ points~\cite{imai1992}. This was improved by Agarwal et al.~\cite{agarwal1993computing} to $O(n^2\alpha(n)\log^3 n)$, where $\alpha(n)$ is the extremely slowly growing functional inverse of Ackermann's function. The first subquadratic bound was given by Efrat and Sharir~\cite{DBLP:journals/dcg/EfratS96}, who presented an $O(n^{1+\eps})$ time algorithm, for any constant $\eps>0$. They use the fixed-radius version as a subroutine and then apply parametric search. Their fixed-radius algorithm already has the bound of $O(n^{1+\eps})$, as it uses vertical decompositions of a parameter space in combination with epsilon-nets. They remark that their methods can solve the shortest stabber problem for unit disks within the same time, which is our problem.

In this paper we present an improved static result and new kinetic results. We solve the static version in $O(n\log^3 n)$ time by exploiting the geometry of the situation better, which allows us to avoid the use of parametric search and epsilon-nets. Our new algorithm uses a rotating calipers approach where we predict and handle events using relatively simple data structures. We still use a key combinatorial result from~\cite{DBLP:journals/dcg/EfratS96} in our efficiency analysis. For the kinetic problem, we are interested in developing a strategy to maintain a line segment that can not flicker rapidly between ``on'' and ``off'', and whose endpoints move with bounded speed.
To accomplish this, we must relax (approximate) the radius around the line segment in which points can be.
We show that with constant speeds and a constant factor approximation in radius, the endpoints of the line segment move with at a speed bounded by a linear function in $r$, while also avoiding flickering.
These results complement recent results on stability.

\begin{figure}
    \centering
    \includegraphics[page=1]{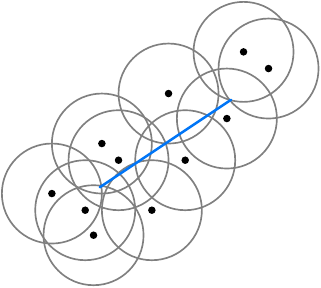}
    \caption{The line segment (blue) must hit every circle of radius $r$, centered at the points in $P$.}
    \label{fig:intuition}
\end{figure}

\subparagraph{Related work.}
A number of shape descriptors have been proposed over the years. A few popular ones are the \emph{alpha shape} of a point set~\cite{edelsbrunner1983shape} and the \emph{characteristic shape}~\cite{duckham2008efficient}, both of which generate representative polygons. Another way to generate the shape of a point set is to fit a function to the point set~\cite{chen2013approximating,hakimi1991fitting, wang2002new}. Bounding boxes and strips are much closer related to the line segment we propose. In the orientation of the first eigenvector, bounding boxes (and strips) have been shown to not capture the dimensions of a point set well~\cite{DBLP:journals/comgeo/DimitrovKKR09}. Optimal bounding boxes and strips, of minimum area and width, respectively, align with a convex hull edge and can be computed in linear time given the convex hull~\cite{DBLP:journals/cacm/FreemanS75,toussaint1983solving}.
Problems of finding one or more geometric objects that intersect a different set of geometric objects are known as \emph{stabbing} problems~\cite{edelsbrunner1982stabbing}, and several variants have been studied~\cite{chan2018stabbing,diaz2015new,schlipf2014stabbing}. As mentioned, stabbing a set of unit circles with the shortest line segment was studied in~\cite{DBLP:journals/dcg/EfratS96}. The inverse variant, line segments stabbed by one or more circles, has also been studied~\cite{claverol2018stabbing,kobylkin2017stabbing}.

Recently, considerable attention has been given to stability of structures under the movement of a set of points or motion of other objects. Stability is a natural concern in, for example, (geo)visualization and automated cartography: In air traffic control planes may be visualized as labeled points on a map, and the labels are expected to smoothly follow the locations of the moving points~\cite{DBLP:conf/esa/BergG13}. Similarly, for interactive maps that allow, for example zooming and panning, labels should not flicker in and out of view~\cite{been2010optimizing,gemsa2011sliding,gemsa2016consistent,nollenburg2010dynamic}. In computational geometry, only the stability of $k$-center problems was studied~\cite{DBLP:conf/dialm/BespamyatnikhBKS00,de2013kinetic,durocher2006steiner,durocher2008bounded}, until Meulemans et al.\ introduced a framework for stability analysis~\cite{meulemans2018framework}. Applying the framework to shape descriptors, they proved that an $O(1)$-approximation of an optimal oriented bounding box or strip moves only a constant-factor faster than the input points~\cite{meulemans2019stability}.

\section{Computing the Shortest Representative Segment}\label{sec:algorithm}

Given a set $P$ of $n$ points and a distance bound $r$, we show how to construct the shortest segment $q_1q_2$ with maximum distance $r$ to $P$. Our algorithm uses the rotating calipers approach~\cite{toussaint1983solving}. We start by finding the shortest representative segment for fixed orientation $\alpha$, after which we rotate by $\pi$ while maintaining the line segment, and return the shortest one we encounter. Note that, even though a representative segment does not exist for every orientation, we can easily find an initial orientation $\alpha$ for which it does exist using rotating calipers; these are the orientations at which the rotating calipers have width $\leq 2r$. Although our input point set $P$ can be of any shape, the following lemma shows that it suffices to consider only its convex hull~$\convexhull{P}$.

\begin{restatable}{lemma}{convexhullLem}
\label{lem:convex-hull}
    If a line segment $q_1q_2$ intersects all circles defined by the points in the convex hull $\convexhull{P}$, then $q_1q_2$ also intersects all circles defined by the points in $P$.
\end{restatable}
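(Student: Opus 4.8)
The plan is to exploit the convexity of the region swept out by a line segment's $r$-neighborhood, together with the fact that every point of $P$ lies inside $\convexhull P$. First I would reformulate the intersection condition: a line segment $q_1q_2$ intersects the circle of radius $r$ centered at a point $p$ if and only if the distance from $p$ to the segment $q_1q_2$ is at most $r$, i.e.\ if and only if $p$ lies in the hippodrome $H := \stadium(q_1q_2, r)$, the Minkowski sum of $q_1q_2$ with a disk of radius $r$. So the hypothesis says every vertex of $\convexhull P$ lies in $H$, and the goal is to show every point of $P$ lies in $H$.

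The key observation is that $H$ is convex: it is the Minkowski sum of two convex sets (a segment and a disk). Therefore, if $H$ contains all the vertices of $\convexhull P$, it contains their convex hull, which is exactly $\convexhull P$ itself. Since every $p \in P$ satisfies $p \in \convexhull P \subseteq H$, the distance from each such $p$ to $q_1q_2$ is at most $r$, so $q_1q_2$ intersects the circle centered at $p$. This completes the argument.

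I would carry out the steps in that order: (1) restate ``intersects the circle of radius $r$ about $p$'' as ``$p \in \stadium(q_1q_2,r)$''; (2) note $\stadium(q_1q_2,r)$ is convex as a Minkowski sum of convex sets; (3) invoke that a convex set containing the vertices of $\convexhull P$ contains all of $\convexhull P$, hence all of $P$; (4) translate back to the circle-intersection statement. Since $\convexhull P$ may in degenerate cases not have all points of $P$ as vertices (collinear points), it is worth phrasing step (3) as: $\convexhull P$ is the convex hull of its vertex set, and every $p\in P$ lies in $\convexhull P$ by definition.

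There is essentially no obstacle here; the only subtlety is making sure the equivalence in step (1) is stated correctly (a segment meets a circle of radius $r$ iff the segment comes within distance $r$ of the center, which holds whether or not the center is inside the disk bounded by the circle — the relevant condition is distance to the segment, not to its supporting line). Once that is pinned down, the proof is a one-line consequence of convexity of the hippodrome.
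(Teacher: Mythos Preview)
Your proof is correct and follows essentially the same approach as the paper: both reformulate ``intersects the circle about $p$'' as ``$p$ has distance at most $r$ to $q_1q_2$'' and then use convexity of the $r$-neighborhood of the segment to pass from the hull vertices to all of $P$. You make this convexity explicit via the Minkowski-sum description of the hippodrome, whereas the paper argues it more informally (edges between points at distance $\leq r$ stay at distance $\leq r$, and interior points lie inside the hull), but the underlying idea is identical.
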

\begin{proof}
    Since $q_1q_2$ crosses each circle defined by $\convexhull{P}$, each point in $\convexhull{P}$ has a distance of at most $r$ to $q_1q_2$. The edges of the convex hull are also at most~$r$ to $q_1q_2$, since these are straight line segments between the convex hull points. All other points in $P$ are inside the convex hull and thus each point in $P$ must have a distance of at most $r$ to $q_1q_2$. 
\end{proof}

We can compute $\convexhull{P}$ in $O(n\log h)$ time, where $h$ is the size of the convex hull~\cite{chan1996convexhull}.
% Observe that, if rotating calipers initially finds no orientation with width~$>2r$, then point set $P$ can be enclosed by a circle of radius at most $r$, and the shortest representative segment is the center point of this enclosing circle. \marcel{This is not true right? With an equilateral triangle of side length 2r.}
% Hence, in the rest of this paper we assume that the shortest representative segment has non-zero length. \marcel{Do we use this assumption?}

\subsection{Fixed orientation}
\label{sec:fixed-orientation}
We describe how to find the shortest representative segment with fixed orientation $\alpha$. Using rotating calipers~\cite{toussaint1983solving}, we can find all orientations in which a representative segment exists, and pick $\alpha$ such that a solution exists. For ease of exposition and without loss of generality, we assume $\alpha$ to be horizontal. Let the \emph{left} and \emph{right half-circle} of a circle $C$ be the half-circle between $\pi/2$ and $3\pi/2$ and between $3\pi/2$ and $5\pi/2$, respectively. Lemma~\ref{lem:convex-hull} permits us to consider only points of~$P$ on the convex hull, thus for the remainder of this paper we use $\mathcal{C}_P$ to indicate the set of circles of radius $r$ centered at the points of $P$ in $\convexhull{P}$. 

Observe that every horizontal line that lies below the bottom-most top horizontal tangent $\tangent_1$ and above the top-most bottom horizontal tangent $\tangent_2$ of all circles crosses all circles (see Figure~\ref{fig:strip}). If $\tangent_1$ lies below $\tangent_2$, then there exists no horizontal line that crosses all circles. 

\begin{figure}
    \centering
    \includegraphics[page=4]{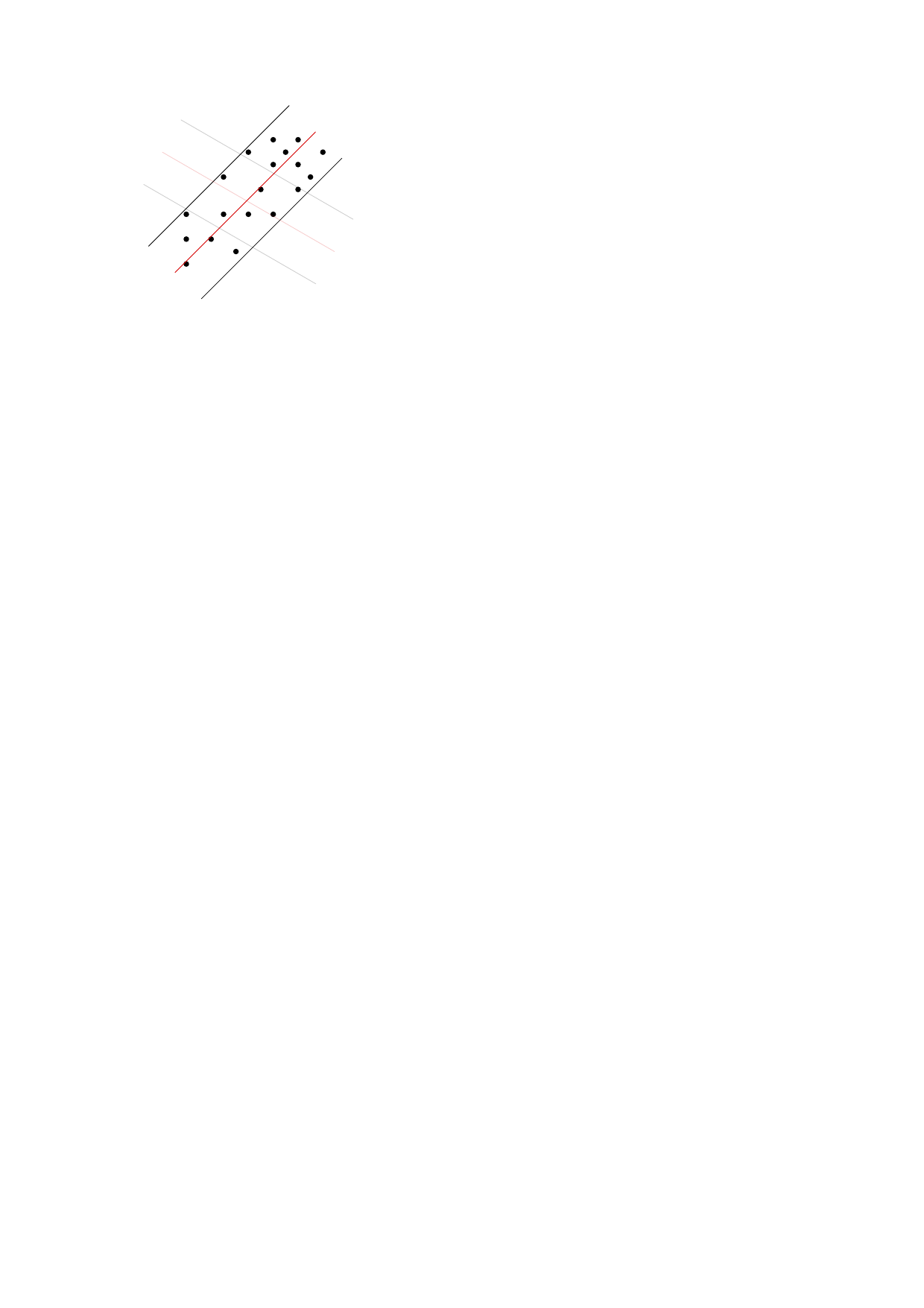}
    \caption{Two extremal tangents~$\tangent_1$ and $\tangent_2$ for horizontal orientation~$\alpha$. The shortest line segment of orientation~$\alpha$ that intersects all circles, ends at the boundary of the gray regions. }
    \label{fig:strip}
\end{figure}

To place $q_1q_2$ in the strip between $\tangent_1$ and $\tangent_2$, we can define regions $R_1, R_2$ in which endpoints $q_1$ and $q_2$ must be placed such that $q_1q_2$ intersects all circles (see Figure~\ref{fig:strip}). 

The region $R_1$ is defined as the set of points below or on $\tangent_1$ and above or on $\tangent_2$ and right or on the right-most envelope of all left half-circles. The region $R_2$ is defined analogously using the left envelope of right half-circles. We use $\convexchain_1$ and $\convexchain_2$ to denote the envelope boundary of $R_1$ and $R_2$ respectively. Note that $\convexchain_1$ and $\convexchain_2$ are convex and consist of circular arcs from the left and right half-circles respectively. 
%The boundaries of $R_1$ and $R_2$ are defined by a \emph{convex sequence} $\convexchain_1$ and $\convexchain_2$ of the right-most left circles and left-most right circles, delimited by the two tangents $\tangent_1, \tangent_2$, as well as these tangents themselves. 
If $R_1$ and $R_2$ intersect, then we can place a single point in their intersection at distance at most $r$ from all points in $P$. 
Otherwise, note that $q_1$ and $q_2$ must be on the convex sequences $\convexchain_1$ and $\convexchain_2$, respectively; otherwise, we can move the endpoint onto the convex sequence, shortening $q_1q_2$ and still intersecting all~circles.

We will show that we can compute $\convexchain_1$ and $\convexchain_2$ in $O(h)$ time. First, we show that the half-circles on a convex sequence appear in order of the convex hull.

\begin{restatable}{lemma}{circleOrder}
\label{lem:circle-order}
    The order of the circular arcs in $\convexchain_1$ or $\convexchain_2$ matches the order of their corresponding centers in $\convexhull{P}$.
\end{restatable}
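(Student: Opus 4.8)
The plan is to exhibit a duality between $\convexchain_1$ and a chain of vertices of $\convexhull{P}$: as one walks along $\convexchain_1$, the center of the arc currently traversed is exactly the vertex of $\convexhull{P}$ that is extreme in the outward normal direction of $\convexchain_1$ at that point. Since the extreme vertex of a convex polygon runs through the polygon's vertices in their cyclic order as the query direction rotates, this immediately gives the claimed matching of orders.

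Concretely, I would take $\alpha$ horizontal. Then $\convexchain_1$ is the graph $x = E(y)$ with $E(y) = \max_{C\in\mathcal{C}_P}\ell_C(y)$, where $\ell_C$ is the function whose graph is the left half-circle of $C$; the relevant range of $y$ is the strip between $\tangent_2$ and $\tangent_1$, which lies inside the vertical extent of every $C\in\mathcal{C}_P$, so every $\ell_C$ is defined on it (and is there given by the circle formula), and $|y - c_y|\le r$ for the center $c$ of every $C\in\mathcal{C}_P$. Each $\ell_C$ is convex and $E$ is a maximum of convex functions, so $\convexchain_1$ is a convex curve (as already stated in the text): its arcs occur in order of increasing height, and its outward normal $u$ — which points into the complement of $R_1$, hence leftward — turns monotonically along $\convexchain_1$. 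Moreover, at a point $q$ in the relative interior of the arc contributed by a circle $C$ with center $p$ one has $q = p + ru$, with $u$ the outward normal of $\convexchain_1$ at $q$.

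The heart of the argument is the claim: for $q$, $u$, $p$ as above, $p$ minimizes $\langle\,\cdot\,,u\rangle$ over the centers of all circles in $\mathcal{C}_P$ (equivalently, $p$ is the vertex of $\convexhull{P}$ extreme in direction $-u$). To prove it, let $C'$ be any other circle, with center $p'$. Since $q$ lies on the rightmost envelope of the left half-circles, $\ell_{C'}(q_y)\le q_x$, and since $q_y$ lies in the strip we may write $\ell_{C'}(q_y) = p'_x - \sqrt{r^2 - (q_y - p'_y)^2}$. Substituting $q = p + ru$ and using $\langle u,u\rangle = 1$, a short computation rewrites this inequality: when $p'_x - p_x - r u_x > 0$, squaring both sides makes everything collapse to $\langle p' - p,\,u\rangle \ge 0$; in the remaining case one invokes $|q_y - p'_y|\le r$ (again because $q_y$ is in the strip) to reach the same conclusion. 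Granting the claim, the statement for $\convexchain_1$ follows: as $q$ traverses $\convexchain_1$, the normal $u$ — and hence $-u$ — turns monotonically, so the center of the active arc moves monotonically through the vertices of $\convexhull{P}$ in their cyclic order. For $\convexchain_2$ one reflects the instance in a vertical line, which swaps left and right half-circles and the rightmost and leftmost envelopes while preserving the cyclic vertex order of $\convexhull{P}$, so the statement for $\convexchain_2$ reduces to that for $\convexchain_1$.

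The step I expect to be the main obstacle is precisely this claim — translating ``$q$ lies on the rightmost envelope of equal-radius left half-circles and inside the strip'' into the clean hull-extremality statement. The computation behind it is elementary but a little delicate, because a left half-circle imposes only a half-disk condition rather than a full-disk one; the strip hypothesis is exactly what is needed to handle the resulting case distinction.
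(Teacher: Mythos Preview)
Your proof is correct and takes a genuinely different route from the paper. The paper argues \emph{locally}: for three consecutive arcs on $\convexchain_1$ with centers $p_{k-1},p_k,p_{k+1}$ it shows, by a direct geometric observation about where $p_k$ must sit relative to the segment $p_{k+1}p_{k-1}$, that the triple $(p_{k-1},p_k,p_{k+1})$ occurs in that cyclic order on $\convexhull{P}$; transitivity then gives the global order. Your argument is \emph{global} and dual: you show that at each point $q$ of $\convexchain_1$ the center $p$ of the active arc is exactly the hull vertex extreme in the direction $-u$ of the outward normal, and since the normal turns monotonically along a convex curve, the active center runs through the hull vertices in their cyclic order. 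Your approach actually proves more---a precise characterization of \emph{which} arc is active at each point---and ties the lemma to the standard fact about supporting directions of a convex polygon. The paper's proof is more elementary and avoids the case split in your extremality claim. For the record, your second case (where $p'_x - p_x - r u_x \le 0$, i.e., $q_x \ge p'_x$) is cleanly dispatched by noting that the desired inequality is equivalent to $\langle q-p',u\rangle \le r$, and this follows since $(q_x-p'_x)u_x \le 0$ while $|(q_y - p'_y)u_y| \le |q_y - p'_y| \le r$ by the strip hypothesis; so no squaring is needed there.
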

\begin{proof}    
    Consider an arbitrary circular arc $\circarc_k$ centered in $p_k$ on $\convexchain$. It has at most two intersections on $\convexchain$, with $\circarc_{k-1}$ and $\circarc_{k+1}$. Let $\circarc_{k}$ be traversed clockwise from $\circarc_{k-1}$ to $\circarc_{k+1}$. 
    We show that $p_k$ must lie in between $p_{k-1}$ and $p_{k+1}$ in clockwise order in $\convexhull{P}$.
    Consider the triangle $(p_{k-1},p_{k},p_{k+1})$, and consider, invariant to instance rotation, the horizontal edge $p_{k+1}p_{k-1}$ with $p_{k+1}$ on the left and $p_{k-1}$ on the right (see Figure~\ref{fig:sequence-order-triangle}).
    Since $\circarc_{k}$ lies between $\circarc_{k-1}$ and $\circarc_{k+1}$ on $\convexchain$, $p_k$ must lie below the edge $p_{k+1}p_{k-1}$, otherwise $\circarc_{k}$ could not be on $\convexchain$.
    Hence, the points on the triangle $(p_{k-1},p_{k},p_{k+1})$ occur that way in clockwise order. 
    Since the point $p_k$ lies in between $p_{k-1}$ and $p_{k+1}$ in clockwise order in $(p_{k-1},p_{k},p_{k+1})$, it must also be in that order in the convex hull, otherwise the convex hull would have a concave angle. By transitivity, the order of circular arcs in~$\convexchain$ matches the order of the corresponding centers in~$\convexhull{P}$.\qedhere

    \begin{figure}
    \centering
    \includegraphics{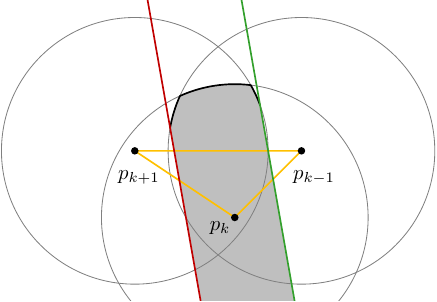}
    \caption{The triangle $(p_{k-1},p_{k},p_{k+1})$ (yellow) such that the edge $p_{k+1}p_{k-1}$ is horizontal. It must hold that $p_k$ lies below $p_{k+1}p_{k-1}$.}
    \label{fig:sequence-order-triangle}
\end{figure}    
\end{proof}

Now we can compute the convex sequences in linear time, given the tangents~$\tangent_1$ and~$\tangent_2$, which can easily be found in linear time.

\begin{restatable}{lemma}{sequenceConstruction}
\label{lem:sequence-construction}
    Given tangents $\tau_1$ and $\tau_2$, and $\convexhull{P}$, we can construct $\convexchain_1$ and $\convexchain_2$ in~$O(h)$.    
\end{restatable}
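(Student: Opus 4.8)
The plan is to construct $\convexchain_1$ (and symmetrically $\convexchain_2$) by a single sweep over the convex-hull vertices, using Lemma~\ref{lem:circle-order} to justify that no vertex needs to be reconsidered. First I would identify, in $O(h)$ time, which circles actually contribute an arc to $\convexchain_1$: $\convexchain_1$ is the right envelope of the left half-circles, clipped to the horizontal strip between $\tangent_1$ and $\tangent_2$, so only circles whose left half-circle reaches into that strip matter, and among those only the ones that are ``rightmost'' over some vertical range of $y$-coordinates. By Lemma~\ref{lem:circle-order} the arcs on $\convexchain_1$ appear in the same cyclic order as their centers on $\convexhull{P}$; hence I would walk along $\convexhull{P}$ once, maintaining the partial envelope built so far as a stack of arcs ordered by decreasing $y$ (exactly as in Graham-scan / upper-envelope constructions), and for each new circle pop arcs that it dominates and then push its own arc.

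The key steps, in order: (1) Fix the strip $[\,y_{\tangent_2},\,y_{\tangent_1}\,]$; the two tangents are found by taking the min over all circles of $y(p_i)+r$ and the max of $y(p_i)-r$, which is $O(h)$. (2) Split $\convexhull{P}$ into the two chains where the ``left half-circle'' is the locally extreme one; since a left half-circle is the part of $C_i$ facing in the direction of decreasing $x$, the relevant centers are those on the appropriate side, and Lemma~\ref{lem:circle-order} tells us their arcs occur along $\convexchain_1$ in hull order. (3) Sweep through these centers in that order, maintaining a stack representing $\convexchain_1$ from top ($y=y_{\tangent_1}$) to bottom ($y=y_{\tangent_2}$); when processing circle $C_i$, compute its intersection with the current top-of-stack arc, pop while $C_i$'s left half-circle lies strictly to the right of the stack's top arc over the remaining $y$-range, then push the surviving portion of $C_i$'s left half-circle. (4) Finally clip the resulting chain to the strip and to the vertical lines through the topmost and bottommost surviving arcs. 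Each circle is pushed once and popped at most once, and every elementary operation (intersecting two equal-radius circles, comparing two arcs at a given height) is $O(1)$, so the total is $O(h)$. The same procedure with $x$-reflection yields $\convexchain_2$.

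The main obstacle I expect is not the asymptotics but the geometric bookkeeping in step (3): one must argue that the ``pop'' test is monotone, i.e.\ that once circle $C_i$ dominates the top arc at the height where it enters, the domination persists downward until a well-defined crossing, so that a stack discipline is correct and no arc ever re-enters. This follows because two equal-radius left half-circles intersect in at most one point, and to the right of that point the one with the larger-$x$ center is the rightmost — so over any $y$-interval the pointwise-rightmost among a set of left half-circles is a lower envelope of ``single-crossing'' curves, which is exactly the structure that makes a linear-time incremental (stack-based) construction valid once the insertion order matches the order of appearance on the envelope, and that order is supplied by Lemma~\ref{lem:circle-order}. I would also need the small observation that circles whose center lies outside the strip in $y$ can still contribute an arc, so the clipping to $[\,y_{\tangent_2},y_{\tangent_1}\,]$ must be applied to arcs, not used to filter centers.
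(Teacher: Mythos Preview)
Your proposal is correct and follows essentially the same approach as the paper: process the convex-hull vertices in hull order (justified by Lemma~\ref{lem:circle-order}), maintain the partial envelope as a stack, pop dominated arcs and push the new one, and conclude $O(h)$ by the standard push-once/pop-once amortization. The paper's proof differs only in presentation---it first locates the starting arc by scanning all intersections with $\tangent_1$ and then gives an explicit case analysis (no intersection, intersection below $\tangent_2$, intersection inside the current top arc, intersection above it) rather than your single-crossing argument---but the algorithm and the analysis are the same.
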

\begin{proof}
    We may assume that a solution exists. This can easily be checked in $O(h)$ time.
    We describe only the construction of $\convexchain_1$, as $\convexchain_2$ can be constructed symmetrically. Without loss of generality, assume that $\tangent_1$ denotes the start of $\convexchain_1$ in clockwise order. We can find the first arc on $\convexchain_1$ by checking all intersections between $\tangent_1$ and the relevant half-circles, and identifying the most extremal intersection in $O(h)$ time (see Figure~\ref{fig:strip}) We add the part of the circle that lies between $\tangent_1$ and $\tangent_2$ to $\convexchain_1$. We then process each point $p_i$ along the convex hull in clockwise order from the point defining our initial arc.

    Next, let $\circarc_1, \ldots, \circarc_k$ denote the circular arc pieces for $S_1$ constructed so far. Let $p_i$ be the next point on the convex hull that we process. Let $c_i$ denote the left half-circle centered at $p_i$ and let $c_k$ be the support left half-circle of $\circarc_k$. We find the intersection between $c_i$ and $c_k$. If there is no intersection, then $c_i$ must lie entirely to the left of $c_k$ and it cannot contribute to $S_1$. If the intersection point is below $\tangent_2$ then between $\tangent_1$ and $\tangent_2$ we have that $c_i$ lies left of $c_k$ and it cannot contribute to $S_1$. If the intersection point lies within $\circarc_k$ then we update $S_1$ to switch at the intersection point from $c_k$ to $c_i$ as then $c_i$ must lie right of $c_k$ below the intersection point. If the intersection point lies above $\circarc_k$ then the entirety of $\circarc_k$ lies to the left of $c_i$, therefore $\circarc_k$ cannot contribute to $S_1$ and we can discard $\circarc_k$. We then continue by comparing $c_i$ to $c_{k-1}$.
    
    Whenever a half-circle is possibly added it is compared to at most $O(|\convexchain_1|)$ arcs. However, when the half-circle is compared to $i$ arcs, then $i-1$ arcs would be removed from $\convexchain_1$. Thus, by an amortization argument, this happens $O(h)$ times. 
\end{proof}

Next, we must place $q_1$ and $q_2$ on $\convexchain_1$ and $\convexchain_2$, respectively, such that $q_1q_2$ is shortest. We show that $q_1q_2$ is the shortest line segment of orientation $\alpha$ when the tangents of $\convexchain_1$ at $q_1$ and $\convexchain_2$ at $q_2$ are equal. 
Vertices on $\convexchain_1$ and $\convexchain_2$ have a range of tangents (see Figure~\ref{fig:tangent-mapping}).

\begin{figure}
    \centering
    \includegraphics{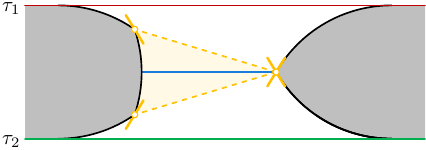}
    \caption{
    Two convex sequences between $\tangent_1$ and $\tangent_2$. There are multiple points on the left convex sequence that have the same tangent as the right yellow vertex. Still, there is only one line segment in horizontal orientation for which the tangents of its endpoints are equal (blue).}
    \label{fig:tangent-mapping}
\end{figure}

\begin{restatable}{lemma}{shortestMapping}
\label{lem:shortest-mapping}
    Let $\convexchain_1$ and $\convexchain_2$ be two convex sequences of circular arcs, and let $q_1$ and $q_2$ be points on $\convexchain_1$ and $\convexchain_2$, respectively, such that line segment $q_1q_2$ has orientation $\alpha$. If the tangent on $\convexchain_1$ at $q_1$ is equal to the tangent on $\convexchain_2$ at $q_2$, then $q_1q_2$ is minimal. 
\end{restatable}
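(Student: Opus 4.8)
The plan is to turn the claim into a statement about minimising a one-variable convex function. Assume, as in the text, that $\alpha$ is horizontal and that $R_1\cap R_2=\emptyset$ (otherwise a single point already represents $P$ and there is nothing to prove). Let $[y^-,y^+]$ be the interval of heights between $\tangent_2$ and $\tangent_1$. By the choice of these two tangents, every circle of $\mathcal{C}_P$ is crossed by every horizontal line of height $y\in[y^-,y^+]$, so at such a height the circle centred at $p_i=(a_i,b_i)$ spans the $x$-interval $[g_i(y),h_i(y)]$, where $g_i(y)=a_i-\sqrt{r^2-(y-b_i)^2}$ and $h_i(y)=a_i+\sqrt{r^2-(y-b_i)^2}$ describe its left and right half-circle, respectively. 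Hence, over $[y^-,y^+]$, $\convexchain_1$ is the graph of $f_1(y)=\max_i g_i(y)$ and $\convexchain_2$ is the graph of $f_2(y)=\min_i h_i(y)$; a horizontal segment at height $y$ meets all circles of $\mathcal{C}_P$ exactly when its $x$-range $[x_1,x_2]$ satisfies $x_1\le f_2(y)$ and $x_2\ge f_1(y)$; and since $R_1\cap R_2=\emptyset$ forces $f_1(y)>f_2(y)$ throughout $[y^-,y^+]$, such a segment has length at least $L(y):=f_1(y)-f_2(y)$, with equality iff its endpoints are $q_2(y)=(f_2(y),y)\in\convexchain_2$ and $q_1(y)=(f_1(y),y)\in\convexchain_1$. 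So it suffices to prove that the hypothesis of the lemma forces $L$ to be minimised at the height of $q_1q_2$.

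I would then establish that $f_1$ is convex and $f_2$ is concave. A short computation gives $g_i''(y)=r^2\bigl(r^2-(y-b_i)^2\bigr)^{-3/2}>0$, so each $g_i$ is convex and therefore $f_1=\max_i g_i$ is convex as a pointwise maximum; symmetrically each $h_i$ is concave and $f_2=\min_i h_i$ is concave. In particular $L=f_1-f_2$ is convex on $[y^-,y^+]$.

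The final step is to match the minimiser of $L$ with the tangent condition. For a point $q$ at height $y$ on $\convexchain_1$, the slopes $\mathrm{d}x/\mathrm{d}y$ of the lines tangent to $\convexchain_1$ at $q$ form the closed interval $\partial f_1(y)$ spanned by the one-sided derivatives of $f_1$ at $y$; this is precisely the ``range of tangents'' of the statement, a single value when $q$ lies in the relative interior of an arc and a non-degenerate interval when $q$ is a vertex. Likewise, the range of tangents of $\convexchain_2$ at $q_2$ is the superdifferential interval $\widehat{\partial}f_2(y)$ of the concave function $f_2$. Two tangent directions are equal exactly when these slopes agree, so the hypothesis of the lemma supplies a slope $\xi\in\partial f_1(y)\cap\widehat{\partial}f_2(y)$ at the common height $y$ of $q_1$ and $q_2$. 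Then, for every $y'\in[y^-,y^+]$, convexity of $f_1$ gives $f_1(y')\ge f_1(y)+\xi(y'-y)$ and concavity of $f_2$ gives $f_2(y')\le f_2(y)+\xi(y'-y)$; subtracting yields $L(y')\ge L(y)$. Hence $L$ attains its minimum at $y$, so by the first paragraph $q_1q_2$ has minimum length among all segments of orientation $\alpha$ that meet all circles of $\mathcal{C}_P$, which is the assertion.

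The one point requiring care is the non-smooth case: when $q_1$ or $q_2$ is a vertex of its chain it carries a whole interval of supporting tangents, so ``equal tangent'' must be read as ``these two slope-intervals overlap'', and one uses only the elementary fact that a subgradient of $f_1$ and a supergradient of $f_2$ at $y$ subtract to a subgradient of $L$ at $y$ (no appeal to the full subdifferential sum rule is needed). Everything else is routine; the slab endpoints $y^-$ and $y^+$ are harmless, since there $L$ is still convex and continuous and the only effect is that some tangent becomes horizontal (slope $\pm\infty$ in the $\mathrm{d}x/\mathrm{d}y$ convention), i.e.\ $q_1$ or $q_2$ slides onto $\tangent_1$ or $\tangent_2$.
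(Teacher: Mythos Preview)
Your proof is correct and takes a genuinely different route from the paper's.

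The paper argues by contradiction and case analysis: if the tangents at $q_1$ and $q_2$ are not parallel, it identifies a direction of vertical motion that reduces the slope discrepancy, and then argues case-by-case (tangents perpendicular to $\alpha$ or not; segment shrinking on one side or both) that this motion also shortens $q_1q_2$. The argument is geometric and somewhat informal, especially in the ``steeper slope must be on the shrinking side'' step.

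You instead recast the problem as minimising the one-variable function $L(y)=f_1(y)-f_2(y)$, prove $L$ is convex from the explicit convexity/concavity of the half-circle functions $g_i,h_i$, and observe that the equal-tangent hypothesis is exactly the condition $0\in\partial L(y)$ (via a common $\xi\in\partial f_1(y)\cap\widehat{\partial}f_2(y)$). This is cleaner and more rigorous: it handles the vertex case uniformly through subgradients rather than as a separate wrinkle, and it immediately yields the unimodality of $L$ that the paper states as a separate observation after the lemma. The paper's version, by contrast, avoids any convex-analysis vocabulary and stays purely geometric, which some readers may prefer, but at the cost of a less transparent treatment of the non-smooth and boundary cases.
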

\begin{proof}
    Assume that we are in a situation where the tangents on $\convexchain_1$ at $q_1$ and on $\convexchain_2$ at $q_2$ are not equal, and assume without loss of generality that moving $q_1$ and $q_2$ closer to $\tangent_2$, while keeping $q_1q_2$ in orientation~$\alpha$, lowers the difference in slope between the tangents on $\convexchain_1$ and $\convexchain_2$. 
    This is always possible, because the slopes of the tangents along $\convexchain_1$ and $\convexchain_2$ change in opposite directions from $\tangent_1$ to $\tangent_2$. Next, consider two cases of how the line segment $q_1^*q_2^*$ with equal tangents can look. We show that moving towards $\tangent_2$, and thus towards $q_1^*q_2^*$, makes $q_1q_2$ shorter. First, the slopes of the tangents on $\convexchain_1$ at $q_1^*$ and on $\convexchain_2$ at $q_2^*$ are perpendicular to~$\alpha$. In this case, the line segment $q_1q_2$ shrinks on both sides when moving towards~$\tangent_2$. Second, the slopes of the tangents on $\convexchain_1$ at $q_1^*$ and on $\convexchain_2$ at $q_2^*$ are not perpendicular to~$\alpha$. As a result, by moving towards $\tangent_2$ the line segment $q_1q_2$ shrinks on at least one side. If the $q_1q_2$ shrinks on both sides, it naturally becomes shorter, so assume $q_1q_2$ shrinks on only one side. Compared to the situation where the slopes are equal, by our assumption, at~$q_1$ and~$q_2$ one slope is flatter and one is steeper. If moving~$q_1$ and $q_2$ towards $\tangent_2$ decreases the difference in slopes, the steeper slope must be located on the side of $q_1q_2$ that shrinks, since moving $q_1$ and $q_2$ flattens that slope. Hence, moving $q_1q_2$ towards $\tangent_2$ lowers the length of $q_1q_2$.
\end{proof}
 
Observe that the length of $q_1q_2$ is unimodal between~$\tangent_1$ and $\tangent_2$. We can hence binary search in $O(\log h)$ time for the optimal placement of $q_1$ and $q_2$. By Lemmata~\ref{lem:sequence-construction} and~\ref{lem:shortest-mapping} we can compute the shortest representative segment of fixed orientation $\alpha$ in $O(h)$ time.

\subsection{Rotation}
\label{sec:rotation}
After finding the shortest line segment for a fixed orientation $\alpha$, as described in the previous section, we sweep through all orientations~$\alpha$ while maintaining $\tangent_1$, $\tangent_2$, $\convexchain_1$, $\convexchain_2$, and the shortest representative segment $q_1q_2$ of orientation $\alpha$. We allow all of these maintained structures to change continuously as the orientation changes, and store the shortest representative segment found. Any time a discontinuous change would happen, we trigger an \emph{event} to reflect these changes. We pre-compute and maintain a number of \emph{certificates} in an event queue, which indicate at which orientation the next event occurs. This way we can perform the continuous motion until the first certificate is violated, recompute the maintained structures, repair the event queue, and continue rotation.

We distinguish five types of events:
\begin{enumerate}
    \item $q_1$ or $q_2$ moves onto/off a vertex of $\convexchain_1$ or $\convexchain_2$;
    \item $\tangent_1$ or $\tangent_2$ is a bi-tangent with the next circle on the convex hull;
    \item $\tangent_1$ and $\tangent_2$ are the same line;
    \item $\tangent_1$ or $\tangent_2$ is tangent to $\convexchain_1$ or $\convexchain_2$ and rotates over a (prospective) vertex of $\convexchain_1$ or $\convexchain_2$;
    \item $\tangent_1$ or $\tangent_2$ is not tangent to $\convexchain_1$ or $\convexchain_2$ and rotates over a (prospective) vertex of $\convexchain_1$ or $\convexchain_2$.
\end{enumerate}

Since the shortest line segment~$q_1q_2$ in orientation~$\alpha$ is completely determined by $\tangent_1$, $\tangent_2$, $\convexchain_1$, and $\convexchain_2$, the above list forms a complete description of all possible events. Thus, we maintain at most two certificates for events of type 1 (one for each convex sequence) and 2 (one for each tangent), and a single type-3 certificate. Additionally, there must be exactly one type-4 or type-5 certificate for each endpoint of $\convexchain_1$ and $\convexchain_2$, so four in total. These are stored in a constant-size event queue~$Q$, ordered by appearance orientation. Insert, remove, and search operations on $Q$ can hence be performed in $O(1)$~time. 

We will describe below how all events over a full rotational sweep can be handled in $O(h\log^3 h)$ time in total. Combined with the computation of the convex hull of $P$ this yields the following theorem. Note that in the worst case $P$ is in convex position, and $n = h$.

\begin{restatable}{theorem}{staticFull}
\label{thm:static-full}
    Given a point set $P$ consisting of $n$ points and a radius $r$, we can find the shortest representative segment in $O(n\log h + h\log^3h)$ time, where $|\convexhull{P}| = h$. 
\end{restatable}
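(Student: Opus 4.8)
The plan is to assemble the ingredients developed above into a single rotating-calipers sweep, accounting separately for preprocessing, the initial fixed-orientation solution, and the rotational phase. First, compute $\convexhull{P}$ in $O(n\log h)$ time with an output-sensitive convex hull algorithm~\cite{chan1996convexhull}; by Lemma~\ref{lem:convex-hull} we may henceforth work only with the $h$ circles $\mathcal{C}_P$. A single sweep of rotating calipers over $\convexhull{P}$ finds, in $O(h)$ time, an orientation $\alpha$ at which the caliper width is at most $2r$, so a representative segment exists. For this $\alpha$ we build $\convexchain_1$ and $\convexchain_2$ in $O(h)$ time (Lemma~\ref{lem:sequence-construction}), check whether $R_1\cap R_2\neq\emptyset$ (in which case the optimum is a single point and we stop), and otherwise invoke Lemma~\ref{lem:shortest-mapping} together with the unimodality of $|q_1q_2|$ between $\tangent_1$ and $\tangent_2$ to binary search for the optimal endpoints in $O(\log h)$ time. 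We store the arcs of $\convexchain_1$ and $\convexchain_2$ in two balanced search trees ordered along each chain; since the tangent direction is monotone along a convex chain, these trees answer in $O(\log h)$ time exactly the queries the sweep needs: locating $q_1$ and $q_2$, finding the equal-tangent segment, and locating where $\tangent_1$ or $\tangent_2$ touches a chain. Preprocessing therefore costs $O(n\log h)$ overall.

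In the rotational phase we increase $\alpha$ over an interval of length $\pi$, continuously moving $\tangent_1$, $\tangent_2$, $\convexchain_1$, $\convexchain_2$ and $q_1q_2$ and triggering the five event types listed above whenever a combinatorial change is imminent. Since $q_1q_2$ is completely determined by those four objects and each of them changes combinatorially only through one of the five event types, the list is exhaustive, and the $O(1)$ certificates sit in the constant-size queue $Q$. Each event is cheap to process: a type-2 or type-3 event replaces $\tangent_1$ or $\tangent_2$ by the next bitangent and touches $O(1)$ chain elements; a type-4 or type-5 event inserts or deletes one (prospective) chain vertex; a type-1 event slides $q_1$ or $q_2$ across one vertex. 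In every case we perform $O(1)$ updates on the two search trees and recompute the $O(1)$ affected certificates, each of which is the first root of a bounded-degree system in $\alpha$ over a constant number of arcs, or one search-tree query; I budget $O(\log^2 h)$ per event, comfortably covering re-establishing the equal-tangent search as a future certificate.

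It remains to bound the number of events, and this is the step I expect to be the genuine obstacle. Type-2 and type-3 events number $O(h)$, since over a $\pi$-rotation each of $\tangent_1,\tangent_2$ passes each hull vertex a bounded number of times. The type-1, type-4 and type-5 events are, up to constant factors, in bijection with the combinatorial changes of the rotating envelopes $\convexchain_1,\convexchain_2$ and of the optimal-endpoint positions along them; here I would invoke the key combinatorial result of Efrat and Sharir~\cite{DBLP:journals/dcg/EfratS96}, which bounds the complexity of the relevant rotating arrangement/envelope structure and yields a near-linear bound, $O(h\log h)$, on the number of such changes over the whole sweep. Multiplying the $O(h\log h)$ events by the $O(\log^2 h)$ cost each gives $O(h\log^3 h)$ for the rotation; together with the $O(n\log h)$ preprocessing this is $O(n\log h + h\log^3 h)$, which equals $O(n\log^3 n)$ in the worst case $n=h$. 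The two delicate points are (i) phrasing the geometric events precisely as the combinatorial quantity that the Efrat--Sharir bound controls, so near-linearity genuinely applies, and (ii) making every certificate \emph{predictable}, i.e.\ expressible so that the next event of each type is found in polylogarithmic time; in particular one must verify that the unimodality of $|q_1q_2|$ used in the binary search persists throughout the sweep, and detect the degenerate case $R_1\cap R_2\neq\emptyset$ whenever it first arises.
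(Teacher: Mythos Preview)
Your outline tracks the paper's approach closely---convex hull, fixed-orientation initialization, and a rotating-calipers sweep with the five event types and a constant-size certificate queue---and your event-count bound of $O(h\log h)$ via Efrat--Sharir is the right order of magnitude (in the paper only type-5 events need that bound; types~1--4 are $O(h)$ each by direct arguments, but your coarser accounting still gives the same total). The genuine gap is the one you yourself flag as ``delicate point~(ii)'': making the type-5 certificate predictable in polylogarithmic time.

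Concretely, a type-5 event occurs when the endpoint $v_\tangent$ of $\convexchain_1$ on $\tangent_1$, currently sliding along some circle $C_j$, first hits a circle $C_k$ that is \emph{not yet on} $\convexchain_1$. Your ``balanced search trees ordered along each chain'' store only the arcs currently on $\convexchain_1$, so they cannot answer this question: the candidate $C_k$ ranges over all hull vertices $p_k$ with $i<k<j$ (Lemma~\ref{lem:between_ij}), a set of size up to $h$, and the next hit is not a ``bounded-degree system in $\alpha$ over a constant number of arcs''. The paper resolves this with a separate, preprocessed structure (Section~\ref{sec:ds}): a balanced binary tree over the hull vertices whose internal nodes store the boundary of the common intersection $\bigcap D_l$ of the corresponding disks, supporting circular ray-shooting queries in $O(\log^2 h)$ time (Lemma~\ref{lem:query-circle}). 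The same structure is needed after a type-3 event to re-seed $\convexchain_1,\convexchain_2$ via a line query (Lemma~\ref{lem:query-line}). Without this data structure---or an equivalent mechanism---your $O(\log^2 h)$ budget per event is unjustified, and the sweep degrades to $\Theta(h)$ per type-5 certificate, giving only $O(h^2\log h)$ overall. A second subtlety you do not mention is that a circular query may return a circle hit on its \emph{concave} semi-circle, which does not change $\convexchain_1$; the paper handles these ``internal'' events by re-querying on a restricted range (Lemma~\ref{lem:constrained-query-range}) and charges each one to a distinct conjugate pair to keep the total at $O(h\log h)$.
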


\subparagraph{Event handling.}

In the following descriptions, we assume that an event happens at orientation $\alpha$, and that $\varepsilon$ is chosen such that no other events occur between $\alpha - \varepsilon$ and $\alpha + \varepsilon$. We also assume that no two events happen simultaneously, which is a general position assumption. We describe, for each event type, the time complexity of computing a new certificate of that type, the time complexity of resolving the event, and the number of occurrences. 

\subparagraph{(1) $q_1$/$q_2$ moves onto/off of a vertex of $\convexchain_1$/$\convexchain_2$.}

We describe, without loss of generality, how to handle the event involving $q_1$ and $\convexchain_1$; the case for $q_2$ and $\convexchain_2$ is analogous. See Figure~\ref{fig:event-1} for an example of this event. First, observe that we can compute certificates of this type in $O(1)$ time, simply by walking over $\convexchain_1$ to find the next vertex/arc $q_1$ should move onto. 

\begin{observation}\label{obs:event1-cert}
    We can construct a new certificate of type 1 in $O(1)$ time.
\end{observation}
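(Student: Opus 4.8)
The plan is to show that, since the placement of $q_1$ on $\convexchain_1$ is governed by the tangent-matching condition of Lemma~\ref{lem:shortest-mapping} and varies continuously with the orientation $\alpha$, the next type-1 event for $q_1$ can only be its arrival at a vertex \emph{adjacent} to the arc it currently occupies. The construction then reduces to inspecting a constant number of neighboring features of $\convexchain_1$ together with a single closed-form orientation computation.

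First I would use that $q_1$ lies on a known arc $\circarc_k$ of $\convexchain_1$, and that $\convexchain_1$ is maintained so that the arcs and vertices immediately before and after $\circarc_k$ are accessible in $O(1)$ time. Because $q_1$ moves continuously along $\convexchain_1$ as $\alpha$ advances, it cannot skip a vertex, so the next type-1 event is $q_1$ reaching an endpoint of $\circarc_k$. To decide which endpoint is relevant, I would reuse the monotonicity from the proof of Lemma~\ref{lem:shortest-mapping}: the tangent slope changes monotonically along $\convexchain_1$ from $\tangent_1$ to $\tangent_2$, and on a circular arc the tangent direction is in bijection with position. Combined with the fixed sweep direction, this selects a single candidate vertex $v$ in $O(1)$ time.

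It then remains to compute the orientation $\alpha' > \alpha$ at which $q_1$ reaches $v$. The tangent direction of $\circarc_k$ at its endpoint $v$ is a fixed value, so by the matching condition $q_1$ reaches $v$ precisely when the common tangent direction of the two endpoints equals it; this determines the corresponding point of $\convexchain_2$, and $\alpha'$ is simply the orientation of the segment joining the two, which is an $O(1)$ computation. I expect the main obstacle to be this final step: one must confirm that anchoring the target tangent direction pins both endpoints down and yields $\alpha'$ in closed form, and one must treat with some care the case where $q_1$ sits at a vertex, whose admissible tangent directions form a range rather than a single value. This range subtlety only changes which neighbor is selected, so it does not affect the $O(1)$ bound.
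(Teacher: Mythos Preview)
Your core idea---walk to the adjacent vertex of $\convexchain_1$---is exactly what the paper does; its entire justification is the single sentence ``simply by walking over $\convexchain_1$ to find the next vertex/arc $q_1$ should move onto.'' What you have added is an explanation of how to recover the event orientation~$\alpha'$ via the tangent-matching condition of Lemma~\ref{lem:shortest-mapping}, which the paper leaves implicit.

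That elaboration has a gap you partly anticipated. Fixing the tangent direction at~$v$ determines a \emph{direction} for $q_2$'s radius vector, but locating the actual point on $\convexchain_2$ with that tangent is not $O(1)$ in general: $\convexchain_2$ has $\Theta(h)$ arcs. The fix is to use the arc $q_2$ currently occupies, centred at some~$p_m$; then equal tangents force $q_1-p_k=-(q_2-p_m)$, so the midpoint of $q_1q_2$ is $(p_k+p_m)/2$ and $\alpha'$ is simply the direction from that midpoint to~$v$, which is $O(1)$. The price is that the type-1 certificate for $q_1$ now depends on $q_2$'s current arc and must be refreshed whenever $q_2$ itself crosses a vertex of $\convexchain_2$; this costs only $O(1)$ per type-1 event and does not hurt the bound, but it should be stated. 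A smaller point: your monotonicity claim is slightly too strong---the paper's proof of Lemma~\ref{lem:event1} notes that $q_1$ can move backwards along $\convexchain_1$ while it lies on~$\tangent_1$, so in general you should check both neighbouring vertices rather than commit to one sweep direction.
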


Observe that, since vertices of $\convexchain_1$ cover a range of tangents, there are intervals of orientations at which $q_1$ remains at a vertex of $\convexchain_1$. As such, we describe two different cases for this event: $q_1$ moves \emph{onto} or \emph{off} a vertex of $\convexchain_1$.

If $q_1$ was moving over an arc of $\convexchain_1$ at $\alpha - \varepsilon$ and encounters a vertex at $\alpha$, then the movement path of $q_1$ is updated to remain on the encountered vertex. Additionally, we place a new type-1 certificate into the event queue that is violated when $q_1$ should move off the vertex, when the final orientation covered by the vertex is reached.

If $q_1$ is at a vertex at $\alpha - \varepsilon$ and orientation $\alpha$ is the final orientation covered by that vertex, then the movement path of $q_1$ must be updated to follow the next arc on $\convexchain_1$. Additionally, we place a new type-1 certificate into the event queue that is violated when $q_1$ encounters the next vertex, at the orientation at which this arc of $\convexchain_1$ ends. 

\begin{figure}
    \centering
    \includegraphics[page=5]{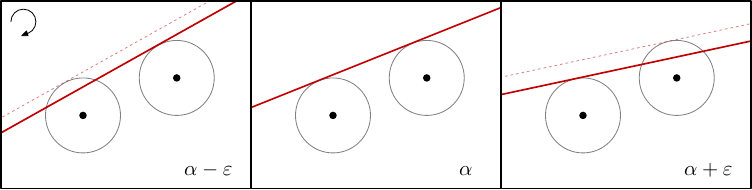}
    \caption{When $q_1$/$q_2$ is at a vertex of $\convexchain_1$/$\convexchain_2$, it stops moving.}
    \label{fig:event-1}
\end{figure}

\begin{restatable}{lemma}{eventOne}
\label{lem:event1}
    Throughout the full $\pi$ rotation, type-1 events happen at most $O(h)$ times, and we can resolve each occurrence of such an event in $O(1)$ time.
\end{restatable}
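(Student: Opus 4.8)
The plan is to bound the total number of type-1 events over the full $\pi$-rotation by tracking, separately for each convex sequence, how often its endpoints $q_1$ and $q_2$ cross vertices. I will argue that between two consecutive ``structural'' changes to $\convexchain_1$ (i.e.\ events of types 2, 4, or 5 that alter which arcs make up $\convexchain_1$), the point $q_1$ moves \emph{monotonically} along $\convexchain_1$. This monotonicity is exactly what Lemma~\ref{lem:shortest-mapping} and the unimodality remark give us: the optimal placement of $q_1$ is determined by matching tangent slopes on $\convexchain_1$ and $\convexchain_2$, and the slope of the tangent at $q_1$ as a function of orientation $\alpha$ is monotone as long as the underlying sequence does not change combinatorially. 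Hence during such an interval $q_1$ sweeps each vertex of $\convexchain_1$ at most twice (once moving onto it, once moving off it), and symmetrically for $q_2$ on $\convexchain_2$.

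Next I would count how many such monotone intervals there can be. Each endpoint of $\convexchain_1$ carries one type-4 or type-5 certificate, and the tangents $\tangent_1, \tangent_2$ together carry a constant number of certificates (types 2 and 3); each firing of one of these can only change $\convexchain_1$ by adding or removing arcs at its ends. The key combinatorial input is that the sequences $\convexchain_1$ and $\convexchain_2$ have total complexity $O(h)$ at any fixed orientation, and that over the whole rotation the number of distinct arcs that ever appear on them is $O(h)$ as well — this is where the combinatorial bound imported from~\cite{DBLP:journals/dcg/EfratS96} is used. Consequently the number of structural changes to the sequences, and hence the number of monotone intervals, is $O(h)$; multiplying by the $O(1)$ vertices crossed per interval-endpoint would naively overcount, so instead I would amortize: each vertex of $\convexchain_1$ is created once and destroyed once over the whole sweep, and $q_1$ can pass over a given vertex only a bounded number of times before that vertex is removed or the direction of motion reverses at a structural event. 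Charging each type-1 event either to the creation/destruction of the relevant vertex or to the structural event that reversed $q_1$'s direction yields the $O(h)$ bound.

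For the per-event cost, resolving a type-1 event only requires: updating the stored movement path of $q_1$ to either ``stay at the current vertex'' or ``follow the next arc'', reading off the orientation at which this new piece ends (the vertex's tangent range is $O(1)$ information stored with $\convexchain_1$), and inserting one replacement certificate into the constant-size queue $Q$. By Observation~\ref{obs:event1-cert} the new certificate is computed in $O(1)$ time, and insertion into $Q$ is $O(1)$ since $Q$ has constant size; so each occurrence is handled in $O(1)$ time.

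The main obstacle I anticipate is making the monotonicity-and-amortization argument fully rigorous: one must check that $q_1$'s motion is monotone on each maximal interval free of structural changes (not merely free of type-1 events), and that reversals of direction happen only $O(h)$ times in total. This requires relating the direction of $q_1$'s motion to the sign of the derivative of the matched-tangent condition from Lemma~\ref{lem:shortest-mapping}, and confirming that this sign changes only when $\convexchain_1$ or $\convexchain_2$ changes combinatorially or when the binary-search target (the common tangent slope) reverses — the latter being governed by the behavior of $\tangent_1, \tangent_2$, which contribute only $O(1)$ reversals each. Once that bookkeeping is in place, the amortized count is routine.
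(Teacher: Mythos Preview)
Your proposal takes a substantially different route from the paper, and it has a real gap.

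The paper's argument is much more direct. It observes that because $\convexchain_1$ is convex and the orientation $\alpha$ increases monotonically, the tangent slope at $q_1$ rotates monotonically, so $q_1$ itself moves monotonically along $\convexchain_1$ (in the sense of which convex-hull vertex's arc it sits on). The only exception is when $q_1$ coincides with the endpoint of $\convexchain_1$ on $\tangent_1$, where a brief backward step can occur; this happens at most once per convex-hull vertex. Hence each of the $h$ convex-hull vertices contributes $O(1)$ type-1 events, giving the $O(h)$ bound with no reference to structural events or to the Efrat--Sharir result.

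Your approach instead tries to partition the rotation into intervals between structural changes of $\convexchain_1$ and then amortize. The problem is your accounting of structural changes. You assert that ``over the whole rotation the number of distinct arcs that ever appear on them is $O(h)$'' and attribute this to the Efrat--Sharir bound, but that bound gives $O(h\log h)$ conjugate pairs, and in this paper it is used precisely to bound the number of type-5 events by $O(h\log h)$ (Lemma~\ref{lem:event5}), not $O(h)$. Since non-internal type-5 events \emph{are} structural changes to $\convexchain_1$, your claim of $O(h)$ structural changes (and hence $O(h)$ monotone intervals) is unsupported; with the bounds actually available you would only get $O(h\log h)$ type-1 events. Indeed, the paper's conclusion explicitly lists tightening type-5 events to $O(h)$ as an open improvement. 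Your amortization fallback (``each vertex created once and destroyed once'') also needs independent justification: a vertex of $\convexchain_1$ is a circle--circle intersection, and nothing you have said rules out the same intersection re-entering $\convexchain_1$ after a sequence of type-4/type-5 events.

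The $O(1)$ per-event cost part of your proposal is fine and matches the paper.
\hfill$\lhd$
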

\begin{proof}
    Since we sweep over orientations in order and $\convexchain_1$ is convex, $q_1$ moves monotonically over $\convexchain_1$ unless $q_1$ lies on $\tangent_1$. In that case, $q_1$ can temporarily move backwards over $\convexchain_1$ until it encounters the previous vertex, and then start moving monotonically forward again. This can only happen once per convex hull vertex. This means that each convex hull vertex $v$ induces at most a constant number of events (two for each vertex adjacent to the arc corresponding to $v$ on $\convexchain_1$, and two for moving backwards over a vertex of $v$ when $\tangent_1$ corresponds to $v$). As such, this event happens at most $O(h)$ times. 

    Assigning a new movement path to $q_1$ or $q_2$ can be done in $O(1)$ time. Constructing a new type-1 certificate takes $O(1)$ time by Observation~\ref{obs:event1-cert}. Inserting a new certificate into the constant-size event queue takes $O(1)$ time.
\end{proof}

\subparagraph{(2) $\tangent_1$ or $\tangent_2$ is bi-tangent with the next circle on the convex hull.}

We describe, without loss of generality, how to handle the event involving $\tangent_1$; handling $\tangent_2$ is analogous. See Figure~\ref{fig:event-2} for an illustration. First, observe that we can compute certificates of this type in $O(1)$ time, since these certificates depend only on the orientation of the next convex hull edge. 

\begin{observation}\label{obs:event2-cert}
    We can construct a new certificate of type 2 in $O(1)$ time.
\end{observation}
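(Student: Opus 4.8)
The plan is to show that a type-2 certificate is nothing more than the orientation at which the perpendicular to $\alpha$ aligns with the outward normal of the next convex hull edge, a quantity readable in constant time from the convex hull. First I would make explicit which circle supports $\tangent_1$ at a given orientation. Assuming $\alpha$ horizontal, $\tangent_1$ is the horizontal line $Y = \min_i y_i + r$, i.e.\ the top tangent of the lowest circle; since all circles in $\mathcal{C}_P$ have the same radius $r$, this lowest circle is exactly the one whose center is the convex hull vertex extremal in the direction perpendicular to $\alpha$. Thus $\tangent_1$ is supported by a single, well-defined convex hull vertex $v$ for every orientation strictly between two type-2 events, and symmetrically $\tangent_2$ is supported by the antipodal extremal vertex.

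Next I would identify exactly when this supporting vertex changes. As $\alpha$ rotates, the perpendicular direction rotates with it, and the vertex extremal in that direction changes precisely when the rotating direction crosses the outward normal of a convex hull edge incident to $v$. At that instant the perpendicular is normal to an edge $vv'$, so $\tangent_1$ is simultaneously tangent to the congruent circles centered at $v$ and $v'$ --- that is, $\tangent_1$ is a bi-tangent with the next circle on the hull, which is the defining condition of a type-2 event. Hence the orientation of the next type-2 event is determined solely by the outward normal of the hull edge leaving $v$ in the direction of rotation: the event fires when $\alpha$ reaches that normal orientation shifted by $\pi/2$.

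It remains to argue the $O(1)$ bound, which is now immediate. We maintain, alongside $\tangent_1$, a pointer to its supporting hull vertex $v$ (and likewise for $\tangent_2$); with the convex hull stored as an array or doubly-linked list in hull order, the next edge $vv'$ in rotation order is accessed in $O(1)$. Computing the orientation of its outward normal and applying the fixed $\pi/2$ offset to obtain the triggering orientation of $\alpha$ is a constant amount of arithmetic, and inserting the resulting certificate into the constant-size event queue takes $O(1)$ time as well. The only point requiring care --- and the main, if modest, obstacle --- is the bookkeeping of directions and the $\pi/2$ offset, together with checking that the supporting vertex is unambiguous between events; both follow directly from the congruence of the circles and the monotonicity of the sweep, so no further work is needed.
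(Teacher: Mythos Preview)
Your argument is correct and matches the paper's own justification, which is simply the one-line remark preceding the observation: ``these certificates depend only on the orientation of the next convex hull edge.'' You have just unpacked that sentence in more detail, explaining why the supporting vertex of $\tangent_1$ is the extremal hull vertex in the perpendicular direction and why it changes exactly when $\alpha$ becomes parallel to the next hull edge.
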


When $\tangent_1$ is a bi-tangent of two circles defined by their centers $u,v\in P$ then, by definition of $\tangent_1$, $u$~and $v$ must both be the extremal points in the direction $\theta$ perpendicular to~$\alpha$. Therefore, $(u,v)$ must be an edge on the convex hull. Suppose that, without loss of generality, $u$~was the previous extremal vertex in direction $\theta-\varepsilon$, then $v$~is extremal in direction $\theta+\varepsilon$. As such, $\tangent_1$ belongs to $u$ at $\alpha-\varepsilon$, and to $v$ at $\alpha+\varepsilon$. When this happens, we insert a new type-2 certificate into the event queue that is violated at the orientation of the next convex hull edge. Additionally, we must recompute the certificates of type 3, 4 and 5 that are currently in the event queue, since these are dependent on $\tangent_1$. 

\begin{figure}
    \centering
    \includegraphics[page=1]{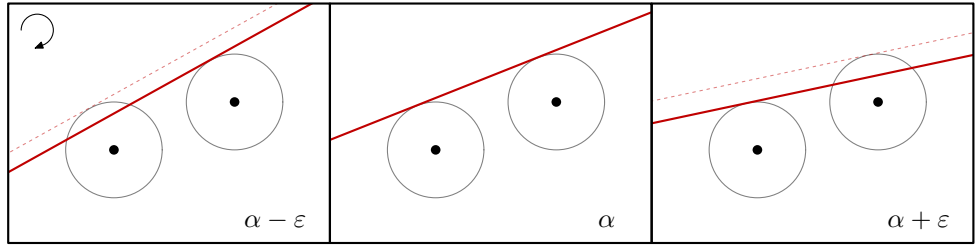}
    \caption{When the defining circle of $\tangent_1$/$\tangent_2$ changes, $\tangent_1$/$\tangent_2$ is parallel to a convex hull edge.}
    \label{fig:event-2}
\end{figure}

\begin{restatable}{lemma}{eventTwo}
\label{lem:event2}
    Throughout the full $\pi$ rotation, type-2 events happen at most $O(h)$ times, and we can resolve each occurrence of such an event in $O(\log^2 h)$ time.
\end{restatable}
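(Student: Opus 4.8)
The plan is to bound the number of type-2 events by a direct combinatorial argument, and then to analyze the cost of repairing the event queue. For the count: a type-2 event for $\tangent_1$ occurs exactly when the extremal point of $\mathcal{C}_P$ in the direction perpendicular to $\alpha$ changes, i.e.\ when $\tangent_1$ becomes parallel to a convex hull edge. As $\alpha$ sweeps through $\pi$, the perpendicular direction $\theta$ also sweeps through an interval of length $\pi$, and the extremal vertex changes exactly once for each convex hull edge whose outward normal lies in that interval; there are $O(h)$ such edges. The same holds for $\tangent_2$ with the opposite extremal direction. Hence type-2 events happen $O(h)$ times in total. This part is routine and mirrors the argument already used for type-1 events.

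The substantive part is showing each occurrence is resolved in $O(\log^2 h)$ time. First I would observe, as in Observation~\ref{obs:event2-cert}, that inserting the replacement type-2 certificate costs $O(1)$, since it depends only on the orientation of the next convex hull edge, which is available in $O(1)$ time from the (precomputed) convex hull. The cost therefore comes entirely from recomputing the type-3, type-4, and type-5 certificates, all of which depend on $\tangent_1$ (respectively $\tangent_2$). For each of these constantly many certificates I need to answer a query of the form: given the new defining circle of $\tangent_1$ (so the new line supporting $\tangent_1$, parametrized by $\alpha$), at which orientation does $\tangent_1$ next satisfy the relevant geometric condition — coincide with $\tangent_2$ (type 3), or rotate over a prospective vertex of $\convexchain_1$ or $\convexchain_2$ while tangent to it (type 4) or while not tangent to it (type 5)?

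The key point is that each such query reduces to searching, over the $O(h)$ arcs of the convex sequences, for the first arc-endpoint (prospective vertex) that $\tangent_1$ sweeps over, or to a comparison between two moving tangent lines. Because $\convexchain_1$ and $\convexchain_2$ are convex sequences of circular arcs stored in order (by Lemma~\ref{lem:circle-order}), and the orientation at which $\tangent_1$ passes over a given prospective vertex is a monotone-like function along the sequence, such a query can be answered by binary search. However, the arcs of $\convexchain_1$ and $\convexchain_2$ are themselves changing as $\alpha$ rotates, so I cannot use a static sorted array: I would maintain each convex sequence in a balanced search tree (or a kinetic tournament / interval structure) that supports, in $O(\log h)$ time per level of a nested search, the predecessor query ``first prospective vertex hit by the rotating line $\tangent_1$''. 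A single such query costs $O(\log h)$; but because the comparison performed at each node of the binary search is itself a small root-finding / tangency computation that in turn may require a nested $O(\log h)$ search into the other sequence (to locate where $\tangent_1$ is tangent to $\convexchain_1$ at the probed orientation), we get two nested levels of binary search, for a total of $O(\log^2 h)$ per certificate. Since only a constant number of certificates must be recomputed, the event is resolved in $O(\log^2 h)$ time.

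The main obstacle I anticipate is precisely this interaction between the two nested searches: making the predicate evaluated at each step of the outer binary search well-defined and monotone requires care, because the convex sequence over which we search is parametrized by an orientation that the search is itself trying to determine. I would handle this by phrasing the outer search over the \emph{combinatorial} structure (which arc is hit first), using for each probe the exact algebraic condition for $\tangent_1$ to pass over the corresponding prospective vertex — a condition that is monotone in the probed index for a fixed sweep direction — and only invoking the inner search to evaluate that condition when the probed arc's own defining data depends on an unknown tangency point on the other sequence. Verifying monotonicity of the outer predicate, and that the inner search indeed costs only $O(\log h)$, are the details that need the most attention; the $O(h)$ event count and the $O(1)$ certificate-insertion cost are straightforward.
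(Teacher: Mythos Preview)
Your $O(h)$ count is correct and matches the paper exactly: a type-2 event corresponds to $\tangent_1$ (or $\tangent_2$) becoming parallel to a convex hull edge, and each of the $O(h)$ edges is encountered once during a $\pi$ sweep.

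Your analysis of the $O(\log^2 h)$ resolution cost, however, misidentifies where the difficulty lies and proposes a mechanism that is not the paper's and is not clearly correct. In the paper, type-3 and type-4 certificates are recomputed in $O(1)$ time each (Observations~\ref{obs:event3-cert} and~\ref{obs:event4-cert}): a type-3 certificate is just the inner bitangent of the two current defining circles of $\tangent_1$ and $\tangent_2$, and a type-4 certificate is the orientation at which $\tangent_1$ reaches the intersection of its own defining circle $C_i$ with the current outer arc of $\convexchain_1$. No binary search, nested or otherwise, is needed for these. The entire $O(\log^2 h)$ cost comes from recomputing a \emph{single} type-5 certificate, and for that the paper invokes Lemma~\ref{lem:event5-cert}, which in turn relies on the dedicated data structure of Section~\ref{sec:ds}.

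That data structure is the missing idea in your proposal. The type-5 query is: $v_\tangent$ moves along circle $C_j$; among the circles $C_l$ with $i<l<j$ (points between the $\tangent_1$-defining vertex and the current outer-arc vertex on the convex hull), which one is hit first? Crucially, these $C_l$ need not appear on $\convexchain_1$ at all, so a search over the arcs of $\convexchain_1$ (as you suggest) does not answer the query. The paper instead stores, for canonical convex-hull subranges, the boundary of the common intersection $\bigcap D_l$ of the corresponding radius-$r$ disks, organized in a balanced tree over the convex hull vertices. A circular ray-shooting query then decomposes the range $[i,j-1]$ into $O(\log h)$ canonical nodes and does an $O(\log h)$ binary search in each associated structure, for $O(\log^2 h)$ total. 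Your ``nested binary search with a monotone predicate over $\convexchain_1$'' does not capture this: the monotonicity you appeal to is not established, the inner search you describe (locating a tangency point on the other sequence) is not what is actually needed, and the search is over the wrong set of objects.

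In short: the proof of Lemma~\ref{lem:event2} in the paper is essentially a two-line citation of the $O(1)$ observations plus Lemma~\ref{lem:event5-cert}; the real work is deferred to the ray-shooting data structure. Your proposal should do the same rather than re-derive an ad hoc search whose correctness is unclear.
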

\begin{proof}
    Since there are $O(h)$ convex hull edges, and this event happens at most once per convex hull edge, this event occurs at most $O(h)$ times. By Observations~\ref{obs:event2-cert},~\ref{obs:event3-cert} and~\ref{obs:event4-cert} and Lemma~\ref{lem:event5-cert}, we can construct a new type-2, type-3, type-4 and type-5 certificates in $O(1)$ and $O(\log^2 h)$ time. Deletions from and insertions into the event queue take $O(1)$ time. 
\end{proof}

\subparagraph{(3) $\tangent_1$ and $\tangent_2$ are the same line.}

When this event takes place, $\tangent_1$ and $\tangent_2$ are the inner bi-tangents of their two respective defining circles. See Figure~\ref{fig:event-3} for an example. First, observe that we can compute certificates of this type in $O(1)$ time by simply finding the inner bi-tangent of the circles corresponding to $\tangent_1$ and $\tangent_2$. 

\begin{observation}\label{obs:event3-cert}
    We can construct a new certificate of type 3 in $O(1)$ time. 
\end{observation}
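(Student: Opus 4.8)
The plan is to exhibit a closed-form orientation at which a type-3 event occurs for the two currently relevant circles, so that constructing the certificate requires no search whatsoever. Recall that $\tangent_1$ and $\tangent_2$ are the two supporting lines parallel to $\alpha$ that bound the strip crossing all circles: $\tangent_1$ is tangent from above to the circle whose center is extremal in the downward perpendicular direction $\theta=\alpha+\pi/2$, while $\tangent_2$ is tangent from below to the circle extremal in the upward direction. The centers $u$ and $v$ of these two circles are exactly the extremal vertices that the sweep already maintains together with $\tangent_1$ and $\tangent_2$, and both circles have the known common radius $r$; hence the data needed for the certificate is available in $O(1)$ time, and it remains only to turn it into an event orientation by a constant-time formula.

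First I would characterise the event geometrically. The strip between $\tangent_1$ and $\tangent_2$ collapses to a single line precisely when the separation of $u$ and $v$ in the perpendicular direction $\theta$ equals $2r$; at that instant the common line is tangent to both circles while passing between them, i.e.\ it is an \emph{inner bi-tangent} of the two equal-radius circles. Writing $d=\|u-v\|$ and letting $\theta_{uv}$ denote the orientation of the segment $uv$, the condition ``perpendicular separation equals $2r$'' reads $d\cos(\theta_{uv}-\theta)=2r$, which admits a real solution exactly when $d\ge 2r$, namely $\theta=\theta_{uv}\pm\arccos(2r/d)$. Translating back through $\alpha=\theta-\pi/2$ gives the two candidate event orientations $\alpha^\ast=\theta_{uv}\pm\arcsin(2r/d)$, which are exactly the orientations parallel to the two inner bi-tangents of the pair of circles. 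Each of these is a constant number of arithmetic and trigonometric operations on the $O(1)$ stored quantities $u$, $v$, and $r$.

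It then remains to select the correct candidate and bound the work. Among the (at most two) candidate orientations I would take the first one encountered strictly after the current $\alpha$ in the direction of rotation; if $d<2r$ the two circles overlap, no inner bi-tangent exists, and consequently no type-3 certificate is created. Computing $d$, $\theta_{uv}$, the $\arcsin$, and comparing the two candidates against $\alpha$ are all constant-time steps, so a new type-3 certificate is produced in $O(1)$ time. The only point I would treat carefully---and the closest thing to an obstacle---is the bookkeeping of which of the two bi-tangent orientations is reached next, together with the degenerate case $d=2r$ where the two candidates coincide and the event is the moment the strip first closes; once the inner-bi-tangent characterisation above is in place, however, this reduces to a routine constant-time comparison rather than a genuine difficulty.
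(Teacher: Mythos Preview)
Your argument is correct and is essentially the same as the paper's: the paper justifies this observation in one line by noting that the type-3 event occurs exactly when $\tangent_1$ and $\tangent_2$ form an inner bi-tangent of their two defining circles, which can be computed in constant time from the two stored centers and $r$. Your write-up simply spells out the bi-tangent computation with explicit trigonometric formulas and handles the selection of the next candidate orientation, but the underlying idea is identical.
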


We distinguish two different cases for this event: either there is a solution at $\alpha-\varepsilon$ and no solution at $\alpha+\varepsilon$, or vice versa. 

If there was a solution at $\alpha-\varepsilon$ and there is none at $\alpha+\varepsilon$, we simply stop maintaining $q_1q_2$, $\convexchain_1$ and $\convexchain_2$ until there exists a solution again. As such, we remove all type-1, type-5 and type-4 certificates from the event queue and place a new type-3 certificate into the event queue that is violated at the next orientation where $\tangent_1$ and $\tangent_2$ are the same line.

If there was no solution at $\alpha-\varepsilon$ and there is a solution at $\alpha+\varepsilon$, we must recompute $\convexchain_1$, $\convexchain_2$, and $q_1q_2$ at orientation~$\alpha$. At orientation~$\alpha$, $\convexchain_1$ and $\convexchain_2$ are single vertices where $\tangent_1$ and $\tangent_2$ intersect the extremal half-circles of the arrangement. Then, $q_1q_2$ is the line segment between these single vertices of $\convexchain_1$ and $\convexchain_2$. We place new type-1, type-4 and type-5 certificates into the event queue reflecting the newly found $\convexchain_1$, $\convexchain_2$, $q_1$ and $q_2$. Additionally, we insert a new type-3 certificate that is violated at the next orientation where $\tangent_1$ and $\tangent_2$ are the same line.

\begin{figure}
    \centering
    \includegraphics[page=2]{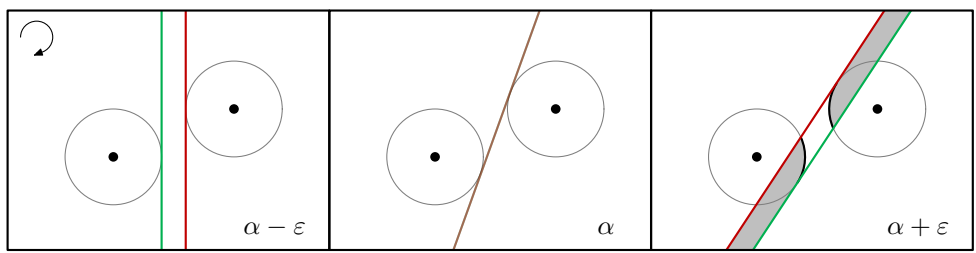}
    \caption{When $\tangent_1$ and $\tangent_2$ are the same line, they are an inner bi-tangent of their two defining~circles.}
    \label{fig:event-3}
\end{figure}

\begin{restatable}{lemma}{eventThree}
\label{lem:event3}
    Throughout the full $\pi$ rotation, type-3 events happen at most $O(h)$ times, and we can resolve each occurrence of such an event in $O(\log^2 h)$ time.
\end{restatable}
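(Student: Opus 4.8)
The plan is to prove the two parts of the statement separately, mirroring the proofs of Lemmata~\ref{lem:event1} and~\ref{lem:event2}: first the $O(h)$ bound on the number of type-3 events, then the $O(\log^2 h)$ bound on the cost of resolving one.

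For the count, I would reduce a type-3 event to a statement about the width of $\convexhull{P}$. A line perpendicular to the caliper direction crosses all circles of $\mathcal{C}_P$ exactly when the extent of $\convexhull{P}$ along that direction is at most $2r$, so $\tangent_1$ and $\tangent_2$ coincide precisely at the orientations $\alpha$ for which the width of $\convexhull{P}$ perpendicular to $\alpha$ equals $2r$, and a type-3 event occurs only when this width actually switches between being below and above $2r$ (a tangential contact at value $2r$ is ruled out by general position). It therefore suffices to bound how often the width function $w(\theta)$ meets the level $2r$ transversally as $\theta$ ranges over an interval of length $\pi$. By the rotating-calipers analysis, $w$ decomposes into $O(h)$ maximal smooth pieces over such an interval, one per antipodal vertex pair; on each piece the extreme projections of $\convexhull{P}$ are realised by two fixed hull vertices $u$ and $v$, so there $w(\theta) = \pm\langle u-v,(\cos\theta,\sin\theta)\rangle$, a single sinusoid. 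A sinusoid restricted to an interval of length at most $\pi$ is unimodal, hence attains the level $2r$ at most twice; summing over the $O(h)$ pieces gives $O(h)$ crossings, so type-3 events occur $O(h)$ times.

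For the resolution cost I would treat the two cases from the event description. If the solution disappears at $\alpha$, we merely delete the constantly many type-1, type-4, and type-5 certificates and insert one new type-3 certificate, which is $O(1)$ by Observation~\ref{obs:event3-cert} together with the $O(1)$ operations on the constant-size queue~$Q$. If the solution reappears at $\alpha$, then $\convexchain_1$ and $\convexchain_2$ each degenerate to a single vertex --- the point where the coinciding line $\tangent_1=\tangent_2$ touches the circle defining $\tangent_1$, respectively $\tangent_2$ --- and these circles are exactly the ones currently tracked by the type-2 certificates, so both vertices and the segment $q_1q_2$ between them are found in $O(1)$ time. We then insert fresh certificates of types 1, 3, 4, and 5: types 1 and 3 cost $O(1)$ by Observations~\ref{obs:event1-cert} and~\ref{obs:event3-cert}, while types 4 and 5 cost $O(\log^2 h)$ each by Observation~\ref{obs:event4-cert} and Lemma~\ref{lem:event5-cert}. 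Hence every case is handled in $O(\log^2 h)$ time.

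I expect the counting argument to be the delicate part: one must argue carefully that each smooth piece of the width function spans at most half a turn, so that the ``unimodal, hence at most two level crossings'' reasoning applies, and that general position removes the degenerate case where $w$ touches but does not cross $2r$. The resolution bound is comparatively routine --- it simply inherits the $O(\log^2 h)$ term from the type-4 and type-5 certificate constructions --- the only point worth checking being that rebuilding the degenerate $\convexchain_1$, $\convexchain_2$, and $q_1q_2$ at a type-3 event really costs only $O(1)$, which holds because at such an event everything is pinned down by the two bi-tangent circles.
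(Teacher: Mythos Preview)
Your counting argument via the width function is correct and in fact a bit more explicit than the paper's. The paper simply observes that at a type-3 event the line $\tangent_1=\tangent_2$ is an inner bi-tangent of the two circles currently defining $\tangent_1$ and $\tangent_2$; since that pair of circles changes only $O(h)$ times over the sweep (these are the antipodal pairs, updated by type-2 events) and each fixed pair admits two inner bi-tangents, there are $O(h)$ type-3 events. Your decomposition of $w(\theta)$ into $O(h)$ unimodal sinusoidal pieces reaches the same bound by a different route.

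The resolution argument, however, has a genuine gap. When a solution reappears, the single vertex to which $\convexchain_1$ (respectively $\convexchain_2$) degenerates is \emph{not} in general the tangency point of the line $\tangent_1=\tangent_2$ with the circle defining $\tangent_1$ (respectively $\tangent_2$). It is the extremal intersection of that line with the relevant half-circle envelope, and this can be contributed by a circle other than the two defining the tangents. For a concrete instance with $r=1$, take $p_a=(0,1)$, $p_b=(0,-1)$, $p_c=(10,0)$. At the horizontal orientation we have $\tangent_1=\tangent_2=\{y=0\}$, defined by $C_b$ and $C_a$ respectively, and both tangency points are $(0,0)$; yet the degenerate vertex of one chain is $(9,0)$, contributed by the left half-circle of $C_c$. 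Identifying the correct starting arcs is exactly what the paper's line query (Lemma~\ref{lem:query-line}) accomplishes in $O(\log^2 h)$ time; without it the type-4/type-5 certificates you build next would be attached to the wrong arc and the sweep would be incorrect. Your overall $O(\log^2 h)$ bound survives because the line query also fits within that budget, but the step ``rebuild in $O(1)$ because everything is pinned down by the two bi-tangent circles'' is false as stated.
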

\begin{proof}
    Observe that $\tangent_1$ and $\tangent_2$ form an inner bi-tangent of their respective circles when they are the same line (see Figure~\ref{fig:event-3}), and each pair of circles has two such inner bi-tangents. Hence, throughout the the full $\pi$ rotation $\tangent_1$ and $\tangent_2$ can be the same line only $O(h)$ times.

    If there was a solution at $\alpha - \varepsilon$, we compute a new type-3 certificate, which can be done in $O(1)$ by Observation~\ref{obs:event3-cert}. Then, we remove and insert certificates to/from a constant size event queue, which can be done in $O(1)$ time. 

    If there was no solution in $\alpha-\varepsilon$, then, symmetrically, there must be a solution in $\alpha+\varepsilon$. We can find the two arcs that contain $q_1$ and $q_2$ by finding the outermost the intersections between $\tangent_1$ (or $\tangent_2$) and circles in $\mathcal{C}_P$. To do this efficiently, we require an additional data structure, which is described later in Section~\ref{sec:ds}. By Lemma~\ref{lem:query-line}, we can query this data structure in $O(\log^2 h)$ time to obtain starting vertices for $\convexchain_1$ and $\convexchain_2$. 
    
    Next, we recompute the type-1 and type-3 certificates, which by Observations~\ref{obs:event1-cert} and~\ref{obs:event3-cert} can be done in $O(1)$ time. Lastly, we compute a constant number of new type-4 or type-5 certificates, depending on whether the circle corresponding to $\tangent_1$ or $\tangent_2$ is the outer arc of $\convexchain_1$ or $\convexchain_2$. By Observation~\ref{obs:event4-cert} and Lemma~\ref{lem:event5-cert}, this can be done in $O(1)$ and $O(\log^2 h)$ time. Adding these new certificates into the constant size event queue takes $O(1)$ time. 
\end{proof}

\subparagraph{(4) $\tangent_1$ or $\tangent_2$ is tangent to $\convexchain_1$ or $\convexchain_2$ and rotates over a (prospective) vertex of $\convexchain_1$ or $\convexchain_2$.}
We describe, without loss of generality, how to handle the event involving $\tangent_1$ and $\convexchain_1$; the case for $\tangent_2$ and $\convexchain_2$ is analogous. See Figure~\ref{fig:event-4} for an example of this event. First, observe that we can compute certificates of this type in $O(1)$ time: Let $C_i$ be the circle to which $\tangent_1$ is tangent. Then, to construct a certificate, we find the orientation at which $\tangent_1$ hits the intersection point of $C_i$ and $\convexchain_1$. Note that this intersection is part of~$\convexchain_1$ or will appear at~$\tau_1$.

\begin{observation}\label{obs:event4-cert}
    We can construct a new certificate of type 4 in $O(1)$ time.
\end{observation}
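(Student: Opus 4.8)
The plan is to show that the orientation at which the next type-4 event fires is pinned down by a single circle together with a single (prospective) vertex of the convex sequence, both of which are available in $O(1)$ time from the maintained structures, and that the predicted orientation is then a closed-form expression.

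First I would fix the geometry of the event. By definition of a type-4 event, $\tangent_1$ is tangent to $\convexchain_1$, so the arc of $\convexchain_1$ incident to $\tangent_1$ is carried by the very circle $C_i$ to which $\tangent_1$ is tangent; let $p_i$ be its center. As the orientation $\alpha$ increases, $\tangent_1$ remains tangent to $C_i$ (until a type-2 event changes its defining circle), and its point of contact slides along $C_i$. The event fires exactly when this contact point reaches the end of $C_i$'s arc on $\convexchain_1$, namely the vertex $v$ where $C_i$'s arc meets the arc of the neighbouring circle $C_j$ along $\convexchain_1$. This matches the note preceding the observation: $v$ is the intersection of $C_i$ and $\convexchain_1$, and is either already a realized vertex of $\convexchain_1$ or a prospective one that appears at $\tangent_1$.

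The key observation is that $v$ lies on $C_i$: since $v$ is an intersection point of the two circles $C_i$ and $C_j$, we have $|v - p_i| = r$. Consequently, the line through $v$ in orientation $\alpha$ is tangent to $C_i$ precisely when $\alpha$ equals the orientation of the tangent to $C_i$ at its boundary point $v$, i.e.\ the direction of $v - p_i$ turned by $\pm\pi/2$, with the sign fixed by the side of the strip from which $\tangent_1$ touches $C_i$. This yields the certificate orientation $\alpha^\ast$ in closed form. The construction is then the following $O(1)$ procedure: (i) read off $C_i$ and $p_i$ from the maintained tangent $\tangent_1$; (ii) obtain $v$ in constant time, either directly as the top endpoint of the maintained sequence $\convexchain_1$, or, in the prospective case, by intersecting the two known circles $C_i$ and $C_j$ and selecting the relevant intersection point; and (iii) evaluate $\alpha^\ast$. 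Each step is constant time, so the certificate is built in $O(1)$ time.

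I expect the only delicate points to be bookkeeping rather than genuine obstacles: selecting the correct one of the (up to two) intersections of $C_i$ and $C_j$, choosing the sign of the $\pm\pi/2$ term consistently with the side of the strip, and handling the prospective-vertex case so that the returned $\alpha^\ast$ is the first such orientation strictly ahead of the current one. All of these are constant-time case distinctions and do not affect the $O(1)$ bound.
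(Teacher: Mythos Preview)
Your proposal is correct and follows essentially the same approach as the paper. The paper's justification, given just before the observation, is simply: let $C_i$ be the circle to which $\tangent_1$ is tangent, and find the orientation at which $\tangent_1$ hits the intersection point of $C_i$ with $\convexchain_1$ (noting that this intersection is either already a vertex of $\convexchain_1$ or a prospective one appearing at $\tangent_1$). You spell out the same idea in more detail---in particular making explicit that the sought orientation is perpendicular to $v-p_i$---but the underlying argument is identical.
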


Let vertex $v$ be the vertex of the convex chain $\convexchain_1$ that is intersected by $\tangent_1$ at orientation~$\alpha$. Then either vertex $v$ is a vertex of $\convexchain_1$ at orientation $\alpha - \varepsilon$ but not at $\alpha + \varepsilon$, or vice~versa.

In the prior case, at orientation $\alpha$ the arc to which $\tangent_1$ is a tangent is completely removed from $\convexchain_1$. Vertex $v$ becomes the endpoint of $\convexchain_1$ and starts moving along the next arc of $\convexchain_1$. If the affected arc or vertex appeared in a type-1 certificate in the event queue, it is updated to reflect the removal of the arc and the new movement of the vertex. Additionally, we place a new type-5 certificate into the event queue.

In the latter case, at orientation $\alpha$ an arc of the incident circle to $\tangent_1$ needs to be added to $\convexchain_1$. If the arc that was previously the outer arc of $\convexchain_1$ appeared in a type-1 certificate in the event queue, it may need to be updated to reflect the addition of the new arc. Additionally, we place a new type-4 certificate into the event queue.

\begin{figure}
    \centering
    \includegraphics[page=4]{figures/events.pdf}
    \caption{When $\tangent_1$/$\tangent_2$ hits an intersection of its defining circle that is also on $\convexchain_1$/$\convexchain_2$, an arc is removed from $\convexchain_1$/$\convexchain_2$.}
    \label{fig:event-4}
\end{figure}

\begin{restatable}{lemma}{eventFour}
\label{lem:event4}
    Throughout the full $\pi$ rotation, type-4 events happen at most $O(h)$ times, and we can resolve each occurrence of such an event in $O(\log^2 h)$ time.
\end{restatable}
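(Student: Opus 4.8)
The plan is to follow the template of the proofs of Lemmata~\ref{lem:event2} and~\ref{lem:event3}: first bound the number of type-4 events over the full $\pi$ rotation, then bound the work needed to resolve a single such event. Throughout, recall from the event description that a type-4 event at, say, the $\tangent_1$-end of $\convexchain_1$ is exactly the moment at which the arc of $C_i$ — the circle to which $\tangent_1$ is tangent — is added to, or removed from, $\convexchain_1$.

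For the count, observe that $C_i$ is the vertex of $\convexhull{P}$ extremal in the direction perpendicular to $\alpha$, so its identity changes only at type-2 events, of which there are $O(h)$ by Lemma~\ref{lem:event2}. It therefore suffices to show that, while $C_i$ is fixed, its arc is inserted into and deleted from the $\tangent_1$-end of $\convexchain_1$ only $O(1)$ times. I would argue this from the fact that, over the interval of orientations in which $C_i$ defines $\tangent_1$, the contact point of $\tangent_1$ on $C_i$ moves monotonically, so $\tangent_1$ sweeps monotonically across $C_i$; combined with Lemma~\ref{lem:circle-order} (arcs appear along $\convexchain_1$ in convex-hull order), this limits the number of toggles to a constant. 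If the low-level geometry turns out not to be this clean, the fallback — and the place where the combinatorial result of Efrat and Sharir~\cite{DBLP:journals/dcg/EfratS96} enters — is their $O(h)$ bound on the number of combinatorial changes of the extremal (upper/lower envelope) structure of $h$ congruent circles as the reference direction rotates; this directly caps the total number of arc insertions and deletions on $\convexchain_1$ and $\convexchain_2$, hence the number of type-4 (and type-5) events, at $O(h)$. The symmetric cases (the $\tangent_2$-end, and $\convexchain_2$) only add constant factors.

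To resolve one type-4 event: in the removal case, the vertex $v$ at which $C_i$'s arc met the next arc becomes the new endpoint of $\convexchain_1$ and begins sliding along that (already identified) next arc; we update the at most constant number of affected type-1 certificates in $O(1)$ time each by Observation~\ref{obs:event1-cert}, and install a fresh type-5 certificate, which costs $O(\log^2 h)$ by Lemma~\ref{lem:event5-cert} — plus, should recomputing a chain endpoint be needed, a query to the data structure of Section~\ref{sec:ds}, also $O(\log^2 h)$ by Lemma~\ref{lem:query-line}. In the addition case we symmetrically update the affected type-1 certificate and install a fresh type-4 certificate, which by Observation~\ref{obs:event4-cert} costs only $O(1)$. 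Every operation on the constant-size event queue $Q$ is $O(1)$, so each type-4 event is resolved in $O(\log^2 h)$ time. The main obstacle is the counting step: ruling out a single arc oscillating on and off an end of a convex sequence $\Theta(h)$ times during the rotation, i.e.\ establishing the amortized $O(1)$ toggles per defining circle — the monotone motion of the tangency point should suffice, with the Efrat--Sharir envelope bound as the robust safety net.
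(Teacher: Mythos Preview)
Your primary counting argument is essentially the paper's: the defining circle $C_i$ for $\tangent_1$ changes only at type-2 events (once per hull vertex over a $\pi$ sweep), and while $C_i$ is fixed the tangency point moves monotonically along $C_i$, so the arc of $C_i$ can be added to / removed from the $\tangent_1$-end of $\convexchain_1$ only $O(1)$ times. The paper phrases this as ``the rotational movement direction of the intersection between $\convexchain_1$ and $\tangent_1$ reverses only with events of type 4, or when the convex hull vertex corresponding to $\tangent_1$ changes,'' and concludes one type-4 event per hull vertex. Your resolution bookkeeping (update a type-1 certificate in $O(1)$, then install a new type-4 certificate in $O(1)$ or a new type-5 certificate in $O(\log^2 h)$ via Lemma~\ref{lem:event5-cert}) also matches the paper; the extra line-query you mention is not needed here but harmless.

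The one genuine problem is your fallback. The Efrat--Sharir result the paper invokes is an $O(h\log h)$ bound on the number of conjugate pairs, not an $O(h)$ bound on envelope changes; in the paper it is used only to cap the number of type-5 events at $O(h\log h)$ (Lemma~\ref{lem:event5}). It does \emph{not} yield an $O(h)$ bound on type-4 events, so it cannot serve as a safety net for the present lemma. Fortunately your monotone-motion argument already suffices, so you can simply drop the fallback.
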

\begin{proof}
    Since the convex hull vertex corresponding to $\tangent_1$ changes in order of the convex hull, over a full rotational sweep $\tangent_1$ can correspond to each convex hull vertex only once. Additionally, the rotational movement direction of the intersection between $S_1$ and $\tangent_1$ reverses only with events of type 4, or when the convex hull vertex corresponding to $\tangent_1$ changes. As such, events of type 4 can only happen once per convex hull vertex, of which there are $O(h)$.

    To handle a type-4 event, we must add/remove the circular arc corresponding to $\tangent_1$ to $S_1$, which we can do in $O(1)$. Additionally, we compute a new type-1 certificate, as well as a new type-4 or type-5 certificate, which can be done in $O(1)$ and $O(\log^2h)$ time by Observations~\ref{obs:event1-cert} and~\ref{obs:event4-cert} and Lemma~\ref{lem:event5-cert}, respectively. Then, we remove the old type-1 certificate from the event queue, and add the new certificates, all in $O(1)$ time.
\end{proof}

\subparagraph{(5) $\tangent_1$ or $\tangent_2$ is not tangent to $\convexchain_1$ or $\convexchain_2$ and rotates over a (prospective) vertex of $\convexchain_1$ or $\convexchain_2$.}
We describe, without loss of generality, how to handle the event involving $\tangent_1$ and $\convexchain_1$; the case for $\tangent_2$ and $\convexchain_2$ is analogous. See Figure~\ref{fig:event-5} for an example of this event. The time complexity of constructing a certificate of this type is stated in the following lemma.

\begin{restatable}{lemma}{eventFivecert}
\label{lem:event5-cert}
    We can construct a new certificate of type 5 in $O(\log^2h)$ time.
\end{restatable}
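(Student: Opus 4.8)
The plan is to reduce the computation of a type-5 certificate to a constant number of queries on the data structure developed in Section~\ref{sec:ds}. We describe the case of $\tangent_1$ and $\convexchain_1$; the three remaining combinations ($\tangent_1$ with $\convexchain_2$, and $\tangent_2$ with either chain) are obtained by reflecting the instance. The certificate must report the smallest orientation $\alpha' > \alpha$ at which $\tangent_1$ --- which at $\alpha$ crosses $\convexchain_1$ transversally rather than touching it --- passes through a vertex of $\convexchain_1$, where ``vertex'' includes a \emph{prospective} vertex, i.e.\ one that only comes into existence as the strip bounded by $\tangent_1$ and $\tangent_2$, together with its right envelope $\convexchain_1$, evolves with $\alpha$.

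First we pin down where $\tangent_1$ currently meets $\convexchain_1$: since $\tangent_1$ is not tangent to $\convexchain_1$, this intersection lies in the relative interior of a single arc $\circarc_j$ with supporting left half-circle $C_j$, and it is the top endpoint of $\convexchain_1$. As noted in the proof of Lemma~\ref{lem:event4}, between consecutive events of types~2 and~4 the intersection of $\tangent_1$ with $\convexchain_1$ moves monotonically along $\convexchain_1$, so the direction of travel of this point is fixed and known, and there is a unique ``next'' vertex it can reach. Such a vertex can arise in only two ways: (i)~the point slides down to the endpoint of $\circarc_j$ in the direction of travel, where it meets the fixed intersection point of $C_j$ with the adjacent arc of $\convexchain_1$; or (ii)~some left half-circle $C_k$ that is not currently the top arc of $\convexchain_1$ overtakes $C_j$ at the height of $\tangent_1$, which --- by the same domination argument used in Lemma~\ref{lem:sequence-construction} --- can happen only on the line $\tangent_1$ itself and there creates the prospective vertex $C_j \cap C_k$. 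In both cases the incident circle is determined by the outermost intersection of $\tangent_1$ with $\mathcal{C}_P$ together with its neighbour in that outermost order, which is exactly the primitive supported by the data structure of Section~\ref{sec:ds}; by Lemma~\ref{lem:query-line} such a query costs $O(\log^2 h)$, and we only need $O(1)$ of them.

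Each such query returns a candidate vertex $x$, namely an intersection point of two circles of $\mathcal{C}_P$, and hence a point fixed in the plane. Throughout the interval under consideration $\tangent_1$ remains tangent to a single fixed circle $C_v$ --- the circle defining $\tangent_1$, which changes only at type-2 events --- so the orientations at which $\tangent_1$ passes through $x$ are precisely those of the at most two lines through $x$ tangent to $C_v$, computable in $O(1)$. Taking, over the $O(1)$ candidate vertices, the smallest resulting orientation that exceeds $\alpha$ gives the type-5 certificate; the total cost is $O(\log^2 h)$, and inserting it into the constant-size event queue takes $O(1)$ time.

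The step we expect to be the main obstacle is case~(ii): we must guarantee that the $O(1)$ candidates returned by the data structure genuinely contain the first vertex over which $\tangent_1$ sweeps, even though $\convexchain_1$ is itself changing during the sweep because $\tangent_2$ and the set of circles crossing the strip move simultaneously. The argument we would give is that an arc can enter $\convexchain_1$ only at $\tangent_1$ or at $\tangent_2$ (as in Lemma~\ref{lem:sequence-construction}), so any vertex that becomes relevant before $\tangent_1$ reaches the lower endpoint of $\circarc_j$ is already witnessed by the outermost-intersection query at $\tangent_1$, while the monotonicity recalled above forbids $\tangent_1$ from doubling back onto a vertex it has already passed. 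Turning this into a clean invariant, and fixing the precise query primitive together with its $O(\log^2 h)$ cost, is what Section~\ref{sec:ds} and Lemma~\ref{lem:query-line} are there to supply.
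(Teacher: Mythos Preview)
Your argument has a genuine gap in case~(ii). You invoke the \emph{line} query (Lemma~\ref{lem:query-line}) with the current tangent $\tangent_1$, but that primitive is designed for the type-3 restart (placing the hippodrome caps along a fixed line $\ell$); it does not solve the problem posed by a type-5 certificate. At orientation $\alpha$ the ``outermost intersection of $\tangent_1$ with $\mathcal{C}_P$'' is simply the point $v_\tangent$ on $C_j$ that you already know, and the ``neighbour in that outermost order'' --- the second-outermost circle along the current $\tangent_1$ --- need not be the circle $C_k$ that is first overtaken. Which circle becomes outermost next depends on how all the intersections move as $\tangent_1$ rotates, and a snapshot at the present $\tangent_1$ cannot determine that; a circle currently third-outermost may reach $C_j$ before the currently second-outermost one does.

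The paper sidesteps this entirely with a different geometric observation: between events, $v_\tangent$ travels along the \emph{fixed} circle $C_j$, so the next prospective vertex is the first point at which this circular path exits $\bigcap_{i<l<j} D_l$. This is a circular ray-shooting query (Lemma~\ref{lem:query-circle}), not a line query, and the data structure answers it in $O(\log^2 h)$ time. The restriction of the index range to $(i,j)$ is justified separately by Lemma~\ref{lem:between_ij}, and after an internal event the range shrinks further to $(k,j)$ by Lemma~\ref{lem:constrained-query-range}. Once $C_k$ is found, the orientation of the certificate is obtained, as you say, by drawing the tangent of $C_i$ through the fixed point $C_j\cap C_k$. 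Your shrinking case~(i) matches the paper's easy case; the substantive content of the lemma is precisely the circular query for the growing case, which your proposal replaces with the wrong primitive.
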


Additionally, we get the following bounds on handling type-5 events.

\begin{restatable}{lemma}{eventFive}
\label{lem:event5}
    Throughout the full $\pi$ rotation, $\tangent_1$ or $\tangent_2$ hits a vertex of $\convexchain_1$ or $\convexchain_2$ at most $O(h \log h)$ times, and we handle each occurrence of this event in $O(\log^2h)$ time.
\end{restatable}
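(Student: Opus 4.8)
The plan is to separate the two claims. Bounding the cost of resolving one event is routine; bounding the total number of events is the substantive part and will rely on the combinatorial result of Efrat and Sharir.

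\emph{Resolving one event.} Assume (symmetrically) that $\tangent_1$ sweeps over a (prospective) vertex $v$ of $\convexchain_1$ at orientation $\alpha$. Exactly one arc at the $\tangent_1$-endpoint of $\convexchain_1$ changes: either the half-circle currently ending $\convexchain_1$ is clipped away entirely (so $v$ leaves $\convexchain_1$), or a half-circle contributing a prospective arc just above $\tangent_1$ descends into the strip and is appended (so $v$ enters $\convexchain_1$). I would perform this single insertion or deletion at an extreme of $\convexchain_1$ in $O(\log h)$ time, in the balanced search structure that stores $\convexchain_1$ for the binary search of Lemma~\ref{lem:shortest-mapping}. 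If the affected arc or the new extreme arc occurs in the pending type-1 certificate, I recompute it in $O(1)$ time by Observation~\ref{obs:event1-cert}; if the new extreme arc is the half-circle defining $\tangent_1$, I replace the pending type-5 certificate by a type-4 certificate in $O(1)$ time by Observation~\ref{obs:event4-cert}. Finally I compute the next type-5 certificate in $O(\log^2 h)$ time by Lemma~\ref{lem:event5-cert} and update the constant-size event queue in $O(1)$ time. The total is $O(\log^2 h)$, dominated by the certificate recomputation.

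\emph{Counting the events.} Every type-5 event at the $\tangent_1$-endpoint of $\convexchain_1$ inserts or deletes an arc there, and each deletion can be charged to an insertion of the same arc (the arcs present when the rotation starts account for only $O(h)$ extra events); the same holds at the $\tangent_2$-endpoint and for $\convexchain_2$. Hence it suffices to bound the total number of arc insertions into $\convexchain_1$ and $\convexchain_2$ over the $\pi$ rotation, that is, the total number of maximal orientation intervals during which a fixed circle of $\mathcal{C}_P$ contributes an arc to one of the two chains. I would bound this by $O(h\log h)$ using the combinatorial result of Efrat and Sharir~\cite{DBLP:journals/dcg/EfratS96}: $\convexchain_1$ and $\convexchain_2$ are exactly the envelopes that determine which point of $P$ is the binding constraint on a representative segment of a given orientation, and the number of combinatorial changes of this structure over all orientations is the near-linear bound established in their segment-center analysis. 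Only $O(h)$ of these intervals begin or end at events of types 2, 3 or 4 (Lemmata~\ref{lem:event2},~\ref{lem:event3}, and~\ref{lem:event4}), so these lower-order contributions do not spoil the bound; summing over the four (tangent, chain) pairs gives $O(h\log h)$ type-5 events, which together with the per-event cost proves the lemma.

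I expect the counting step to be the main obstacle. A direct Davenport--Schinzel argument over the $h$ circles is delicate because circles repeatedly enter and leave the ``eligible'' range as the strip between $\tangent_2$ and $\tangent_1$ and its two defining circles rotate, so one cannot simply read off a single lower-envelope bound; and charging each event to one of the $\Theta(h^2)$ vertices of the arrangement of $\mathcal{C}_P$ that $\tangent_1$ may cross during its $O(h)$ epochs gives only a cubic bound. The careful part is therefore to set up the reduction so that our clipped-envelope insertions correspond precisely to the configurations the Efrat--Sharir lemma counts, and in particular to verify that clipping the envelopes by $\tangent_1$ and $\tangent_2$ introduces no combinatorial changes beyond those already charged to events of types 2, 3 and 4.
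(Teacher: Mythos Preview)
Your per-event bookkeeping is fine and essentially matches the paper. The gap is in the counting argument.

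Your charging scheme rests on the claim that every type-5 event inserts or deletes an arc at an endpoint of $\convexchain_1$ (or $\convexchain_2$). That is not true for this algorithm. The type-5 certificate is produced by a circular ray-shooting query in $\tree$ (Lemma~\ref{lem:event5-cert}), which returns the first circle $C_k$ that the moving endpoint $v_\tangent$ meets along $C_j$; but this intersection may lie on the \emph{concave} semi-circle of $C_k$, in which case $C_k$ does not join $\convexchain_1$ at all. The paper calls these \emph{internal} events: the certificate fires, a new certificate must be computed at cost $O(\log^2 h)$, yet $\convexchain_1$ is unchanged. Your insertion/deletion ledger simply misses these, and they are exactly the events for which a naive bound is hardest to obtain---there is no arc appearance on $\convexchain_1$ to charge them to.

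The paper's route is to make the reduction to Efrat--Sharir completely concrete via \emph{conjugate pairs}: an ordered pair $(p_k,p_j)$ of hull vertices is conjugate if some radius-$r$ semi-circle passes through both (with $p_j$ clockwise of $p_k$) and avoids the interior of $\convexhull{P}$; Efrat and Sharir prove there are $O(h\log h)$ such pairs. Every type-5 event---internal or not---is charged to the pair $(p_k,p_j)$ determined by the two circles meeting at $v_\tangent$. The paper exhibits a witnessing semi-circle in each case (for internal events this uses that $v_\tangent\in\convexchain_1$ forces the concave semi-circle around $v_\tangent$ to miss $\convexhull{P}$, which one then rotates until an endpoint reaches $p_k$), and finally argues that each conjugate pair is charged at most once over the $\pi$ sweep. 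Your sketch invokes ``the combinatorial result of Efrat and Sharir'' only at the level of envelope changes; without identifying conjugate pairs (or an equivalent object) you have no handle on the internal events, and the last paragraph of your proposal correctly anticipates that this is where the argument is incomplete.
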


We prove Lemmata~\ref{lem:event5-cert} and~\ref{lem:event5} in the following section, using an additional data structure which is described in detail in Section~\ref{sec:ds}.

\begin{figure}
    \centering
    \includegraphics[page=3]{figures/events.pdf}
    \caption{When $\tangent_1$/$\tangent_2$ hits an intersection of two circles, an arc needs to be added to $\convexchain_1$/$\convexchain_2$.}
    \label{fig:event-5}
\end{figure}

\subsection{Finding and maintaining the convex sequence}
\label{sec:event5}

In this section, we describe an additional data structure necessary to maintain the convex sequences $\convexchain_1$ and $\convexchain_2$ efficiently. We can use this data structure to construct and handle violations of type-5 certificates efficiently, as well as to find new starting positions of $\convexchain_1$ and $\convexchain_2$ after a type-3 event.

Let $p_1, \ldots, p_h$ be the vertices of the convex hull in clockwise order. At a given orientation~$\alpha$, let $p_i$ be the point corresponding to the circle $C_i$ to which $\tangent_1$ is tangent. We use $v_\tangent$ to denote the intersection point between $\tangent_1$ and $\convexchain_1$, if it exists, which is simultaneously an endpoint of $S_1$. Let $p_j$ be the point corresponding to the circle $C_j$ on which $v_\tangent$ is located. This implies that the arc on $S_1$ intersected by $\tangent_1$ belongs to circle $C_j$. Then, during our rotational sweep, $v_\tangent$ is moving over $C_j$. A type-5 event takes place when $v_\tangent$ hits the intersection of $C_j$ with another circle $C_k$ corresponding to point $p_k$. 

If, before a type-5 event, the arc of $C_j$ on $S_1$ was shrinking due to the movement of $v_\tangent$, then $C_j$ is fully removed from $S_1$ at the event, and $v_\tangent$ continues moving over $S_1$. Constructing the certificate in this case is very easy, since all we need to do is walk over $S_1$ from $v_\tangent$ to find the next vertex. As such, for the remainder of this section, we consider only the more complicated type-5 event, where the arc of $C_j$ on $S_1$ is growing due to the movement of $v_\tangent$. 

In that case, when the type-5 event happens, an arc of $C_k$ is added to $S_1$, and $v_\tangent$ starts moving over $C_k$ instead of $C_j$. As such, to construct a type-5 certificate, we must find the intersection between $C_j$ and another circle $C_k$ belonging to a point $p_k \in P$, such that the intersection between $C_j$ and $C_k$ is the first intersection hit by $v_\tangent$. To do this, we will first state some characteristics of $C_k$ and $p_k$.

First, observe that simply finding the first intersecting circle $C_k$ of $C_j$ is not necessarily enough. We are only interested in the semi-circles of all circles in $\mathcal{C}_P$ that have the same `opening direction' as the convex chain $\convexchain_1$ for a given orientation $\alpha$. As such, let the \emph{convex semi-circle} of a given circle $C$ be the semi-circle of $C$ that is convex with respect to $S_1$. Conversely, let the \emph{concave semi-circle} of $C$ be the opposite semi-circle of $C$. Then, circle $C_k$ appears on $S_1$ after a hit by $v_\tangent$ at orientation $\alpha$ if $v_\tangent$ is on the convex semi-circle of $C_k$ at orientation $\alpha$. If this is not the case, $C_k$ should be skipped. We show that, for this reason, we never have to consider points $p_l$, with $l < i$ or $j < l$, when constructing a type-5 certificate. 

\begin{restatable}{lemma}{betweenij}
    \label{lem:between_ij}
    Let $C_i$ be the circle defining $\tangent_1$, and let $C_j$ be the circle on which $v_\tangent$ is located, for $i \neq j$. Let $C_k$ be the first circle hit by $v_\tangent$ during rotation at orientation~$\alpha$. If $k < i$ or $j < k$, then at orientation $\alpha$, $v_\tangent$ lies on the concave semi-circle of $C_k$ with orientation~$\alpha$.
\end{restatable}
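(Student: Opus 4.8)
My plan is to argue by contradiction at the event orientation, using the cyclic order of the arcs on $\convexchain_1$ (Lemma~\ref{lem:circle-order}) together with the fact that $v_\tangent$ is pinned to the line $\tangent_1$, which is tangent to the extremal circle $C_i$.

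First I would fix a convenient frame: take $\alpha$ horizontal and let the direction $\theta$ in which $C_i$ is extremal point upward, so $\tangent_1$ is the horizontal line tangent to $C_i$ at its topmost point, and the convex semi-circle of each circle (with respect to $\convexchain_1$) is its left half-circle. Then I would read off where $v_\tangent$ lies directly from the definition of $\convexchain_1$ as the right-most envelope of left half-circles clipped to the strip: evaluating this envelope at the height of $\tangent_1$ shows that (a) $v_\tangent$ is on the left (convex) half-circle of $C_j$, hence lies to the left of $p_j$; and (b) $C_i$'s contribution to the envelope at that height is exactly its topmost point, which has the same $\alpha$-coordinate as $p_i$, so $v_\tangent$ --- being the pointwise maximum of the envelope --- lies to the right of $p_i$. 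Thus $v_\tangent$ lies $\alpha$-between $p_i$ and $p_j$, with $p_i$ and $p_j$ weakly on opposite sides of $v_\tangent$ in the $\alpha$-direction.

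Next, suppose towards a contradiction that at the event $v_\tangent$ lies on the \emph{convex} half-circle of some $C_k$ with $k<i$ or $k>j$. Since $v_\tangent$ lies on $\convexchain_1$ (the pointwise maximum of left half-circles) and simultaneously on the left half-circles of both $C_j$ and $C_k$, both circles attain the envelope at $v_\tangent$; hence $v_\tangent$ is (about to become) a true vertex of $\convexchain_1$ at which the arcs of $C_j$ and $C_k$ are adjacent. By Lemma~\ref{lem:circle-order} the centers of adjacent arcs on $\convexchain_1$ are consecutive among the arc-centers in the cyclic order on $\convexhull{P}$; and since $C_j$ is the arc incident to the extremal endpoint $\tangent_1$ of $\convexchain_1$ both just before and just after the event, the arc of $C_k$ can only be inserted on the side of $C_j$ facing $p_i$. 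Combining this with the extremality of $p_i$ in direction $\theta$ and convexity of $\convexhull{P}$ --- which places every hull vertex indexed outside $[i,j]$ so that $v_\tangent$ faces its concave half-circle --- pins $p_k$ onto the boundary sub-chain of $\convexhull{P}$ running from $p_i$ to $p_j$, i.e.\ $i\le k\le j$, contradicting the choice of $k$.

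I expect the crux to be exactly this last implication: pinning $p_k$ into the sub-chain $p_i\ldots p_j$, equivalently showing that for $k<i$ or $k>j$ the intersection point of $C_k$ and $C_j$ swept by $v_\tangent$ lies on $C_k$'s concave half-circle. This is where the $\alpha$-sandwich of $v_\tangent$ between $p_i$ and $p_j$, the extremality of $p_i$, and the convexity of $\convexhull{P}$ all have to be used together, and where the index bookkeeping must be done carefully; the remaining ingredients --- the behaviour of the right-most left-envelope at the height of $\tangent_1$, and the equivalence ``$v_\tangent$ on a convex half-circle of $C$'' $\Leftrightarrow$ ``$C$ contributes an arc to $\convexchain_1$ at $v_\tangent$'' --- are routine from the definition of $\convexchain_1$ and Lemmas~\ref{lem:circle-order} and~\ref{lem:sequence-construction}.
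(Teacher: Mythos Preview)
Your plan has a genuine gap at precisely the step you yourself flag as the crux. You want to assume for contradiction that $v_\tangent$ lies on the convex half-circle of $C_k$, deduce that $C_k$'s arc would appear on $\convexchain_1$ adjacent to $C_j$ at the $\tangent_1$ end, and then use Lemma~\ref{lem:circle-order} together with ``extremality of $p_i$ and convexity of $\convexhull{P}$'' to force $i\le k\le j$. But Lemma~\ref{lem:circle-order} only says that the arcs actually present on $\convexchain_1$ occur in the same \emph{cyclic} order as their centers on the hull; it does not say adjacent arcs come from consecutive hull vertices, and it does not by itself fix which of the two cyclic directions the $\convexchain_1$-order runs in relative to $p_i$. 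Your parenthetical justification --- ``which places every hull vertex indexed outside $[i,j]$ so that $v_\tangent$ faces its concave half-circle'' --- is literally the statement you are trying to prove, so invoking it here is circular. The observation that $v_\tangent$ lies $\alpha$-between $p_i$ and $p_j$ is correct and pleasant, but you never actually use it to close this gap; you only list it among the ingredients and then restate the lemma as what remains to be shown.

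The paper avoids this combinatorial detour and argues directly in the plane. It places $p_k$ on the radius-$r$ circle around $v_\tangent$ and intersects two half-plane constraints: (i) since $k<i$ or $k>j$ and the hull is clockwise, $p_k$ lies on the far side of the line through $p_i$ and $p_j$; (ii) since the event is $v_\tangent$ \emph{leaving} $C_k$ along its trajectory on $C_j$ (entering would not be an event), $p_k$ lies on the same side of the line through $v_\tangent$ and $p_j$ as an auxiliary point $p_{k'}$ chosen on the line $p_ip_j$. These two constraints confine $p_k$ to a sub-arc of the convex semi-circle of radius $r$ around $v_\tangent$, and then the elementary duality ``$p_k$ on the convex semi-circle around $v_\tangent$ $\Leftrightarrow$ $v_\tangent$ on the concave semi-circle of $C_k$'' finishes the proof. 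This direct geometric step is exactly the work your proposal defers, and without it the argument does not go through.
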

\begin{proof}

    Without loss of generality assume $p_i$ is positioned left of $p_j$, then assuming that $k < i$ or $j < k$, $p_k$ must lie below the line through $p_i$ and $p_j$ (since $i < j$ and the points are ordered in clockwise order). See Figure~\ref{fig:between-ij} for the following construction.
    
    Consider point $p_{k'}$ placed on the line through $p_i$ and $p_j$, such that $v_\tangent$ could be the bottom intersection of~$C_j$ and a radius-$r$ circle centered at $p_{k'}$. Let $p_k$ be a point placed on the circle of radius $r$ centered at $v_\tangent$. If $p_k$ is placed on the other side of the line through $v_\tangent$ and $p_j$, compared to $p_{k'}$, then $v_\tangent$ would enter $C_k$ at orientation $\alpha$, which does not induce an event. As such, $p_k$ must be on the same side of the line through $v_\tangent$ and $p_j$ as $p_{k'}$. Additionally, since $k < i$ or $j < k$, $p_k$ must be below the line through $p_i$ and $p_j$.  

    \begin{figure}
        \centering
        \includegraphics{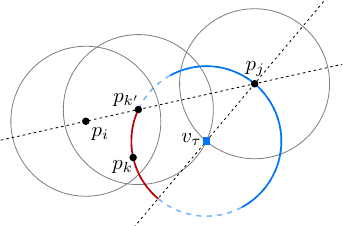}
        \caption{Point $p_k$ is placed on the red arc, which is a subset of the (blue dashed) convex semi-circle centered at $v_\tangent$, and disjoint from the (blue solid) concave semi-circle centered at~$v_\tangent$.}
        \label{fig:between-ij}
    \end{figure}

    It is easy to see that all points on the convex semi-circle of $v_\tangent$ will have $v_\tangent$ on their concave semi-circle. Since the arc on which $p_k$ is placed is a strict subset of the concave semi-circle of $v_\tangent$, any placement of $p_k$ must have $v_\tangent$ on its concave semi-circle.
\end{proof}

Lemma~\ref{lem:between_ij} implies that, while constructing a type-5 certificate, we need to consider only candidate points $p_k$ such that $i < k < j$. Note that, if $i=j$, we get a type-4 event. Then, all that is left is to find the first circle $C_k$ with $i < k < j$ that is intersected by $v_\tangent$ as it moves over $C_j$. To this end, we describe a data structure that allows us to perform a circular ray shooting query along $C_j$ from the orientation at which the certificate must be constructed.

\subparagraph{Data structure.}
Our data structure is essentially a balanced binary tree $\tree$ on the vertices of the convex hull in clockwise order, where each node stores an associated structure. For any~$i$, let $D_i$ be the disk bounded by circle $C_i$.
Suppose a node in $\tree$ is the root of a subtree with $p_i,\dots,p_j$ in the leaves. Then its associated structure stores the boundary of $\bigcap_{i\leq l\leq j} D_l$ as a sorted sequence of circular arcs. Given a range $(i, j)$ we can query this data structure with a (circular) ray in $O(\log^2 h)$ time to find the first intersection of the ray with the boundary of $\bigcap_{i\leq l\leq j} D_l$.
%For technical reasons we do not use one tree, but several trees based on a partition of the vertices. 
Section~\ref{sec:ds} describes the data structure and query algorithms in more detail. 

\subparagraph{Event handling.}

Whenever a certificate of type 5 is violated at orientation $\alpha$, it means some circle $C_k$ is hit by $v_\tangent$ at orientation $\alpha$. It is still possible, however, that this hit happens on the concave semi-circle of $C_k$. In that case, $C_k$ should not be added to $S_1$, and $v_\tangent$ should simply continue moving over its original trajectory. We call these events, where a type-5 certificate is violated but $S_1$ is not updated, \emph{internal events}. To handle an internal event, we merely need to construct another new type-5 certificate and continue our rotational sweep. In this case, however, we do not have to search the entire range $p_i, \ldots, p_j$ when constructing a new certificate, as shown in the following lemma.

\begin{restatable}{lemma}{constrainedQueryRange}
    \label{lem:constrained-query-range}
    Let $p_i$ be the point corresponding to $\tangent_1$, and let $v_\tangent$ be at the intersection of circles $C_j$ and $C_k$, where $C_j$ is the circle that defines the current trajectory of $v_\tangent$ and where $v_\tangent$ is on the concave semi-circle of $C_k$. Then if the next circle hit by $v_\tangent$ is $C_l$ for $i < l < k$, this circle is hit on its concave semi-circle.
\end{restatable}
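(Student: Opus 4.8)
The plan is to adapt the argument of Lemma~\ref{lem:between_ij}, shifting the reference configuration from the pair $(C_i,C_j)$ that anchored that proof to the pair $(C_k,C_j)$, where $C_k$ is the circle touched by $v_\tangent$ at the internal event. The first move is to replace the conclusion by a purely positional statement, exactly as in Lemma~\ref{lem:between_ij}: by the convex/concave duality exploited there --- a point $p_m$ lies on the convex semi-circle of the radius-$r$ circle centered at a point $x$ if and only if $x$ lies on the concave semi-circle of $C_m$ --- it suffices to show that the center $p_l$ of the next circle hit lies on the convex semi-circle of the radius-$r$ circle centered at the hit point $q'\in C_j\cap C_l$. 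I fix coordinates so that the orientation at the moment of the internal event is horizontal and $\convexchain_1$ opens to the left, so that ``convex semi-circle'' means the left semi-circle throughout.

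From here I would proceed in three steps. First, record the configuration at the internal event: since $v_\tangent\in\convexchain_1$ it lies on the convex (left) semi-circle of $C_j$, while by hypothesis it lies on the concave (right) semi-circle of $C_k$; hence $p_k$ lies on the bulge side of the vertical line through $v_\tangent$, whereas $p_j$ lies on the opposite side. Second, locate $p_l$: because $i<l<k$, the center $p_l$ lies on the convex-hull arc between $p_i$ and $p_k$, so by a clockwise-order argument as in Lemma~\ref{lem:between_ij} it must lie on a prescribed side of a suitable line through $v_\tangent$ --- the analogue of the separating line used there, now anchored at $C_k$ (I expect it to be the line through $v_\tangent$ and $p_j$) --- namely the side that forces $q'$ onto the concave semi-circle of $C_l$. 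Third, follow the forward motion of $v_\tangent$: we are in the sub-case where the $C_j$-arc of $\convexchain_1$ is growing, so $v_\tangent$ moves monotonically along $C_j$ away from the rest of $\convexchain_1$ and therefore never crosses to the far side of $p_k$ before reaching $C_l$; consequently it enters $C_l$ from the concave side, i.e.\ $q'$ lies on the concave semi-circle of $C_l$. By the duality of the first move this means $C_l$ is not added to $\convexchain_1$, so the hit is an \emph{internal} event, which is the claim; in particular the query range may be narrowed to $p_k,\dots,p_j$ for the next certificate.

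I expect the main obstacle to be the third step, because the hit with $C_l$ happens at an orientation strictly later than the internal event, so ``concave semi-circle'' there is measured in a rotated frame. I plan to handle this using the invariant that throughout the sub-sweep $v_\tangent$ stays on $\convexchain_1$ --- hence on the convex semi-circle of $C_j$ --- and that $\tangent_1$ remains tangent to $C_i$; together with the monotonicity of the sweep direction along $C_j$, these constraints should pin the orientation down tightly enough that the sidedness established in the first two steps survives until $v_\tangent$ reaches $C_l$. If an orientation-free reformulation is available --- for instance, showing that the entire forward trajectory arc of $v_\tangent$ on $C_j$ lies in the closed half-plane bounded by the line through $p_j$ and $p_l$ that avoids the bulge direction --- I would prefer it, since it removes the rotating frame from the argument altogether.
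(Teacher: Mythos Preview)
Your reduction via the convex/concave duality and your diagnosis of where the difficulty lies are both correct: the whole lemma turns on handling the rotated frame in step~3. But the plan as written does not close that gap, and the mechanism you sketch (``sidedness survives because $v_\tangent$ stays on $\convexchain_1$ and $\tangent_1$ stays tangent to $C_i$'') is not sharp enough to finish. The paper's proof supplies exactly the orientation-free invariant you say you would prefer, and it is not the half-plane separation you guess at.

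Concretely, the paper tracks the point $v_c$, defined as the endpoint of the \emph{concave} semi-circle of the radius-$r$ circle centered at $v_\tangent$ that lies clockwise from $p_k$, and shows that the Euclidean distance $d(v_c,p_j)$ \emph{decreases monotonically} along the sweep. Separately, it pins down $p_l$ not by a line through $v_\tangent$ (as you propose in step~2) but by the convex-hull constraint: since $i<l<k$, the point $p_l$ lies in the cone with apex $p_k$ bounded by the lines $p_ip_k$ and $p_kp_j$, and every point of that cone is shown to be farther from $p_j$ than $p_k$ is. Combining $d(p_l,p_j)>d(p_k,p_j)>d(v_c,p_j)$ at the internal event with the monotone decrease of $d(v_c,p_j)$ gives $d(p_l,p_j)>d(v_c,p_j)$ at the moment $v_\tangent$ hits $C_l$. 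Since $p_l$, $p_j$ and $v_c$ all lie on the radius-$r$ circle around $v_\tangent$ at that moment, this distance inequality places $p_l$ on the convex semi-circle around $v_\tangent$, hence $v_\tangent$ on the concave semi-circle of $C_l$.

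So the two missing ingredients are: (i) anchor the location of $p_l$ at $p_k$, not at $v_\tangent$, to get the cone and the inequality $d(p_l,p_j)>d(p_k,p_j)$; and (ii) replace the vague sidedness invariant by the scalar monotone quantity $d(v_c,p_j)$. Without (ii) your step~3 cannot be made rigorous, because a half-plane through $v_\tangent$ and $p_j$ rotates with the frame and there is no obvious reason the sidedness of $p_l$ with respect to it persists.
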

\begin{proof}
    Let $C_l$ be the next circle hit by $v_\tangent$ for $i < l < k$, and see Figure~\ref{fig:IE-range} for the following construction. 
    Since $v_\tangent$ is on the concave semi-circle of $p_k$, $p_k$ must be on the convex semi-circle of $v_\tangent$. Furthermore, as $C_k$ was hit by $v_\tangent$, $p_k$ must lie on the same side of the line through $p_j$ and $v_\tangent$ as $p_i$. Let the endpoint of the concave semi-circle of $v_\tangent$ that lies clockwise from $p_k$ be denoted $v_c$, and observe that $d(p_k, p_j) > d(v_c, p_j)$, where $d(a, b)$ denotes the Euclidean distance between $a$ and $b$. Additionally, since we consider only points on the convex hull, point $p_l$ must lie above line $p_ip_k$ but below line $p_kp_j$, resulting in a cone with its apex at~$p_k$. 

    \begin{figure}
        \centering
        \includegraphics{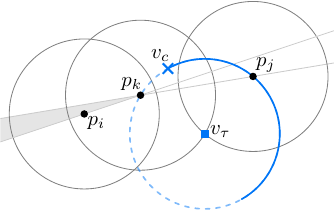}
        \caption{Point $p_l$ must be located in the gray cone. }
        \label{fig:IE-range}
    \end{figure}
    
    Every point in this cone is further away from $p_j$ than $p_k$: Since $v_\tangent$ is part of $S_1$, placing $q_1$ at $v_\tangent$ must yield a valid solution for $q_1q_2$. This means that all points must be in the Minkowski sum of a radius $r$ disc and the ray with orientation $\alpha$ originating from $v_\tangent$, and hence $p_k$ lies below the line $v_cp_i$ (see Figure~\ref{fig:internal-conjugate}). Finally, $p_j$ lies on the concave semi-circle around $v_\tangent$, and $p_k$ lies on the convex semi-circle around $v_\tangent$, which shows that the cone lies on the far side of $p_k$ with respect to~$p_j$. This implies that $d(p_l, p_j) > d(p_k, p_j)$. 
    
    During our monotone rotational sweep, $v_c$ must continuously move closer to $p_j$ as we continue our sweep. Therefore, when $v_\tangent$ intersects $C_l$, we must have $d(p_l, p_j) > d(v_c, p_j)$: This directly implies that $v_c$ must lie clockwise from $p_l$ on the radius $r$ circle centered at $v_\tangent$. Thus, $v_\tangent$ lies on the concave semi-circle of~$C_l$.
\end{proof}

Lemma~\ref{lem:constrained-query-range} implies that, after an internal event with circle $C_l$, it is sufficient to consider only points $p_k$ with $l < k < j$ when constructing a new type-5 certificate.

When a type-5 certificate is violated and $v_\tangent$ is on the convex semi-circle of $C_k$, however, we do need to update $S_1$ to reflect $v_\tangent$ moving over the intersection between $C_k$ and $C_j$. Additionally, we must construct new certificates: We possibly need to compute a new type-1 certificate, as well as either a type-4 certificate or a new type-5 certificate using $C_k$ as the new trajectory of $v_\tangent$ and searching for the next hit with $C_l$ for $i < l < k$. We are now ready to prove Lemmata~\ref{lem:event5-cert} and~\ref{lem:event5}.

\eventFivecert*
\begin{proof}
    Let $p_i$ be the point corresponding to $\tangent_1$, $C_j$ be the circle over which $v_\tangent$ is currently moving, and $\alpha$ be the current orientation. To construct a type-5 certificate, we find the first circle $C_k$ with $i \leq l < k < j$ that is intersected by $v_\tangent$. Here, if this certificate is constructed during an internal event, $l$ is the index of the circle $C_l$ intersected by $v_\tangent$ during that event. Otherwise, $l = i$. Since $v_\tangent$ moves over $C_j$, we can use the data structure described in Section~\ref{sec:ds} to perform a circular ray shooting query on the sequence $C_l, \ldots, C_{j-1}$ with starting point $v_\tangent$ to find $C_k$ in $O(\log^2 h)$ (Lemma~\ref{lem:query-circle}). This gives us the point $p_k$ to include in the certificate, and we can find the orientation at which $p_k$ is hit by drawing the tangent of $C_i$ through the intersection point between $C_j$ and $C_k$. 
\end{proof}

\eventFive*
\begin{proof}
    To show that this event happens $O(h \log h)$ times during a $\pi$ rotation, we need the following definition. Let an ordered pair $(p_g, p_h)$ of vertices of $\convexhull{P}$ be \emph{conjugate} if we can place a semi-circle of radius $r$ so that it hits $p_g$ and $p_h$, $p_h$ lies clockwise from $p_g$ on this semi-circle, and the semi-circle does not intersect the interior of $\convexhull{P}$. Efrat and Sharir prove that any convex polygon with $n$ vertices has at most $O(n \log n)$ conjugate pairs if the vertices are placed in general position~\cite{DBLP:journals/dcg/EfratS96}. We charge each occurrence of a type-5 event to a conjugate pair, and prove that each pair is charged only a constant number of times. This immediately yields the bound of $O(h \log h)$ on the number of type-5 events. 

   Consider an internal type-5 event. We charge these events to the pair $(p_k, p_j)$. To this end, we show that this pair of points must be conjugate. Consider orientation $\alpha$ at which this event takes place. At that point, we can draw a circle of radius $r$, centered at $v_\tangent$, through $p_k$ and $p_j$. Since, by definition of an internal event, $v_\tangent$ is on the concave semi-circle of $p_k$, $p_k$ must be on the convex semi-circle centered around $v_\tangent$.
   
   Now again observe that, since $v_\tangent$ is part of $S_1$, placing $q_1$ at $v_\tangent$ must yield a valid solution for $q_1q_2$. This means that all points must be in the Minkowski sum of a radius $r$ disc and the ray with orientation $\alpha$ originating from $v_\tangent$. See Figure~\ref{fig:internal-conjugate}. Therefore, the concave semi-circle of radius $r$ centered at $v_\tangent$ does not intersect $\convexhull{P}$. If we rotate this semi-circle counter-clockwise until one of its endpoints coincides with $p_k$, we obtain a semi-circle through $p_k$ and $p_j$ that does not intersect $\convexhull{P}$. This means it is a witness that $(p_k, p_j)$ is conjugate.

    \begin{figure}
        \centering
        \includegraphics{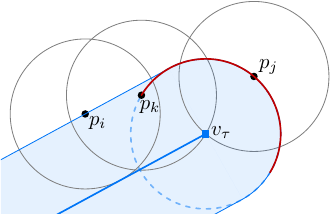}
        \caption{All points in $\convexhull{P}$ must be in the blue shaded area. When an internal event happens, the red semi-circle is a witness that $(p_k, p_j)$ is conjugate.}
        \label{fig:internal-conjugate}
    \end{figure}

   Next, consider a type-5 event that is not internal. The same argument as above holds, except $p_k$ is already on the concave semi-circle of radius $r$ centered at $v_\tangent$. Since this semi-circle does not intersect $\convexhull{P}$, that semi-circle is a direct witness that $(p_k, p_j)$ is conjugate. 

    Every conjugate pair only induces at most one type-5 event. In order for conjugate pair $(p_k, p_j)$ to induce a second type-5 event, $v_\tangent$ must again hit the intersection between $p_k$ and $p_j$ while it is moving in the same angular movement direction. This can only happen if we perform a full $2\pi$ rotational sweep, or if $v_\tangent$ first moves over this intersection in the opposite direction. The prior is not possible in a $\pi$ rotational sweep. The latter is only possible if $p_k$ is first involved in a type-4 event. But then, $k = i$, and by construction $p_k$ can never be involved in another type-5 certificate.

    Handling a type-5 event consists of updating $S_1$ and the movement trajectory of $v_\tangent$, which can be done in $O(1)$ time. Additionally, we construct new type-1 certificate and either a type-4 or type-5 certificate, which can be done in $O(1)$ and $O(\log^2 h)$ time by Observations~\ref{obs:event1-cert} and~\ref{obs:event4-cert} and Lemma~\ref{lem:event5-cert}.     
\end{proof}

After analyzing all events that occur during the rotational sweep, we can prove Theorem~\ref{thm:static-full}.

\staticFull*
\begin{proof}
    We initialize the algorithm by computing the convex hull $\convexhull{P}$. By Lemma~\ref{lem:convex-hull}, it is sufficient to consider only points in $\convexhull{P}$ to find a representative segment of $P$. We use rotating calipers to check that the shortest representative segment is not a point, and find an orientation~$\alpha$ in which a solution exists. For this fixed $\alpha$ we find $\tangent_1$ and $\tangent_2$, compute $\convexchain_1$ and $\convexchain_2$, as well as the shortest line segment $q_1q_2$ in orientation~$\alpha$. Computing the convex hull can be done in $O(n \log h)$~\cite{chan1996convexhull}. By Lemma~\ref{lem:sequence-construction}, $\convexchain_1$ and $\convexchain_2$ can be initialized in $O(h)$, and we can initialize $q_1q_2$ in $O(\log h)$ time. As such, initialization of the algorithm can be done in $O(n \log h)$ time in total.
       
    Next, we rotate orientation $\alpha$ over $\pi$ in total, maintaining $\tangent_1$, $\tangent_2$, $\convexchain_1$, and $\convexchain_2$, as well as $q_1q_2$. Note that a rotation of $\pi$ is sufficient, since we consider orientations, which identify opposite directions of a $2\pi$ rotation. Throughout the rotation we maintain the shortest representative segment, and return the shortest such line segment found over all orientations. Over the full rotation, we encounter five different types of events. By Lemmata~\ref{lem:event1}--\ref{lem:event5}, these events can be handled in $O(h \log^3 h)$ time in total. 

    As we mentioned in Section~\ref{sec:fixed-orientation}, the shortest representative segment is defined by only $\tangent_1$, $\tangent_2$, $\convexchain_1$, and $\convexchain_2$. Each tangent is defined by a circle. Hence, $\tangent_1$ and $\tangent_2$ can only change when they are defined by a new circle. Thus, by Lemma~\ref{lem:event2}, we correctly maintain $\tangent_1$ and $\tangent_2$ throughout the full $\pi$ rotation. Furthermore, $\convexchain_1$ and $\convexchain_2$ can only exist when $\tangent_1$ and $\tangent_2$ appear in the correct order. By Lemma~\ref{lem:event3} we correctly maintain when $\convexchain_1$ and $\convexchain_2$ exist. Finally, $\convexchain_1$ and $\convexchain_2$ can only make a discrete change as the tangents $\tangent_1$ and $\tangent_2$ hit intersections between circles in $\mathcal{C}_P$. If the tangents do not touch a vertex, then $\convexchain_1$ and $\convexchain_2$ must change continuously along the arcs that $\tangent_1$ and $\tangent_2$ cross. By Lemmata~\ref{lem:event4} and~\ref{lem:event5}, we correctly maintain $\convexchain_1$ and $\convexchain_2$ when they exist.
    In conclusion, we correctly maintain $\tangent_1$, $\tangent_2$, $\convexchain_1$, and $\convexchain_2$ throughout the full $\pi$ rotation. Since we maintain the shortest line segment between $\convexchain_1$, and $\convexchain_2$ in any orientation using Lemma~\ref{lem:event1}, we also maintain the shortest representative segment for any orientation. 
\end{proof}

\subsection{Data structure and queries}
\label{sec:ds}

We describe a data structure that stores the points of $\convexhull{P}$, so that for a given query subsequence $p_i,\ldots,p_j$ and another object $q$ (a radius-$r$ circle or a line),
we can efficiently find the two intersections of $\bigcap_{i\leq l\leq j}D_l$ with $q$, or decide that they do not intersect. Here, $D_l$ is the disk of radius $r$ centered on $p_l$. 
Since the common intersection of disks is convex, a line will intersect it at most twice.
Furthermore, since the query circle has the same radius as each of the disks $D_l$, it will intersect the common intersection of any subset of these disks at most twice as well. 
In all queries we will perform, any pair of points in the query sequence $p_i,\ldots,p_j$ are at most $2r$ apart. Furthermore, we are always interested in the intersection that is the \emph{exit}, when we direct the line or when we direct a circular ray along the circle from a given starting point. Computing both intersections of the line or circle immediately gives this desired exit intersection as well.

We first break the sequence $p_1, \ldots, p_h$ of $\convexhull{P}$ into a number of subsequences with useful properties, see Figure~\ref{fig:partit}. We will create maximal subsequences $p_s,\ldots,p_t$ so that the total turning angle is at most $\pi/2$, and the Euclidean distance between $p_s$ and $p_t$ is at most $r$. These two properties ensure that if we place radius-$r$ disks centered on the points of $p_s,\ldots,p_t$, their common intersection is non-empty. Maximality of the subsequences ensures that for any query, we need to search in only $O(1)$ subsequences. The construction of such subsequences can be done in an incremental greedy manner and is standard. We can assume that there are at least two points in $p_1,\ldots,p_h$ that are more than $r$ apart, otherwise all points fit in a radius-$r$ circle.

Let $p_s,\ldots,p_t$ be any one of these subsequences, let $C_s,\ldots,C_t$ be the radius-$r$ disks centered on these points, and let $\MC_{st}$ be their common intersection. We first describe a data structure for $\MC_{st}$, which will later be used as an associated structure for $p_s,\ldots,p_t$.

\begin{figure}
    \centering
    \includegraphics{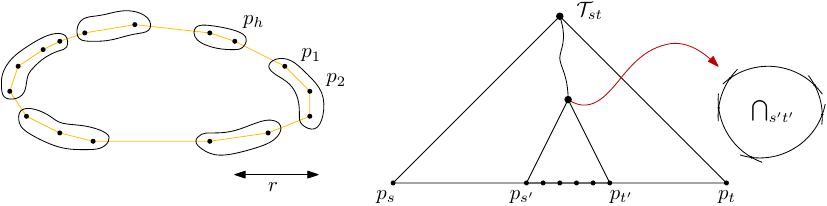}
    \caption{Left, partitioning $p_1,\ldots,p_h$ into subsequences by distance and angle. Right, schematic view of the data structure $\tree_{st}$.}
    \label{fig:partit}
\end{figure}

The queries on $\MC_{st}$ that we need to answer are the following: (1) given a circle $C$ of radius $r$ and a point $q$ on it, where $q$ lies inside $\MC_{st}$, report the first circular arc hit when $q$ moves counterclockwise along $C$. (2) Given a query line $\ell$, find the intersections with $\MC_{st}$. It is clear that the second query yields at most two points. For the first query, since $C$ and $C_s,\ldots,C_t$ all have the same radius, $C$ intersects $\MC_{st}$ in at most two points as well.
%In other words, one subsequence of the vertices of $\MC_{st}$ lies inside $C$ and the other subsequence lies outside $C$. 
These properties allow us to store the vertices of $\MC_{st}$ in sorted order and perform a simple binary search with a radius-$r$ query circle or a line, and find the intersected arcs in $O(\log h)$ time.
%to find the answer to the circular ray shooting query from $q$ on $C$.

The data structure for the subsequence $p_s,\ldots,p_t$ is a balanced binary search tree $\tree_{st}$ that stores $p_s,\ldots,p_t$ in the leaves, see Figure~\ref{fig:partit}. For every internal node, with leaves containing $p_{s'}, \ldots,p_{t'}$ ($s\leq s'\leq t'\leq t$), we store as an associated structure the vertices of $\MC_{s't'}$ in sorted order, as just described. The complete data structure $\tree$ is the set of all of these trees. Breaking up into subsequences guarantees that every node in any tree in $\tree$ has a non-empty associated structure.

It can easily be seen that the data structure has size $O(h\log h)$ and can be constructed in $O(h\log^2 h)$ time (even $O(h\log h)$ construction time is possible).

\subparagraph{The circular query.}
Suppose we wish to perform a query on the vertices $p_i,\ldots,p_{j-1}$ with a circle $C$ and starting point $q$. Since $p_i,\ldots,p_{j-1}$ is part of the cyclic sequence $p_1,\ldots,p_h$ of $\convexhull{P}$, it may be that $p_h$ and $p_1$ are in the sequence $p_i,\ldots,p_{j-1}$. In that case,
instead of querying with $p_i,\ldots,p_{j-1}$, we query with the two sequences $p_i,\ldots,p_{h}$ and $p_1,\ldots,p_{j-1}$ instead. For ease of description we assume that $p_h$ and $p_1$ are not part of $p_i,\ldots,p_{j-1}$, so $j-1\geq i$.

Let $c$ be the center of $C$; in our algorithm, $c$ will always be the convex hull vertex $p_j$. 
We first test if $p_i$ is farther than $2r$ from $p_j$: if so, we binary search for the vertex $p_{i'}$ with $i<i'\leq j$ that is at most $2r$ from $p_j$ and set $i=i'$. Now we are guaranteed that $p_i$ is within distance $2r$ from $p_j$. Points farther than $2r$ from $p_j$ cannot be answers to the query.
%The queries our algorithm needs to support have an additional property: $p_i,\ldots,p_{j-1}$ makes a total turn of less than $\pi$ (in clockwise direction).

We find the subsequences containing $p_i$ and $p_{j-1}$; they may be the same, they may be adjacent, or there may be $O(1)$ subsequences in between. It cannot be more because that would violate the greedy choices of the subsequences.

If $p_i$ and $p_{j-1}$ lie in the same subsequence $p_s,\ldots,p_t$, we use the search paths in $\tree_{st}$ to $p_i$ and to $p_{j-1}$ and find the highest subtrees between these paths plus the leaves containing $p_i$ and $p_{j-1}$, see Figure~\ref{fig:ds} (right). At the roots of these subtrees, we query the associated structure as described before. At the leaves, we check the one circle.

If $p_i$ and $p_{j-1}$ lie in two adjacent subsequences, we follow the path to $p_i$ in the tree that contains it, and collect the highest nodes strictly right of the search path, where we query the associated structures of their roots. Then we follow the path to $p_{j-1}$ in the tree for the other subsequence and collect the highest nodes strictly left of the search path, where we query the associated structures of their roots. Again we include the leaves with $p_i$ and with~$p_{j-1}$.

If there were subsequences in between, we additionally query the associated structure of the root of these trees.

In all cases, we choose the best answer (first hit) among the answers from all of the $O(1)$ subsequences and all of the $O(\log h)$ subtrees.

\begin{lemma}
    \label{lem:query-circle}
    A circular query in $\tree$ can be answered in $O(\log^2h)$ time.
\end{lemma}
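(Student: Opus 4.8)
The plan is to follow the query algorithm laid out above and simply account for its cost; the only genuine content is (i) that the range $p_i,\dots,p_{j-1}$ is covered by $O(\log h)$ associated structures, each queryable in $O(\log h)$ time, and (ii) that the exit point of $C$ from the whole disk intersection is recovered by taking the best of the partial answers. First I would dispose of the preprocessing. Splitting off the wrap-around through $p_h,p_1$ replaces the query by $O(1)$ independent queries on contiguous ranges. The trimming step — binary searching for the first vertex within distance $2r$ of $p_j$ and resetting $i$ to it — costs $O(\log h)$ and loses nothing, since a radius-$r$ circle centred at $p_j$ can only meet a radius-$r$ circle whose centre is within $2r$ of $p_j$. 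Locating the subsequences $p_s,\dots,p_t$ of the greedy partition that contain $p_i$ and $p_{j-1}$ then takes $O(\log h)$.

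Next I would bound the number of associated structures consulted. By maximality of the greedy subsequences, each complete subsequence strictly between the one holding $p_i$ and the one holding $p_{j-1}$ either spans turning angle $\Omega(1)$ or has endpoints $\Omega(r)$ apart; since after trimming the query range is geometrically bounded (turning angle below $2\pi$, diameter $O(r)$), only $O(1)$ such subsequences fit. Inside the (at most two) boundary trees $\tree_{st}$, the two root-to-leaf search paths to $p_i$ and $p_{j-1}$ decompose the relevant subrange into $O(\log h)$ canonical subtrees plus the two endpoint leaves (the standard range-decomposition fact for balanced binary search trees), and each fully contained in-between subsequence contributes exactly one node, the root of its $\tree_{st}$. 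So in total we query the associated structures of $O(\log h)$ nodes and check $O(1)$ individual circles at leaves.

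Then I would charge each associated-structure query. Such a node stores the vertices of $\MC_{s't'}=\bigcap_{s'\le l\le t'}D_l$ in sorted angular order; because $C$ has radius exactly $r$ it meets the convex region $\MC_{s't'}$ in at most two points, so a binary search with $C$ over this sorted sequence returns, in $O(\log h)$ time, the arc through which $q$ leaves $\MC_{s't'}$ when moving counterclockwise along $C$. Summing over the $O(\log h)$ nodes gives $O(\log^2 h)$, and the $O(\log h)$ preprocessing adds nothing, yielding the claimed bound. Correctness of combining the partial answers is the observation that $\bigcap_{i\le l\le j-1}D_l$ equals the intersection of the regions $\MC_{s't'}$ over the canonical subtrees, the whole in-between subsequences, and the $O(1)$ leaf disks: since $q$ lies in this common intersection, the first point at which $q$ leaves $\bigcap_{i\le l\le j-1}D_l$ moving counterclockwise along $C$ is precisely the counterclockwise-earliest of the exit points from these pieces, and that exit arc belongs to one of $C_i,\dots,C_{j-1}$ as required; so returning the first hit among the $O(\log h)$ partial answers is correct.

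The step I expect to be the main obstacle is the $O(1)$ bound on the number of greedy subsequences spanned: this is the one place where one must actually invoke the geometry of the partition (turning angle $\le\pi/2$, endpoint distance $\le r$ per subsequence) together with the fact that trimming makes the query range geometrically small, e.g. bounding the turning-limited subsequences by the total turning angle of a convex sub-chain and the distance-limited ones by a perimeter argument. Everything else is bookkeeping over the algorithm already described.
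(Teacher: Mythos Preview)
Your proposal is correct and follows essentially the same approach as the paper: the paper itself gives no separate proof of this lemma, relying instead on the preceding algorithm description (wrap-around split, $2r$-trimming by binary search, the $O(1)$ bound on greedy subsequences touched, the standard $O(\log h)$ canonical-subtree decomposition, and $O(\log h)$ binary search per associated structure), and your write-up faithfully expands exactly these steps. Your discussion of correctness (combining partial answers via the earliest exit) and your identification of the $O(1)$-subsequences bound as the only geometrically nontrivial step are both accurate and, if anything, more explicit than what the paper provides.
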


\begin{figure}
    \centering
    \includegraphics{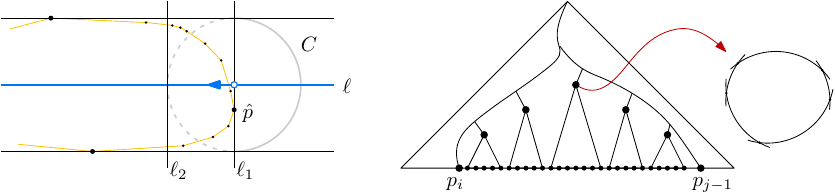}
    \caption{Left, the situation when the leftmost right cap is computed. Right, a query in the data structure shown schematically, with a set of highest subtrees between the search paths to $p_i$ and $p_{j-1}$.}
    \label{fig:ds}
\end{figure}

\subparagraph{The line query.}
When at a certain angle $\alpha$, a solution starts to exist again (type-3 event), we must find this first solution so that we can continue the rotating calipers process. At angle $\alpha$ we have two parallel lines that enclose all points of $\convexhull{P}$, there is one point on each line, and we need to find the caps of the hippodrome: the semi-circles of radius $r$ with their endpoints on the two lines, such that all points of $P$ lie in the hippodrome, and the semi-circles are shifted inwards as much as possible (until they hit a point of $P$). For ease of description we assume that $\alpha=0$, so the parallel lines are horizontal. We consider where to place the cap that bounds the hippodrome from the right, moving it as far left as possible.

We first find the rightmost point in $\convexhull{P}$; let it be $\hat{p}$ and let its $x$-coordinate be $\hat{p}_x$; see Figure~\ref{fig:ds} for an illustration.
Let $\ell_1\,:\,x=\hat{p}_x$ be the vertical line through the rightmost point, and let  $\ell_2\,:\,x=\hat{p}_x-r$. Let $\ell$ be the horizontal line midway between the two parallel lines that enclose the points.
We need to find a point $q$ on $\ell$, such that the radius-$r$ circle centered at $q$ has all points of $P$ to the left of (or on) its right semi-circle, and that right semi-circle contains one point of $P$: the answer to the query. Note that $q$ will lie between $\ell_2$ and $\ell_1$. Consequently, the answer to the query is a point to the right of $\ell_2$. Therefore, we need at most four subsequences from the partition of $\convexhull{P}$, and in the following, we simply query each of the corresponding trees to find the answer. For simplicity we refer to any such tree as $\tree$.

We will find our answer using three separate queries. The first involves all points of $P$ that lie inside $C$, the radius-$r$ circle that has the parallel lines as tangents and whose center is $\ell\cap\ell_1$. Point $\hat{p}$ is one of the points inside $C$; in general the points inside $C$ form a subsequence of $\convexhull{P}$  that includes $\hat{p}$. If points beyond the two line points (on the parallel horizontal lines) lie inside $C$, we ignore them since they are not the desired answer (this may happen when line points lie right of $\ell_2$). 
%The line points themselves, however, can be the answer if they lie right of $\ell_2$.

We can find the two relevant intersections of $\convexhull{P}$ and $C$ in $O(\log h )$ time by binary search. For this sequence $p_i,\ldots,\hat{p},\ldots,p_j$ of points, we know that the circles centered at them all contain the center of $C$. So, a line intersection query with $\ell$ in their common intersection gives two intersection points. The left one is the answer we want, and we report the point whose left side of the circle was intersected rightmost. We query the data structure $\tree$ with $p_i$, $p_j$, and $\ell$ for this. As before, we in fact query multiple trees and subtrees, and we report the rightmost one of the different answers.

The other two queries are handled in the same way, so we describe only one of them. It concerns the points of $\convexhull{P}$ that lie outside $C$, to the right of $\ell_2$, and before $\hat{p}$ in the clockwise order. 
%Recall that the first point of interest is the point of $\convexhull{P}$ is the one on the top parallel line.
Let the sequence of these points be $p_i,\ldots,p_j$ (overloading notation somewhat). We can find $p_i$ and $p_j$ by binary search on $\convexhull{P}$ using $\ell_2$ and $C$. The sequence $p_i,\ldots,p_j$ has useful properties, when considering the radius-$r$ disks $D_i,\ldots,D_j$ centered on them. 
All of these disks intersect the line $\ell$; let these intervals be $I_i,\ldots,I_j$. The length of intersection increases monotonically from $i$ to $j$.  The right endpoints of the intervals appear in the order $i,\ldots,j$: the right endpoint of $I_l$ is always left of the right endpoint of $I_{l+1}$. The left endpoints are not sorted.

The common intersection $\bigcap_{i\leq l\leq j} D_l$ is not empty, but this common intersection may not intersect $\ell$. If it does intersect $\ell$, then we can simply find the two intersection points and the left one is our answer (the point whose disk contributes to the common intersection with the circular arc that is hit).
We can find it using the data structure $\tree$ by querying with $p_i$, $p_j$, and $\ell$.

When $\bigcap D_l\cap \ell=\emptyset$, then this does not work. Fortunately, we can still use $\tree$, but the query algorithm becomes more involved.
For $i\leq i'\leq j'\leq j$, let $\MC_{i'j'}$ denote the common intersection of the disks $D_{i'},\ldots,D_{j'}$.

\begin{figure}
    \centering
    \includegraphics{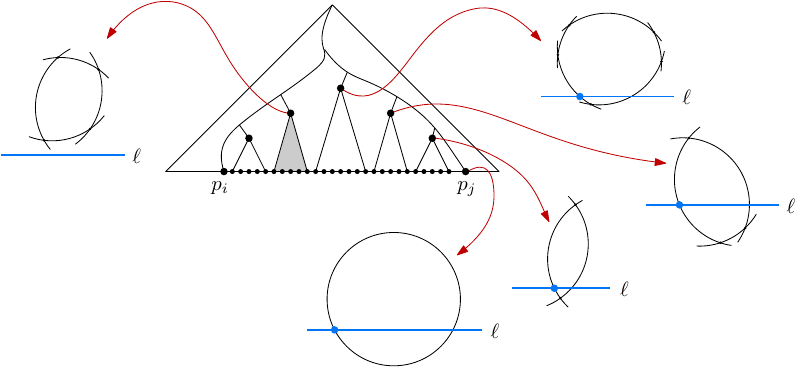}
    \caption{The query in $\tree$ among $p_i,\ldots,p_j$ with line $\ell$. In the associated structures, from right to left, we find a non-empty intersection four times (red points are answers in these subtrees), and for the fifth node, $\ell$ does not intersect the common intersection: we must descend into the grey~subtree.}
    \label{fig:linequery}
\end{figure}

We query $\tree$ with $i$, $j$, and $\ell$. The query with $i$ and $j$ results in $O(\log h)$ subtrees that together represent all points $p_i,\ldots,p_j$, see Figure~\ref{fig:ds}. Each root of such a subtree stores the non-empty common intersection of the disks corresponding to the points in the leaves of that subtree, as stated in the description of $\tree$. We treat these subtrees from higher indices towards the ones with lower indices, so first the subtree for which $p_j$ is the rightmost point. See Figure~\ref{fig:linequery}.
Suppose we have treated some subtrees and now we need to handle a subtree $T_{i'j'}$ containing the points $p_{i'},\ldots,p_{j'}$, where $i\leq i'\leq j'\leq j$. The root of this subtree stores $\MC_{i'j'}$. We compute $\MC_{i'j'}\cap \ell$ in $O(\log h)$ time. Now there are two options: (1) If $\MC_{i'j'}\cap \ell=\emptyset$, then the answer to the original query cannot be in $p_i,\ldots,p_{i'-1}$, so we do not need to query any more subtrees that store points with smaller indices than the present one. However, the answer may be in this subtree itself, so we must descend in it. (2) If $\MC_{i'j'}\cap \ell\not =\emptyset$, then the answer among $p_{i'},\ldots,p_{j'}$ is the point whose disk provides the arc of the left intersection point of $\MC_{i'j'}\cap \ell=\emptyset$, which implies that we do not need to descend in this subtree. We already have the answer.

So for the selected subtrees from right to left, we first get a sequence with a non-empty intersection of their common intersection $\MC$ with $\ell$, and then one where the intersection is empty, as illustrated in Figure~\ref{fig:linequery} (we ignore subtrees further to the left).
We need to search in this subtree, and do that as follows. We first query the right subtree of the root, and again perform the test whether the common intersection in the right subtree intersected with $\ell$ is empty or not. If it is empty, then we can ignore the left subtree and by the same argument as before, and we continue the same way in the right subtree until we reach a leaf. 

The correctness of this search hinges on the claim that if a subtree gives an empty intersection with $\ell$, then no subtree more to the left---with lower indices---can give the answer. The argument is simple, when thinking about the intervals $I_i,\ldots,I_j$. If the intersection for a subtree is empty, we necessarily have a left endpoint among the sequence of right endpoints in this subtree. This left endpoint gives a candidate answer. So, intervals more to the left for which we still need to encounter the right endpoint (reasoning still from right to left) certainly cannot yield a better left endpoint than the one that we know to exist.

The query visits $O(\log h)$ nodes in $\tree$, namely the nodes of up to two search paths to leaves and their direct children next to these paths. At every node we spend $O(\log h)$ time for the intersection query of the common intersection with $\ell$.

\begin{lemma}
    \label{lem:query-line}
    A line query in $\tree$ can be answered in $O(\log^2h)$ time.
\end{lemma}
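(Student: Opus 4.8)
The plan is to account for the work done by the query algorithm described just above, namely the three-query decomposition of the hippodrome-cap problem: (a) the points inside the circle $C$, (b) the points outside $C$, right of $\ell_2$, clockwise before $\hat p$, and (c) the symmetric set clockwise after $\hat p$. For each of (a)--(c) the analysis is essentially the same, so I would argue once, for case (b), and note that (a) is strictly easier (the common intersection definitely meets $\ell$, so a single ordinary line-intersection query on $O(\log h)$ associated structures suffices) and (c) is symmetric.

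First I would recall that the relevant points lie right of $\ell_2$, hence within the $O(1)$ subsequences of the partition that reach that far; by the same greedy/maximality argument used for the circular query, at most a constant number of trees $\tree_{st}$ are touched, so it is enough to bound the cost inside one such tree. Second, I would establish that locating the endpoints $p_i,p_j$ of the relevant run by binary search on $\convexhull{P}$ against $\ell_2$ and $C$ costs $O(\log h)$, and that the standard range decomposition of $[i,j]$ in $\tree_{st}$ yields $O(\log h)$ canonical subtrees whose roots store the (non-empty) common intersections $\MC_{i'j'}$. Third, and this is the crux, I would argue correctness and running time of the right-to-left sweep over those canonical subtrees: at each root we compute $\MC_{i'j'}\cap\ell$ in $O(\log h)$ time by binary search in the sorted arc sequence; if it is non-empty we are done (the left intersection point's supporting disk is the answer within that block, and by the monotonicity of the right endpoints of the intervals $I_i,\dots,I_j$ no block further left can beat it), and if it is empty we stop scanning further-left blocks but descend into the current one, repeatedly querying the right child and testing emptiness, which again terminates at a leaf after $O(\log h)$ steps. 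Summing, the algorithm visits $O(\log h)$ nodes and spends $O(\log h)$ per node, giving $O(\log^2 h)$.

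The main obstacle I expect is the correctness argument for why an empty intersection $\MC_{i'j'}\cap\ell=\emptyset$ certifies that the answer lies in $p_{i'},\dots,p_{j'}$ and not further left: this is exactly the interval-endpoint argument sketched in the text (an empty common intersection forces a left endpoint of some $I_l$ to sit among the right endpoints of that block, producing a concrete candidate that dominates everything further left since those blocks' right endpoints are still to be encountered). I would make this precise using the stated monotonicity properties of $I_i,\dots,I_j$ (lengths increasing, right endpoints sorted in index order, left endpoints unsorted), and then the descent-into-the-right-child step uses the identical argument one level down. Everything else — the $O(\log h)$ arc-intersection query on a sorted arc sequence, the constant number of subsequences, the binary searches for $p_i,p_j$ — is routine and can be stated without detailed calculation.
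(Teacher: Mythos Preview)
Your proposal tracks the paper's argument closely: the three-part decomposition, the constant number of touched subsequences, the $O(\log h)$ binary searches to locate $p_i$ and $p_j$, and the final accounting of $O(\log h)$ nodes at $O(\log h)$ each are all exactly as in the paper. There is, however, a genuine slip in your description of the right-to-left sweep over the canonical subtrees in case~(b). You write that when $\MC_{i'j'}\cap\ell\neq\emptyset$ ``we are done,'' since ``by the monotonicity of the right endpoints of the intervals $I_i,\dots,I_j$ no block further left can beat it.'' That inference does not hold: the candidate produced by a block is the disk whose \emph{left} endpoint on $\ell$ is rightmost within that block, and the paper explicitly notes that the left endpoints of the $I_l$ are \emph{not} sorted. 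A block with smaller indices may therefore contain a disk whose left endpoint lies further right than anything in the current block. Accordingly, the paper does not stop at the first non-empty block; it records the candidate from each non-empty block and continues leftward until it reaches the first block with $\MC_{i'j'}\cap\ell=\emptyset$ --- only there does the right-endpoint monotonicity bite, certifying that no block further left can hold the answer --- and then descends into that one block. The overall answer is the best among all collected candidates (including the one produced by the descent). This correction does not affect the $O(\log^2 h)$ bound, since you still touch $O(\log h)$ canonical roots plus one descent path of depth $O(\log h)$, each node costing $O(\log h)$; but the query procedure as you describe it can return the wrong point.
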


\section{Stable Representative Segments for Moving Points}\label{sec:moving points}

In this section, we consider maintaining representative segment $q_1q_2$ while the points in $P$ move. We first show what can happen if we would maintain the optimal solution explicitly under continuous motion of the points.

There are examples where a representative segment exists for a value of $r$ for an arbitrarily short duration. This can happen because one point moves towards a hippodrome shape and another point moves out of it, giving a brief moment with a valid hippodrome. The same effect can be caused by a single linearly moving point that grazes the hippodrome at a join point. These examples are illustrated in Figure~\ref{fig:kinetic-ex}(a).
It can also happen that a minor oscillating movement of a point causes a quick sequence of changes between a valid and no valid segment, see Figure~\ref{fig:kinetic-ex}(b).

\begin{figure}
    \centering
    \includegraphics{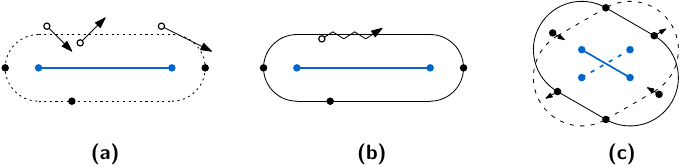}
    \caption{Examples of representative segments (blue) for moving points. \textbf{\textsf{(a)}} shows that a representative segment may flicker arbitrarily briefly due to the motion of two points (top left) or one point (top right). \textbf{\textsf{(b)}} shows that the representative segment may flicker due to minor oscillations in a movement path. \textbf{\textsf{(c)}} Small movements can trigger a discrete change of the representative segment.}
    \label{fig:kinetic-ex}
\end{figure}

Now, consider the point set~$P$ in which the points form a regular $k$-gon (see Figure~\ref{fig:kinetic-ex}(c)). When $r$ is equal to half the width of the $k$-gon, then we can force a discrete change in the placement of $q_1$ and $q_2$, with very slow movement of points in~$P$. In this case there is always a valid representative segment, but its endpoints make a jump.

We see that continuously maintaining the optimal solution has two undesirable artifacts: (1) the existence and non-existence of a solution can be arbitrarily short, leading flicker in a visualization, and (2) a solution may jump to a new solution, leading to infinitely high speeds of the endpoints of the segment even if the points themselves move slowly.

Our goal is therefore to ensure that the segment movement is \emph{stable} over time: We want $q_1$ and $q_2$ to move continuously and with bounded speed, preventing discrete or (near) instantaneous changes. We accomplish this as follows. First, we will sample the positions of the moving points only at integer moments, and then decide immediately to show or not show a solution in the next time unit. This implies that existence and non-existence of a solution lasts at least one time unit, and we avoid flickering.
Second, we make the assumption that the maximum speed of the points is unit. We have to make some bounded speed assumption otherwise we cannot hope to get a bounded speed of the endpoints either.
Third, we allow more flexibility in when we have a solution. We do this using a less strict regime on $r$, and by assuming that $r\geq 1$. Whenever the real solution exists (with the actual $r$), then we guarantee that we also have a solution. Whenever there is no solution even for a radius of $2\sqrt{2}\cdot r +4$, then we never give a solution. When the radius is in between these bounds, we may have a solution or not.
Our algorithm can then ensure that the speeds of the endpoints are bounded, and the length of our chosen segment always approximates the true optimum (when it exists), at any moment in time, also between the integer sampling moments.

So we assume that a point $p \in P$ is described by a trajectory that is sampled at integer timestamps. That is, each point in $P$ is described by $p(i) \rightarrow \mathbb{R}^2$ where $i \in \mathbb{Z}$ is the timestamp. 
We also assign each endpoint of the representative segment a position as a function of $t\in \mathbb{R}$, which means that the representative segment at time $t$ is now defined by $q_1(t)q_2(t)$. Furthermore, we use $D(t)$ and $W(t)$ as the \emph{diameter} and \emph{width} of the point set, which are respectively the maximum pairwise distance of points in $P$ and the width of the thinnest strip containing~$P$. Let $d_1(t),d_2(t)\in P$ be a pair of points defining the diameter. Lastly, we also use the \emph{extent} $E(t)$ of $P$ in the direction orthogonal to the line segment $d_1(t) d_2(t)$. For all of the above definitions, we omit the dependence on $t$ when it is clear from the context.

In the following, we describe how to specify $q_1(t)$ and $q_2(t)$ such that they move with bounded speed, and such that the length of the segment $q_1(t)q_2(t)$ as well as the proximity of the segment to $P$ can be bounded at any time $t$. In particular, we define such a segment $q_1(t)q_2(t)$ as an \emph{approximating} segment, and prove that the length of an optimal representative segment is approximated by an additive term~$l$, and at the same time the maximum distance from any point in $P$ to the segment $q_1(t)q_2(t)$ is at most $h\cdot r$, for some constant~$h$. 
%These assumptions influence the values $l$ and $h$ as well as the maximum speed of $q_1(t)$ and $q_2(t)$; there are options for trade-offs. %\jules{I think we are essentially creating a trade-off between $r$, the movement speed of the points, and our (integer) time steps: In most of our discussions we use that, in one time step, our points move at most $r$. This leads to the following trade-offs: If we increase $r$ (worse solution quality and faster moving segment), our time steps can increase (less flickering). Or otherwise, if points move faster, we have to shorten the time steps to ensure $r$ movement per time step, and thus we get more flickering. We could also leave the same time steps, but increase~$r$, which results in worse solution quality and a faster moving segment. It makes sense that in erratic inputs (faster moving points) our quality metrics deteriorate, and this shows we get to choose which part of the trade-off takes the hit.} \marc{As there are several interrelated values ($h$, speed, sampling rate), we can use "for ease of description we assume that the maximum speed of points is $1$, we sample at every integer time step, and $r\geq 1$. These assumptions influence the values $l$ and $h$ as well as the maximum speed of $q_1(t)$ and $q_2(t)$; there are options for trade-offs."}

\begin{figure}
    \centering
    \includegraphics{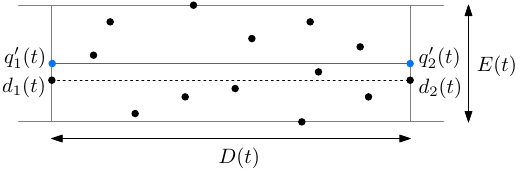}
    \caption{The prospective segment $q_1' q_2'$ in relation to $D$, $E$, and diametrical line $d_1 d_2$ at time~$t$.}
    \label{fig:kinetic-def}
\end{figure}

\subparagraph{Algorithm.} Our algorithm $A(t)$ is \emph{state-aware}. This means that the output of $A(t)$ is dependent only on the input at or before time $t$, but it has no knowledge of the input after time $t$. At every integer timestamp $i \in \mathbb{Z}$, we compute a \emph{canonical solution} $q'_1(i)q'_2(i)$. The endpoints $q'_1(i)$ and $q'_2(i)$ of this canonical solution are placed on the lines orthogonal to the diametric line through $d_1(i)$ and $d_2(i)$, respectively, such that $q'_1(i)q'_2(i)$ lies in the middle of the narrowest strip containing $P$ in the diametric orientation, see Figure~\ref{fig:kinetic-def}.

Our algorithm is now a special kind of state-aware algorithm called a \emph{chasing algorithm}~\cite{meulemans2019stability}: At at every integer time step $i \in \mathbb{Z}$, the algorithm computes the canonical solution $q'_1(i)q'_2(i)$ and then linearly interpolates from $q'_1(i-1)q'_2(i-1)$ to $q'_1(i)q'_2(i)$, arriving there at time $i+1$. At that point, $q'_1(i+1)q'_2(i+1)$ can be computed, and we continue in a similar manner.

However, if the maximum distance from $P$ to the canonical solution becomes too large, we no longer want the algorithm to output any solution. In this case, no (optimal) representative solution exists, and we do not produce an approximating segment either.

Formally, for any timestamp $t \in (i, i+1)$, we linearly interpolate $q_1$ and $q_2$ between their previous canonical placements as follows. We define $\alpha = (t-i)$ and set
$q_j(t) = \alpha \cdot q'_j(i-1) + (1 - \alpha) \cdot q'_j(i),\mbox{ for $j\in\{1,2\}$.}$
%\[q_2(t) = \alpha \cdot q'_2(i-1) + (1 - \alpha) \cdot q'_2(i).\]
Then, the output of our algorithm is $A(t)= 
    q_1(t)q_2(t)$ if  $E(\lfloor t \rfloor) \leq 2r\sqrt{2} + 2$ and $
    \varnothing$ otherwise.

%The first observation we can make is that the segment computed by our algorithm prevents flickering between states with or without a solution. This contrasts our earlier observation that the (optimal) representative segment can flicker between these states.
%\begin{observation}
%    Algorithm $A$ swaps between $q_1q_2$ and $\varnothing$ only at integer time steps.
%\end{observation}

The above algorithm yields a number of bounds, on the values of $W(t)$ for which algorithm $A$ outputs $q_1q_2$/$\varnothing$, the speed of the endpoints $q_1$ and $q_2$, the length of $q_1q_2$, and the distance to $P$ separately. We prove these bounds in Lemmata~\ref{lem:exist-lowerbound}--\ref{lem:hausdorff-bound}. First, we relate the extent $E$ of $P$ to its width $W$ in the following lemma.

\begin{restatable}{lemma}{widthExtent}
    \label{lem:width-extent}
    At any time $t$, $W(t) \leq E(t) \leq W(t)\sqrt{2}$.
\end{restatable}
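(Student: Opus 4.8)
The lower bound $W\le E$ is immediate: $E$ is the extent of $P$ in one particular direction, namely the one orthogonal to $d_1d_2$, whereas $W$ is by definition the minimum extent of $P$ over all directions.

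For the upper bound I would set up a bounding box aligned with the diameter. Rotate the plane so that $d_1d_2$ is horizontal, with $d_1=(0,0)$ and $d_2=(D,0)$. Every point $p\in P$ then has $x$-coordinate in $[0,D]$: if $p$ had $x<0$ it would be farther than $D$ from $d_2$, and if $x>D$ it would be farther than $D$ from $d_1$, either way contradicting that $d_1d_2$ realizes the diameter. Hence the horizontal extent of $P$ is exactly $D$, the vertical extent of $P$ is exactly $E$, and $P$ lies in an axis-parallel $D\times E$ box.

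Next I would bring in the width. Let $\Sigma$ be a minimum-width strip of $P$, of width $W$, and let $\phi\in[0,\pi/2]$ be the acute angle between the bounding lines of $\Sigma$ and the horizontal axis. Since $d_1,d_2\in P\subseteq\Sigma$, the horizontal segment $d_1d_2$ of length $D$ lies inside the convex strip $\Sigma$, so its extent perpendicular to $\Sigma$, which equals $D\sin\phi$, is at most $W$; that is, $D\sin\phi\le W$. Now enclose $P$ in a rectangle whose sides are parallel and perpendicular to $\Sigma$: the perpendicular side has length $W$, and the side parallel to $\Sigma$ has length $L$, the extent of $P$ along $\Sigma$, which satisfies $L\le D$ since no extent of $P$ exceeds the diameter. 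The direction orthogonal to $d_1d_2$ makes angle $\phi$ with the short ($W$-)side of this rectangle, so projecting the rectangle onto that direction gives $E\le L\sin\phi+W\cos\phi\le D\sin\phi+W\cos\phi$.

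It then remains to turn the relations $E\le D\sin\phi+W\cos\phi$ and $D\sin\phi\le W$ (together with $W\le D$, which always holds because a strip perpendicular to $d_1d_2$ has width $D$) into the stated bound: substitute $a=D\sin\phi\in[0,W]$, write $\cos\phi=\sqrt{1-a^2/D^2}$, and maximize $a+W\sqrt{1-a^2/D^2}$ over the feasible range of $a$ and over $D\ge W$. Carrying out this one-variable optimization carefully — and in particular verifying that the configurations that make $D$ large and $d_1d_2$ nearly parallel to $\Sigma$ cannot push the orthogonal extent past the claimed factor $\sqrt2$ — is the only real obstacle; everything else is projection bookkeeping together with the elementary diameter inequality.
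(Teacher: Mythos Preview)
Your lower bound and your projection set-up are fine, but the optimization you leave for the end does \emph{not} produce the factor $\sqrt{2}$. With $a=D\sin\phi\in[0,W]$ and $D\ge W$ you are maximizing
\[
f(a,D)=a+W\sqrt{1-a^{2}/D^{2}},
\]
and taking $a=W$ and $D\to\infty$ gives $f\to 2W$, not $W\sqrt{2}$. So the inequalities you extracted --- $E\le D\sin\phi+W\cos\phi$, $D\sin\phi\le W$, and $W\le D$ --- are only strong enough to conclude $E<2W$. You flagged exactly the right worry (``configurations that make $D$ large and $d_1d_2$ nearly parallel to $\Sigma$''), and that worry is fatal for the stated bound: no extra bookkeeping will close the gap.

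The reason is that the lemma is false as stated. Take $P=\{(0,0),(2,0),(0,1),(2,1)\}$. The diameter is $D=\sqrt{5}$, realized by either diagonal; the width is $W=1$, realized by the horizontal strip $0\le y\le 1$. Taking $d_1=(0,0)$, $d_2=(2,1)$ and projecting onto the orthogonal direction $(-1,2)/\sqrt{5}$ gives projections $0,-2/\sqrt{5},2/\sqrt{5},0$, so $E=4/\sqrt{5}\approx 1.789>W\sqrt{2}$. The paper's own argument reduces to the symmetric four-point configuration and derives $E/W=2\cos(\alpha/2)$ with $\sin\alpha=E/D$; it then asserts this is maximized at $E=D$, but $2\cos(\alpha/2)$ is \emph{decreasing} in $\alpha$, so $E=D$ is where the ratio is \emph{minimized} (equal to $\sqrt{2}$), while letting $\alpha\to 0$ drives the ratio toward $2$. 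In short, your inequalities are tight for the true statement $W\le E<2W$; the $\sqrt{2}$ is an error in the paper, not a defect in your approach.
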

\begin{proof}
    Consider a diametrical pair $d_1,d_2\in P$ at some time~$t$. The line segment $d_1d_2$ has length~$D$, and the extent orthogonal to the line segment has length~$E$. First observe that the width $W$ of~$P$ at time~$t$ is at most~$E$, as the thinnest strip in the orientation of $d_1d_2$ has width~$E$. To complete the proof, we show an upper bound of~$\sqrt{2}$ on the ratio~$E/W$. Note that this ratio is maximized when $W$ is as low as possible with respect to~$E$.

    To prove the upper bound, we first consider the case where~$P$ is rather symmetric, and later show that in other cases the proved upper bound cannot be exceeded. For the symmetric case, see Figure~\ref{fig:width-extent}(a). Let $d_1d_2$ be centered in the thinnest strip in the orientation of~$d_1d_2$. Since this strip cannot be thinner, there must be at least one point located on each line bordering the strip. Since the diameter of~$P$ is~$D$, no two points can be further than~$D$ apart. This limits the position of the points~$p_l,p_r$ on the boundaries of the strip, with respect to~$d_1,d_2$ and each other. To minimize the width of~$P$, we place $p_r$ closer to~$d_1$ and $p_l$ closer to~$d_2$, symmetrically, such that the line segment $p_lp_r$ also has length~$D$. As a result, $d_1d_2$ and $p_lp_r$ bisect each other and the angle~$\alpha$ between $d_1d_2$ and $p_lp_r$ is equal to $\arcsin{\frac{E}{D}}$.

    To compute the width~$W$ of~$P$ in this case, observe that the lines through~$d_1$ and~$p_l$ and through~$d_2$ and~$p_r$ are parallel (because of symmetry). We can now compute the width~$W$ as the (orthogonal) distance between those parallel lines.\\
    \begin{align*}
        W &= D\cdot \sin(\alpha / 2) & \\
        &= D\cdot \sqrt{\frac{1-\cos{\alpha}}{2}} & \text{(half-angle identity)}\\
        &= D\cdot \sqrt{\frac{1-\sqrt{1-\sin^2{\alpha}}}{2}}  & \text{(Pythagoras)}\\
        &= D\cdot \sqrt{\frac{1-\sqrt{1-(E/D)^2}}{2}}  & (\arcsin{\frac{E}{D}})\\
        &= \sqrt{\frac{D^2-\sqrt{D^2-E^2}}{2}}  &
    \end{align*}
    By definition $E\leq D$, thus we maximize~$E/W$ by setting $E=D$, resulting in $E/W=\sqrt{2}$.

    Finally, consider a non-symmetric case: $d_1d_2$ is not centered in the strip of width~$E$ and/or $p_l$ and $p_r$ are not placed symmetrically. Still $d_1$ and $d_2$ are horizontally aligned, to be exactly distance~$D$ apart, and $p_r$ is closer to~$d_1$, while $p_l$ is closer to~$d_2$ (see Figure~\ref{fig:width-extent}(b)). To minimize the width, $p_l$ and $p_r$ are also as far apart as possible, and hence exactly at distance~$D$ from each other. Since $d_1d_2$ and $p_lp_r$ no longer bisect each other, the strip of the symmetric case can no longer contain all the points: If the strip is centered on $p_lp_r$, then either $d_1$ or $d_2$ will be outside the strip, as in Figure~\ref{fig:width-extent}(b), and an analogous reasoning holds for $d_1d_2$. Finally, the strip of the symmetric case fits both $d_1d_2$ and $p_lp_r$ individually, but a strip in any other direction will have to accommodate one of both line pieces in a more orthogonal orientation, and thus has to be wider.
\end{proof}

\begin{figure}
    \centering
    \includegraphics{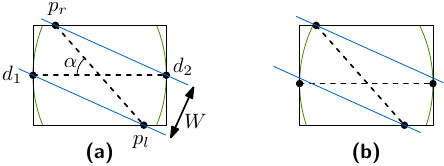}
    \caption{Constructions for Lemma~\ref{fig:width-extent}. Segments $d_1d_2$ and $p_lp_r$ are \textbf{\textsf{(a)}} symmetric or \textbf{\textsf{(b)}} off-center with respect to this bounding box. The (blue) thinnest strip in \textbf{\textsf{(a)}} is also shown in~\textbf{\textsf{(b)}}.}
    \label{fig:width-extent}
\end{figure}

We can now prove for which values of $W(t)$ algorithm $A$ is guaranteed to output $q_1q_2$, and when it is guaranteed to output $\varnothing$.

\begin{restatable}{lemma}{existLowerBound}
\label{lem:exist-lowerbound}
     At any time $t$, if $W(t) \leq 2r$ then $A(t) = q_1(t)q_2(t)$.
\end{restatable}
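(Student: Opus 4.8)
The output $A(t)$ is a segment exactly when $E(\lfloor t\rfloor)\le 2r\sqrt2+2$, so it suffices to show that the hypothesis $W(t)\le 2r$ forces $E(\lfloor t\rfloor)\le 2r\sqrt2+2$. The plan is to go from $E$ back to $W$ at the sampled timestamp using Lemma~\ref{lem:width-extent}, and then from the continuous time $t$ to the integer time $\lfloor t\rfloor$ using the unit-speed assumption on the points.

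First I would invoke Lemma~\ref{lem:width-extent} at time $\lfloor t\rfloor$ to get $E(\lfloor t\rfloor)\le \sqrt2\,W(\lfloor t\rfloor)$, so the task reduces to bounding $W(\lfloor t\rfloor)$ in terms of $W(t)$. Since $0\le t-\lfloor t\rfloor<1$ and every point of $P$ moves with speed at most $1$, each point is displaced by less than $1$ between times $\lfloor t\rfloor$ and $t$. Taking a strip of width $W(t)$ that encloses $P$ at time $t$ and inflating it by this displacement bound on each side produces a strip enclosing $P$ at time $\lfloor t\rfloor$, whence $W(\lfloor t\rfloor)\le W(t)+2\le 2r+2$. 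Chaining the two estimates yields $E(\lfloor t\rfloor)\le \sqrt2\,(W(t)+2)$, which I would then compare against the threshold $2r\sqrt2+2$. (When $t$ is itself an integer, $\lfloor t\rfloor=t$, no displacement slack is incurred, and $E(t)\le\sqrt2\,W(t)\le 2r\sqrt2$ settles the case directly.)

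The delicate point is precisely this final comparison: the $\sqrt2$ factor from Lemma~\ref{lem:width-extent} multiplies not only $2r$ but also the speed slack, so the naive chain gives $\sqrt2\,(2r+2)=2r\sqrt2+2\sqrt2$, and squeezing this below $2r\sqrt2+2$ is tight. I expect the proof to need the slack attributed more carefully rather than through the $\sqrt2$ blow-up — e.g.\ observing that it is only the enclosing strip of $P(t)$ that must be inflated, so the additive constant enters without being scaled, or exploiting that the displacement is \emph{strictly} less than one per time unit over the half-open interval. The rest of the argument (unfolding the definition of $A$, applying the width–extent inequality, and the strip-inflation bound) is routine; the constant bookkeeping in this last step is the only real obstacle.
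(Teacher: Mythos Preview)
Your chain really does fail at the last step, and the fixes you float do not close the gap. Applying Lemma~\ref{lem:width-extent} at time $\lfloor t\rfloor$ and then bounding $W(\lfloor t\rfloor)$ from $W(t)$ forces the $\sqrt{2}$ factor onto the additive slack, yielding $E(\lfloor t\rfloor)\le 2r\sqrt{2}+2\sqrt{2}$; neither the strict inequality on displacement nor the strip-inflation picture recovers the missing $2\sqrt{2}-2$. The suggestion that ``the additive constant enters without being scaled'' is the right instinct but is not realized by any of the mechanisms you list.

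The actual fix is simply to reverse the order of the two steps: apply Lemma~\ref{lem:width-extent} at time $t$, where the hypothesis lives, and then translate $E$ (not $W$) across the time interval. From $W(t)\le 2r$ one gets $E(t)\le W(t)\sqrt{2}\le 2r\sqrt{2}$ directly; since the extent $E$ changes by at most $2$ per unit time under unit-speed motion, $E(\lfloor t\rfloor)\le E(t)+2\le 2r\sqrt{2}+2$, which is exactly the threshold. This is what the paper does (phrased as a contradiction argument). The moral is that the multiplicative $\sqrt{2}$ and the additive $+2$ commute only in one order, and you picked the wrong one.
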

\begin{proof}
    Let $t \in [i, i+1]$ for some $i \in \mathbb{Z}$. For the sake of contradiction, assume $W(t) \leq 2r$ but $A(t) = \varnothing$. By definition, this means that $E(\lfloor t \rfloor) > 2r\sqrt{2} + 2$. Since the points in $P$ have at most unit speed, observe that $E$ can change by at most $2$ per time unit. As such, we get 
    \begin{equation*}
        2r\sqrt{2} < E(\lfloor t \rfloor) - 2 \leq E(t)
    \end{equation*}
    But then, by Lemma~\ref{lem:width-extent}, we have
    \begin{equation*}
        2r\sqrt{2} < E(t) \leq W(t)\sqrt{2}
    \end{equation*}
    which implies $2r < W(t)$. We have reached a contradiction, and the lemma follows.
\end{proof}

\begin{restatable}{lemma}{existUpperBound}
    \label{lem:exist-upperbound}
    At any time $t$, if $W(t) > 2r\sqrt{2} + 4$ then $A(t) = \varnothing$. 
\end{restatable}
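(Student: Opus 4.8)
The plan is to mirror the proof of Lemma~\ref{lem:exist-lowerbound}, essentially as its contrapositive counterpart. By the definition of $A$, we have $A(t) = \varnothing$ exactly when $E(\lfloor t\rfloor) > 2r\sqrt2 + 2$, so it suffices to show that the hypothesis $W(t) > 2r\sqrt2 + 4$ forces $E(\lfloor t\rfloor) > 2r\sqrt2 + 2$.

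First I would propagate the width bound from time $t$ to the sampling moment $\lfloor t\rfloor$. For any fixed direction, the extent of $P$ in that direction is $2$-Lipschitz in time, since every point of $P$ moves with speed at most $1$; as the width $W$ is the minimum of these extents over all directions, $W$ is $2$-Lipschitz as well. Because $|t - \lfloor t\rfloor| \le 1$, this gives $W(\lfloor t\rfloor) \ge W(t) - 2 > 2r\sqrt2 + 2$.

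Then I would convert this into an extent bound at the same moment $\lfloor t\rfloor$ using only the easy half of Lemma~\ref{lem:width-extent}, namely $W \le E$. This yields $E(\lfloor t\rfloor) \ge W(\lfloor t\rfloor) > 2r\sqrt2 + 2$, and hence $A(t) = \varnothing$ by the definition of the algorithm.

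I do not anticipate a genuine obstacle; the argument is symmetric to that of Lemma~\ref{lem:exist-lowerbound} and loses the same additive $2$ over one time unit, which is precisely the gap between the thresholds $2r\sqrt2 + 4$ and $2r\sqrt2 + 2$. The only point that needs mild care is which quantity to carry across the one-unit time gap: it is cleanest to invoke the $2$-Lipschitz continuity of $W$ (a minimum of $2$-Lipschitz extents) together with the trivial inequality $W \le E$, rather than the $2$-Lipschitz continuity of $E$ used in Lemma~\ref{lem:exist-lowerbound}, since the diametric direction defining $E$ can in principle change abruptly. One should also double-check the orientation of every inequality, so that the slack is absorbed rather than doubled.
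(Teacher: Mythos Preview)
Your proof is correct and follows essentially the same outline as the paper's: both combine the definition of $A$, a $2$-Lipschitz bound across the gap $|t-\lfloor t\rfloor|\le 1$, and the inequality $W\le E$ from Lemma~\ref{lem:width-extent}. The only difference is which quantity you transport in time---you carry $W$ from $t$ to $\lfloor t\rfloor$ and then apply $W\le E$ there, whereas the paper carries $E$ from $\lfloor t\rfloor$ to $t$ and applies $W\le E$ at $t$; your version neatly sidesteps the concern you raise about the diametric direction (and hence $E$) possibly changing abruptly, which the paper's argument uses without comment.
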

\begin{proof}
    Let $t \in [i, i+1]$ for some $i \in \mathbb{Z}$. For the sake of contradiction, assume $W(t) > 2r\sqrt{2}+4$ but $A(t) = q_1(t)q_2(t)$. By definition, this means that $E(\lfloor t \rfloor) \leq 2r\sqrt{2}+2$. Since the points in $P$ have at most unit speed, $E$ can change by at most 2 per time unit. As such, we get 
    \begin{equation*}
        E(t) \leq E(\lfloor t \rfloor) + 2 \leq 2r\sqrt{2}+4
    \end{equation*}
    But then, by Lemma~\ref{lem:width-extent}, we have
    \begin{equation*}
        W(t) \leq E(t) \leq 2r\sqrt{2}+4
    \end{equation*}
    We have reached a contradiction, and the lemma follows.
\end{proof}

Next, we prove an upper bound on the speed on $q_1$ and $q_2$, as well as on the length of $q_1q_2$ and its distance to $P$.

\begin{restatable}{lemma}{speedBound}
    \label{lem:speed-bound}
    While $A(t) = q_1(t)q_2(t)$, endpoints $q_1$ and $q_2$ move continuously with a speed bounded by $(2r+1)\sqrt{2}+2$.
\end{restatable}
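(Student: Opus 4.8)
The plan is to bound the speed of each endpoint $q_j(t)$ by bounding the distance it travels during one unit time interval, since $q_j$ moves linearly on $[i,i+1]$ from $q_j'(i-1)$ to $q_j'(i)$ and hence at constant speed equal to $\|q_j'(i) - q_j'(i-1)\|$. So the core task is to show $\|q_j'(i) - q_j'(i-1)\| \le (2r+1)\sqrt2 + 2$ whenever the algorithm is outputting a segment at times near $i$. Continuity follows because consecutive linear segments of the trajectory agree at the integer timestamps by construction (the interpolation arrives at $q_j'(i)$ exactly at time $i+1$), so I would state that first as an immediate consequence of the definition of the chasing algorithm.

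For the displacement bound, I would decompose the motion of the canonical endpoint $q_j'$ between consecutive integer times into (a) the motion of the diametric point $d_j$ it is attached to, and (b) the shift of $q_j'$ along its supporting line (orthogonal to the diametric direction) needed to recenter it in the narrowest strip. For (a): the diametric pair can change, but in any case the relevant quantity is how far the point $d_j(i-1)$ (as a location) can be from $d_j(i)$; since all points move with unit speed, any fixed point moves at most $1$ per unit time, but the \emph{identity} of the diametric endpoint may jump. I would argue that even when the diametric pair changes, the canonical segment's endpoints are pinned to the two extreme supporting lines orthogonal to the (possibly rotating) diametric direction, so the controlling quantities are the change in $D$ (at most $2$ per time unit, since both defining points move at most $1$), the change in the diametric orientation, and the change in $E$ (at most $2$ per time unit). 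For (b): the recentering offset along the orthogonal line is at most $E/2 \le W\sqrt2/2$ by Lemma~\ref{lem:width-extent}, and while $A$ is outputting a segment we have $E(\lfloor t\rfloor)\le 2r\sqrt2+2$, hence $W \le E \le 2r\sqrt2 + 2$ near both endpoints of the interval; combining the two axes of displacement via the triangle inequality (length change along one axis at most $2$, plus positional drift of at most $2r+1$ or so along each axis) and a Pythagorean combination yields the $(2r+1)\sqrt2 + 2$ bound.

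Concretely, I would set up coordinates aligned with the diametric direction at time $i$, write $q_j'(i) - q_j'(i-1)$ as a sum of a vector along the diametric line and a vector orthogonal to it, bound the orthogonal component by (half the change in extent) $+$ (drift due to the point $d_j$ moving) $\le 1 + (r\sqrt2 + 1) = r\sqrt2 + 2$-ish, and bound the along-diameter component by the drift of $d_j$ plus the change in $D/2$, i.e.\ at most $1 + 1 = 2$; then $\|q_j'(i)-q_j'(i-1)\| \le \sqrt{(\text{orthogonal})^2 + (\text{parallel})^2}$, and I would check that this is at most $(2r+1)\sqrt2+2$. I expect the main obstacle to be handling the case where the diametric pair \emph{changes identity} between $i-1$ and $i$: one must argue that such a change cannot move the canonical endpoints by more than the stated amount, which requires relating the old and new diametric segments (both have length close to $D$, differing by at most $2$, and both fit inside a strip of width $\le E \le 2r\sqrt2+2$), so the two candidate canonical segments are geometrically close. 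I would handle this by noting that at a moment when the diametric pair switches, both candidate pairs realize (nearly) the same diameter and both are enclosed in the same bounded-width strip, pinning their midpoints and endpoints to within the claimed distance; the unit-speed assumption then propagates this to the full time-step displacement.
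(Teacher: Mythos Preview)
Your continuity observation is fine. The gap is in the displacement bound: your parallel/perpendicular decomposition relative to the diametric direction at time $i$ only works cleanly when that direction does not change, and your fallback for the case where the diametric pair switches identity --- ``both candidate pairs realize (nearly) the same diameter and both are enclosed in the same bounded-width strip, pinning their midpoints and endpoints to within the claimed distance'' --- is not a proof. Two diametric segments of length $\approx D$ inside a common strip of width $E$ can have corresponding short-side midpoints up to roughly $E$ apart, and you have not shown how this combines with the unit-speed drift over one time step to yield exactly $(2r+1)\sqrt2+2$; the specific numbers in your sketch (orthogonal component $\le r\sqrt2+2$, parallel component $\le 2$) are unjustified once the orientation rotates, since ``drift due to $d_j$ moving'' is no longer bounded by $1$ when $d_j$ changes identity.

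The paper sidesteps the orientation issue entirely by routing a triangle inequality through the actual point $p = d_1(i)\in P$. The bound is
\[
\|q_1'(i)-q_1'(i{+}1)\| \;\le\; \underbrace{\|q_1'(i)-p\|}_{\le E(i)/2} \;+\; \underbrace{\operatorname{dist}\bigl(p,\ \text{minor axis at }i{+}1\bigr)}_{\le \sqrt2} \;+\; \underbrace{E(i{+}1)/2}_{\text{to }q_1'(i{+}1)},
\]
where the outer terms use that $q_1'$ sits at the midpoint of a minor axis of length $E$, and the middle $\sqrt2$ is the real geometric content: since each side of the time-$(i{+}1)$ bounding box carries a point of $P$ that was inside the time-$i$ box one step earlier and moved at most $1$, the new minor axis must come within $\sqrt2$ of $p$'s old position (the extremal case being when $p$ only reaches a major axis of the new box, forcing the new minor axis to meet a unit square at $p$). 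Summing gives $E+\sqrt2 \le (2r\sqrt2+2)+\sqrt2 = (2r+1)\sqrt2+2$, using that $E\le 2r\sqrt2+2$ at both timestamps because the algorithm is outputting a segment there. No coordinate decomposition, no case split on whether the diametric pair changed --- that is the idea your plan is missing.
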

\begin{proof}
    To bound the speed with which $q_1$ (and symmetrically $q_2$) moves, we bound the distance between $q_1'(i)$ and $=q_1'(i+1)$, for $i\in\mathbb{Z}$. First observe that the diametric pair of points in~$P$ determining the orientation of $q_1'(i)q_2'(i)$ may not be the diametric pair at time~$i+1$, and this new diametric pair can therefore be located anywhere in the width~$E$ strip in the orientation of~$q_1'(i)q_2'(i)$. In a single time step, all point can move at most one unit outside this strip, and past the lines orthogonal to this strip through $q_1'(i)$ and $q_2'(i)$ (see Figure~\ref{fig:kinetic-speed}a). We bound the distance between~$q_1'(i)$ and~$q_1'(i+1)$ by half the lengths of the minor axes (remember, $q_1'(i)$ and~$q_1'(i+1)$ are located in the middle of the respective minor axes) plus the distance between the minor axes of the bounding boxes in the orientation of~$q_1'(i)q_2'(i)$ and~$q_1'(i+1)q_2'(i+1)$.
    
    Let~$p$ be a point at time~$i$ that is located on the minor axis of the time-$i$ bounding box, and consider the bounding box at time~$i+1$ (in the direction of $q_1'(i+1)q_2'(i+1)$). A point must be located on each axis of the bounding box at time~$t+1$, as otherwise it would not be a bounding box. Thus, each axis must have at least one point at distance at most one from the bounding box at time~$t$. All points lie in this bounding box at time~$i+1$, so the point~$p$ has to move at least to the boundary of the box at time~$t+1$. If it reaches the minor axis, the distance between the minor axes is one, so assume $p$ reaches only the major axis of the box at time~$t+1$ (see Figure~\ref{fig:kinetic-speed}b). Since the minor axis at time~$t+1$ has to be at distance at most one from the boundary of the box at time~$t$, it can be at distance at most~$\sqrt{2}$ from~$p$.

    \begin{figure}
        \centering
        \includegraphics{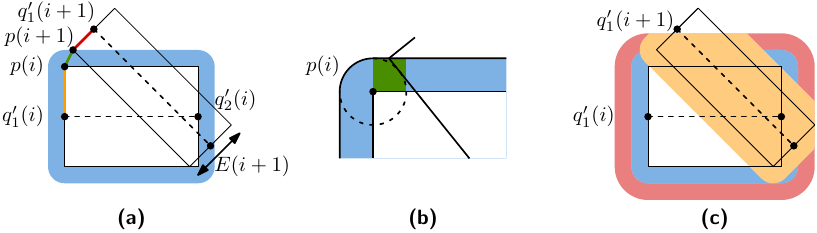}
        \caption{Constructions for Lemma~\ref{lem:speed-bound} and~\ref{lem:hausdorff-bound}. \textbf{\textsf{(a)}} All points are inside the (horizontal) bounding box at time~$i$ and hence cannot have moved outside the blue border at time~$i+1$. We bound the length of the three colored line pieces to bound the speed of~$q_1$. \textbf{\textsf{(b)}} When a point~$p$ on the minor axis of the horizontal box reaches the major axis of the box at~$i+1$, the minor axis of the same box must touch the green square. \textbf{\textsf{(c)}} At time~$i+2$ all points cannot have moves outside the width-2 red border around the box at time~$i$, and at most one unit outside the box at time~$i+1$ (yellow).}
        \label{fig:kinetic-speed}
    \end{figure}

    Finally, we add up all the distances. When both $q_1'(i)$ and~$q_1'(i+1)$ are used as outputs of our algorithm, neither~$A(i+1)$ nor~$A(i+2)$ is $\varnothing$. Thus, the minor axes are upper bounded by $E\leq 2r\sqrt{2}+2$, and the distance between~$q_1'(i)$ and~$q_1'(i+1)$ is upper bounded by~$2*\frac{E}{2}+\sqrt{2}=(2r+1)\sqrt{2}+2$.
\end{proof}

\begin{figure}
    \centering
    \includegraphics{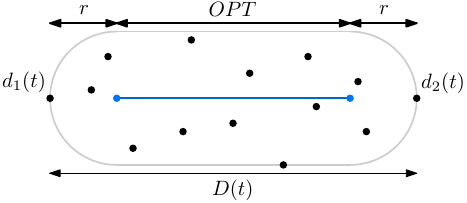}
    \caption{An example of the shortest possible representative segment.}
    \label{fig:OPT2r}
\end{figure}

\begin{restatable}{lemma}{lengthBound}
    \label{lem:length-bound}
    At any time $t$, if there exists a representative segment of $P$ with radius $r$ and $A(t) = q_1(t)q_2(t)$ then $|q_1(t)q_2(t)| \leq \OPT(t) + 2r + 4$, where $\OPT(t)$ is the length of the shortest representative segment of $P$ with radius $r$ at time $t$.
\end{restatable}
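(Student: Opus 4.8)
The plan is to bound $|q_1(t)q_2(t)|$ by the diameter $D$ of $P$ at a nearby integer time, and then use the hypothesized representative segment to bound that diameter in terms of $\OPT(t)$.

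First I would record that each canonical segment $q_1'(i)q_2'(i)$ has length exactly $D(i)$: by construction its endpoints lie on the two lines orthogonal to the diametric line $d_1(i)d_2(i)$ through $d_1(i)$ and $d_2(i)$, and the segment itself is parallel to $d_1(i)d_2(i)$, so its orthogonal projection onto the diametric direction is precisely the segment $d_1(i)d_2(i)$. Next, for $t\in[i,i+1]$ the points $q_1(t)$ and $q_2(t)$ are obtained by the same convex combination (with weight $\alpha=t-i$) of the endpoints of the two consecutive canonical segments at times $i-1$ and $i$, so the vector $q_2(t)-q_1(t)$ equals the corresponding convex combination of $q_2'(i-1)-q_1'(i-1)$ and $q_2'(i)-q_1'(i)$. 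Since the norm of a convex combination of vectors is at most the maximum of their norms (triangle inequality), this gives $|q_1(t)q_2(t)|\le\max\{D(i-1),\,D(i)\}$.

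Second I would bound $D(i-1)$ and $D(i)$ by $D(t)$. Each point of $P$ moves with speed at most $1$, so the distance between any two points changes by at most $2$ per unit time; hence $D$ is $2$-Lipschitz. As $t\in[i,i+1]$ we have $|t-i|\le 1$ and $|t-(i-1)|\le 2$, so $D(i)\le D(t)+2$ and $D(i-1)\le D(t)+4$, and therefore $|q_1(t)q_2(t)|\le D(t)+4$.

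Third I would connect $D(t)$ to $\OPT(t)$. By hypothesis there is a representative segment $s$ of length $\OPT(t)$ with every point of $P$ within distance $r$ of $s$; equivalently $P$ lies inside the hippodrome $s\oplus\mathrm{disk}(r)$, whose diameter is $\OPT(t)+2r$ (the two tips along the long axis realize this distance, and the perpendicular width $2r$ is no larger). Hence $D(t)\le\OPT(t)+2r$, and combining with the previous step yields $|q_1(t)q_2(t)|\le\OPT(t)+2r+4$, as claimed. I expect no real obstacle here; the only parts that need care are (i) correctly identifying which two canonical solutions are interpolated at time $t$, which is what produces the $+4$ Lipschitz slack rather than $+2$, and (ii) the elementary observation that the norm of a convex combination is bounded by the maximum norm, which lets us pass from the two canonical lengths to the interpolated length without any additional loss. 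Note also that the assumption $A(t)=q_1(t)q_2(t)$ is used only to ensure the statement is non-vacuous (there is indeed a segment to measure); the bound itself follows from the geometry of the interpolation.
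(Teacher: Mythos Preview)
Your proof is correct and follows essentially the same approach as the paper: both use that each canonical segment has length $D(i)$, that the interpolated length is bounded by the maximum of the two canonical lengths (via convexity of the norm), that $D\le\OPT+2r$ from the hippodrome containing $P$, and a $2$-Lipschitz bound to absorb the time lag. The only difference is bookkeeping: the paper applies the hippodrome bound at integer time $i$ and then uses Lipschitz continuity of $\OPT$, whereas you keep everything in terms of $D$ until the very end and apply the hippodrome bound directly at time $t$. Your ordering is in fact slightly cleaner, since it never needs $\OPT(i)$ to be well-defined (the hypothesis only guarantees a representative segment at time $t$, not at $\lfloor t\rfloor$).
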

\begin{proof}
    We first argue that the canonical solution at time $i \in \mathbb{Z}$ has length at most $\OPT(i) + 2r + 2$. Observe that the length of $q'_1(i)q'_2(i)$ is $D(i)$. The shortest possible segment that is at distance at most $r$ from both diametrical points $d_1(i)$ and $d_2(i)$ has length $D(i) - 2r$, and has its each endpoint at distance exactly $r$ from one of the diametrical points. See Figure~\ref{fig:OPT2r} for an example. Since both diametrical points are in $P$, the optimal solution therefore has length at least $D(i) - 2r$. Thus, we get
    \begin{equation*}
        |q'_1(i)q'_2(i)| = D(i) \leq \OPT(i) + 2r
    \end{equation*}
    For $t \in (i, i+1]$, since points in $P$ move with at most unit speed, observe that $|q_1(t)q_2(t)| \leq |q'_1(i)q'_2(i)| + 2$. In addition, $\OPT$ can also change at most $2$ per time unit. As such, we get
    \begin{equation*}
        |q_1(t)q_2(t)| \leq |q'_1(i)q'_2(i)| + 2 \leq \OPT(i) + 2r + 2 \leq \OPT(t) + 2r + 4.\qedhere
    \end{equation*}
\end{proof}

\begin{restatable}{lemma}{hausdorffBound}
    \label{lem:hausdorff-bound}
    While $A(t) = q_1(t)q_2(t)$, the distance from $p$ to $q_1q_2$ is at most $2r\sqrt{2}+4$ for all $p \in P$. 
\end{restatable}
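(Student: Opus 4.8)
The plan is to reduce the bound for moving points to a static bound at the integer sampling times and then propagate it through the linear interpolation that defines $q_1(t)q_2(t)$ on each unit interval.

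\textbf{Step 1: closeness to the canonical segment at integer times.} First I would show that for every integer $j$ and every $p\in P$, the distance at time $j$ from $p$ to the canonical segment $\ell_j:=q_1'(j)q_2'(j)$ is at most $E(j)/2$. Working in the frame whose first axis is the diametrical direction $d_1(j)d_2(j)$, the narrowest strip of $P$ in that orientation has width exactly $E(j)$ and $\ell_j$ is its midline, so every point of $P$ lies within $E(j)/2$ of the line supporting $\ell_j$. Since $d_1(j)$ and $d_2(j)$ realise the diameter, they are the two extreme points of $P$ in their own direction (otherwise some point would be farther than $D(j)$ from one of them), so every point of $P$ projects \emph{onto} $\ell_j$ rather than past an endpoint; hence its nearest point of $\ell_j$ is at distance at most $E(j)/2$.

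\textbf{Step 2: from activity to width bounds.} If $A(t)=q_1(t)q_2(t)$ with $t\in[i,i+1)$, then $E(i)\le 2r\sqrt{2}+2$ by definition of $A$, so by Step~1 every $p\in P$ is within $r\sqrt{2}+1$ of $\ell_i$ at time $i$. If, as is generically the case, $A$ was also active on $[i-1,i)$, then likewise $E(i-1)\le 2r\sqrt{2}+2$ and every $p\in P$ is within $r\sqrt{2}+1$ of $\ell_{i-1}$ at time $i-1$; the remaining case, where the segment first reappears at time $i$, is handled separately below with only the weaker bound $E(i-1)\le E(i)+2\le 2r\sqrt{2}+4$.

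\textbf{Step 3: propagating through the interpolation.} Write $\lambda\in[0,1]$ for the interpolation weight of $\ell_i$ in $q_1(t)q_2(t)$; as $t$ ranges over $[i,i+1)$, $\lambda$ ranges over $[0,1)$, and by the unit-speed assumption $|p(t)-p(i)|\le\lambda$ and $|p(t)-p(i-1)|\le\lambda+1$. Parametrising all three segments consistently (the $q_1$-endpoint being parameter $0$), one checks that the parameter-$v$ point of $q_1(t)q_2(t)$ equals $\lambda\,\ell_i[v]+(1-\lambda)\,\ell_{i-1}[v]$, and that $|\ell_i[v]-\ell_{i-1}[v]|\le\max_j|q_j'(i-1)-q_j'(i)|=:d_H$ for every $v$ (a convex combination of the endpoint displacements), where $d_H\le (2r+1)\sqrt{2}+2$ by the proof of Lemma~\ref{lem:speed-bound}. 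Now fix $p\in P$. Projecting $p(t)$ onto $\ell_i$ and sliding to the matching parameter on $q_1(t)q_2(t)$, convexity of the Euclidean distance gives $d(p(t),q_1(t)q_2(t))\le d(p(t),\ell_i)+(1-\lambda)d_H\le \lambda+(r\sqrt{2}+1)+(1-\lambda)d_H$; projecting onto $\ell_{i-1}$ instead gives $d(p(t),q_1(t)q_2(t))\le \lambda+(r\sqrt{2}+2)+\lambda d_H$. The first bound decreases and the second increases in $\lambda$, so
\[
 d\bigl(p(t),q_1(t)q_2(t)\bigr)\ \le\ \max_{\lambda\in[0,1]}\min\Bigl(\lambda+(r\sqrt{2}+1)+(1-\lambda)d_H,\ \lambda+(r\sqrt{2}+2)+\lambda\,d_H\Bigr),
\]
the maximum occurring at the crossover $\lambda=\frac12-\frac1{2d_H}$, where the common value equals $r\sqrt{2}+2+\frac{d_H}{2}-\frac1{2d_H}$; substituting $d_H=(2r+1)\sqrt{2}+2$ shows this is below $2r\sqrt{2}+4$, proving the lemma in the generic case.

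\textbf{Main obstacle.} The delicate part is the constant in Step~3. The naive bound ``$d(p(t),\ell_i)$ plus the displacement between $\ell_i$ and $q_1(t)q_2(t)$'' overshoots by roughly $\sqrt{2}\,r$, because the canonical endpoints can move by nearly $E$ in one time step; recovering $2r\sqrt{2}+4$ requires genuinely exploiting the interpolation --- charging only the \emph{smaller} interpolation fraction of $d_H$, via whichever of $\ell_{i-1},\ell_i$ is weighted more heavily, and using that $|p(t)-p(i)|$ is small precisely when $\ell_i$ is weighted heavily. One also has to verify the parameter-matched identity used above (rather than the weaker statement that $q_1(t)q_2(t)$ merely lies in the parallelogram spanned by $\ell_{i-1}$ and $\ell_i$), and the reappearance interval, where only $E(i-1)\le 2r\sqrt{2}+4$ holds, needs a short separate check: there $q_1(t)q_2(t)$ starts at $\ell_{i-1}$ and the bound at the start of the interval is already below $r\sqrt{2}+3$, and the same two estimates with the enlarged offset cover the rest.
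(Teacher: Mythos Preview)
Your route differs from the paper's. The paper simply bounds $d(p(t),\ell_i)\le E+2$ and $d(p(t),\ell_{i+1})\le E+2$ and then \emph{asserts} that linear interpolation between the two canonical segments keeps the bound; it never justifies that step, and in fact distance to an endpoint-interpolated segment is not in general bounded by the max of the distances to the two endpoint segments. Your Step~3 actually does this work: the parameter-matched identity $\ell_\lambda[v]=\lambda\,\ell_i[v]+(1-\lambda)\,\ell_{i-1}[v]$ together with $|\ell_i[v]-\ell_{i-1}[v]|\le d_H$ lets you charge only the lighter interpolation weight of $d_H$, and playing this off against $|p(t)-p(i)|\le\lambda$ (resp.\ $|p(t)-p(i-1)|\le\lambda+1$) gives two opposing bounds whose crossover lands at $2r\sqrt{2}+3+\sqrt{2}/2-1/(2d_H)<2r\sqrt{2}+4$. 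That part is correct and is a genuine strengthening of the paper's argument.

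The gap is the reappearance case. You write that ``the same two estimates with the enlarged offset cover the rest'', but they do not. With $E(i-1)\le 2r\sqrt{2}+4$ you only get $E(i-1)/2\le r\sqrt{2}+2$, and the endpoint-displacement bound from Lemma~\ref{lem:speed-bound} inflates to $d_H\le (2r+1)\sqrt{2}+3$ (that lemma's proof uses $E\le 2r\sqrt{2}+2$ for \emph{both} endpoints). Plugging these into your own crossover computation gives
\[
r\sqrt{2}+\tfrac{5}{2}+\frac{d_H}{2}-\frac{1}{d_H}\;=\;2r\sqrt{2}+4+\frac{\sqrt{2}}{2}-\frac{1}{d_H},
\]
which exceeds $2r\sqrt{2}+4$ for all $r\ge 1$. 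So the ``short separate check'' you promise does not close: you need an additional idea on the first active interval after a gap --- for instance, exploiting that $E(i-1)$ is only barely above the threshold, or bounding the swing of the diametric direction in one step more tightly --- to recover the constant. (The paper's own proof is equally terse at this point and does not single out the reappearance interval either.)
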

\begin{proof}
    To bound the maximum distance between any point~$p$ and the line piece $q_1q_2$ at any point in time, recall that algorithm~$A$ interpolates to the canonical solution of the time step before. So consider the canonical solutions at times~$i$ and~$i+1$ and corresponding bounding boxes in the orientation of~$q_1'(i)q_2'(i)$ and~$q_1'(i+1)q_2'(i+1)$, respectively (see Figure~\ref{fig:kinetic-speed}c).
    
    At time~$i+1$ the canonical solution of time~$i$ is reached, and the points can have moved at most one unit outside the bounding box of time~$i$, and at time~$i+2$ the points can have moved one unit further outside. Now consider any $t\in [i+1, i+2]$, in which $A$ interpolated from the canonical solution at time~$i$ to the canonical solution at time~$i+1$. For any such~$t$ a point $p$ is at distance at most~$E+2\leq 2r\sqrt{2}+4$ from the canonical solution at time~$i$, which is equal to~$A(i+1)=q_1(i+1)q_2(i+1)$.
    
    Finally, at time~$i+1$ all points must be inside the bounding box in the orientation of the canonical solution at time~$i+1$. At time~$i+2$, the points hence lie at most~$E+1\leq E+2\leq 2r\sqrt{2}+3$ away from the canonical solution at time~$i+1$, which is~$A(i+2) = q_1(i+2)q_2(i+2)$. By linearly interpolating between~$q_1(i+1)q_2(i+1)$ and~$q_1(i+2)q_2(i+2)$, we ensure that no point is more than $2r\sqrt{2}+4$ removed from $q_1q_2$ at any~$t\in [i+1, i+2]$.
    \end{proof}

We can now combine Lemmata~\ref{lem:exist-lowerbound}--\ref{lem:hausdorff-bound} to get the following theorem.

\begin{restatable}{theorem}{stabilityFull}
    \label{thm:stability-theorem}
    Given a set $P$ of points moving with at most unit speed, algorithm $A$ yields a stable approximating segment with $l = 2r + 4$ and $h = 2\sqrt{2} + 4$, for which speed of the endpoints is bounded by $(2r + 1)\sqrt{2} + 2$.
\end{restatable}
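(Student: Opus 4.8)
The plan is to obtain Theorem~\ref{thm:stability-theorem} by assembling the properties of algorithm~$A$ established in Lemmata~\ref{lem:width-extent}--\ref{lem:hausdorff-bound}, organized around the definition of a \emph{stable approximating segment} from the framework of Meulemans et al.~\cite{meulemans2019stability}. I would first argue stability. Since $A$ is a chasing algorithm, on any maximal interval on which its output is non-empty, $q_1(t)$ and $q_2(t)$ are obtained by linear interpolation between the canonical placements at consecutive integer timestamps, so they vary continuously. Flicker is excluded because the decision whether $A(t)=\varnothing$ is taken only at integer timestamps and depends solely on $E(\lfloor t\rfloor)$; hence it is constant on every unit interval $[i,i+1)$, so both the existence and the non-existence of a representative segment in the output last at least one full time unit.

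Next I would verify the three quantitative parameters, each of which is essentially a single cited lemma. For the length, Lemma~\ref{lem:length-bound} gives $|q_1(t)q_2(t)|\le\OPT(t)+2r+4$ whenever a representative segment exists and $A(t)=q_1(t)q_2(t)$, so the additive approximation term is $l=2r+4$. For the speed, Lemma~\ref{lem:speed-bound} states directly that $q_1$ and $q_2$ move continuously with speed at most $(2r+1)\sqrt2+2$ while $A$ is non-empty. For the proximity, Lemma~\ref{lem:hausdorff-bound} gives that every $p\in P$ is within distance $2\sqrt2\,r+4$ of $q_1(t)q_2(t)$ while $A$ is non-empty; the only additional step is to cast this in the multiplicative form $h\cdot r$ required by the framework, which uses the standing assumption $r\ge1$: $2\sqrt2\,r+4\le2\sqrt2\,r+4r=(2\sqrt2+4)r$, so $h=2\sqrt2+4$.

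I would then check that $A$ is genuinely \emph{approximating}, i.e., that it outputs a segment exactly when there is something to approximate. A representative segment of radius $\rho$ exists at time $t$ precisely when $W(t)\le2\rho$; in particular the optimum exists iff $W(t)\le2r$, and by Lemma~\ref{lem:exist-lowerbound} this forces $A(t)=q_1(t)q_2(t)$, so the length and proximity bounds above apply whenever an optimal representative segment exists. Conversely, Lemma~\ref{lem:exist-upperbound} shows $A(t)=\varnothing$ once $W(t)>2\sqrt2\,r+4$, i.e., once not even a hippodrome of radius $\sqrt2\,r+2$ contains $P$, so $A$ does not produce spurious segments for point sets far from segment-like. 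Combining the stability argument, the three quantitative bounds, and these existence guarantees yields the theorem.

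The main obstacle is not conceptual but lies in the bookkeeping around intervals of validity: one must confirm that Lemmata~\ref{lem:speed-bound} and~\ref{lem:hausdorff-bound}, whose proofs implicitly assume the relevant values of $E$ at the two integer endpoints bounding the current interpolation are at most $2\sqrt2\,r+2$, really do cover every $t$ with $A(t)=q_1(t)q_2(t)$ --- including the first unit interval immediately after the output reappears following an empty phase --- and to reconcile the additive proximity bound with the multiplicative form demanded by the definition, as indicated above. Everything else is a direct citation.
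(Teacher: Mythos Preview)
Your proposal is correct and follows exactly the paper's approach: the paper simply states that the theorem is obtained by combining Lemmata~\ref{lem:exist-lowerbound}--\ref{lem:hausdorff-bound}, and you spell out precisely this combination. The one detail you make explicit that the paper leaves implicit is the use of the standing assumption $r\ge 1$ to convert the additive proximity bound $2\sqrt{2}\,r+4$ of Lemma~\ref{lem:hausdorff-bound} into the multiplicative form $(2\sqrt{2}+4)r$ required for $h=2\sqrt{2}+4$; this is indeed needed and your handling of it is correct.
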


\section{Conclusion}
In this paper, we presented an $O(n \log h + h \log^3 h)$ time algorithm to find the shortest representative segment of a point set, improving the previous $O(n \log h + h^{1+\varepsilon})$ time solution. Additionally, we showed how to maintain an approximation of the shortest representative segment in a stable manner, such that its endpoints move with a speed bounded by a linear function in $r$. 

There may be possibilities for improving the running time of our static solution to $O(n \log h + h \log^2 h)$, or even $O(n \log h)$. The $O(h \log^3 h)$ term comes from having to handle $O(h \log h)$ type-5 events in $O(\log^2 h)$ time each. However, it may be possible to show that there are at most $O(h)$ type-5 events, since the conjugate pairs used to bound the number of internal events each have a unique starting point. Additionally, it may be possible to improve the query time of the data structure described in Section~\ref{sec:ds} to $O(\log h)$ time using ideas like fractional cascading, but there is no straightforward way to make this work for the circular query.

% In Appendix~\ref{app:approx} we show that we can compute $(1+\eps)$-approximation in $O(n\log h + h/\eps)$ time by sampling orientations and applying the fixed orientation algorithm.

%%
%% Bibliography
%%

%% Please use bibtex, 

\bibliography{bibliography}

\begin{thebibliography}{10}

\bibitem{agarwal1993computing}
Pankaj~K. Agarwal, Alon Efrat, Micha Sharir, and Sivan Toledo.
\newblock Computing a segment center for a planar point set.
\newblock {\em Journal of Algorithms}, 15(2):314--323, 1993.

\bibitem{been2010optimizing}
Ken Been, Martin N{\"o}llenburg, Sheung-Hung Poon, and Alexander Wolff.
\newblock Optimizing active ranges for consistent dynamic map labeling.
\newblock {\em Computational Geometry: Theory and Applications},
  43(3):312--328, 2010.

\bibitem{DBLP:conf/dialm/BespamyatnikhBKS00}
Sergei Bespamyatnikh, Binay Bhattacharya, David Kirkpatrick, and Michael Segal.
\newblock Mobile facility location.
\newblock In {\em Proc. 4th Workshop on Discrete Algorithms and Methods for
  Mobile Computing and Communications}, pages 46--53, 2000.

\bibitem{chan1996convexhull}
Timothy~M. Chan.
\newblock Optimal output-sensitive convex hull algorithms in two and three
  dimensions.
\newblock {\em Discret. Comput. Geom.}, 16(4):361--368, 1996.

\bibitem{chan2018stabbing}
Timothy~M. Chan, Thomas~C. van Dijk, Krzysztof Fleszar, Joachim Spoerhase, and
  Alexander Wolff.
\newblock Stabbing rectangles by line segments--how decomposition reduces the
  shallow-cell complexity.
\newblock In {\em 29th International Symposium on Algorithms and Computation},
  2018.

\bibitem{chen2013approximating}
Danny~Z. Chen and Haitao Wang.
\newblock Approximating points by a piecewise linear function.
\newblock {\em Algorithmica}, 66(3):682--713, 2013.

\bibitem{claverol2018stabbing}
Merce Claverol, Elena Khramtcova, Evanthia Papadopoulou, Maria Saumell, and
  Carlos Seara.
\newblock Stabbing circles for sets of segments in the plane.
\newblock {\em Algorithmica}, 80:849--884, 2018.

\bibitem{DBLP:conf/esa/BergG13}
Mark de~Berg and Dirk H.~P. Gerrits.
\newblock Labeling moving points with a trade-off between label speed and label
  overlap.
\newblock In {\em Proc. 21st European Symposium on Algorithms (ESA)}, pages
  373--384. Springer, 2013.
\newblock \href {https://doi.org/10.1007/978-3-642-40450-4\_32}
  {\path{doi:10.1007/978-3-642-40450-4\_32}}.

\bibitem{de2013kinetic}
Mark de~Berg, Marcel Roeloffzen, and Bettina Speckmann.
\newblock Kinetic 2-centers in the black-box model.
\newblock In {\em Proc. 29th Symposium on Computational Geometry}, pages
  145--154, 2013.

\bibitem{diaz2015new}
Jos{\'e}~Miguel D{\'\i}az-B{\'a}{\~n}ez, Matias Korman, Pablo
  P{\'e}rez-Lantero, Alexander Pilz, Carlos Seara, and Rodrigo~I Silveira.
\newblock New results on stabbing segments with a polygon.
\newblock {\em Computational Geometry}, 48(1):14--29, 2015.

\bibitem{DBLP:journals/comgeo/DimitrovKKR09}
Darko Dimitrov, Christian Knauer, Klaus Kriegel, and G\"{u}nter Rote.
\newblock Bounds on the quality of the {PCA} bounding boxes.
\newblock {\em Computational Geometry}, 42(8):772--789, 2009.
\newblock \href {https://doi.org/https://doi.org/10.1016/j.comgeo.2008.02.007}
  {\path{doi:https://doi.org/10.1016/j.comgeo.2008.02.007}}.

\bibitem{duckham2008efficient}
Matt Duckham, Lars Kulik, Mike Worboys, and Antony Galton.
\newblock Efficient generation of simple polygons for characterizing the shape
  of a set of points in the plane.
\newblock {\em Pattern recognition}, 41(10):3224--3236, 2008.

\bibitem{durocher2006steiner}
Stephane Durocher and David Kirkpatrick.
\newblock The steiner centre of a set of points: Stability, eccentricity, and
  applications to mobile facility location.
\newblock {\em International Journal of Computational Geometry \&
  Applications}, 16(04):345--371, 2006.

\bibitem{durocher2008bounded}
Stephane Durocher and David Kirkpatrick.
\newblock Bounded-velocity approximation of mobile {E}uclidean 2-centres.
\newblock {\em International Journal of Computational Geometry \&
  Applications}, 18(03):161--183, 2008.

\bibitem{edelsbrunner1983shape}
Herbert Edelsbrunner, David Kirkpatrick, and Raimund Seidel.
\newblock On the shape of a set of points in the plane.
\newblock {\em IEEE Transactions on information theory}, 29(4):551--559, 1983.

\bibitem{edelsbrunner1982stabbing}
Herbert Edelsbrunner, Hermann~A. Maurer, Franco~P. Preparata, Arnold~L.
  Rosenberg, Emo Welzl, and Derick Wood.
\newblock Stabbing line segments.
\newblock {\em BIT Numerical Mathematics}, 22:274--281, 1982.

\bibitem{DBLP:journals/dcg/EfratS96}
Alon Efrat and Micha Sharir.
\newblock A near-linear algorithm for the planar segment-center problem.
\newblock {\em Discret. Comput. Geom.}, 16(3):239--257, 1996.
\newblock \href {https://doi.org/10.1007/BF02711511}
  {\path{doi:10.1007/BF02711511}}.

\bibitem{ester1996density}
Martin Ester, Hans{-}Peter Kriegel, J{\"{o}}rg Sander, and Xiaowei Xu.
\newblock A density-based algorithm for discovering clusters in large spatial
  databases with noise.
\newblock In {\em Second International Conference on Knowledge Discovery and
  Data Mining (KDD-96)}, pages 226--231. {AAAI} Press, 1996.

\bibitem{DBLP:journals/cacm/FreemanS75}
Herbert Freeman and Ruth Shapira.
\newblock Determining the minimum-area encasing rectangle for an arbitrary
  closed curve.
\newblock {\em Communications of the {ACM}}, 18(7):409--413, 1975.
\newblock \href {https://doi.org/10.1145/360881.360919}
  {\path{doi:10.1145/360881.360919}}.

\bibitem{gemsa2011sliding}
Andreas Gemsa, Martin N{\"o}llenburg, and Ignaz Rutter.
\newblock Sliding labels for dynamic point labeling.
\newblock In {\em Proc. 23rd Canadian Conference on Computational Geometry},
  pages 205--210, 2011.

\bibitem{gemsa2016consistent}
Andreas Gemsa, Martin N{\"o}llenburg, and Ignaz Rutter.
\newblock Consistent labeling of rotating maps.
\newblock {\em Journal of Computational Geometry}, 7(1):308--331, 2016.

\bibitem{hakimi1991fitting}
S.~Louis Hakimi and Edward~F. Schmeichel.
\newblock Fitting polygonal functions to a set of points in the plane.
\newblock {\em CVGIP: Graphical Models and Image Processing}, 53(2):132--136,
  1991.

\bibitem{hartigan1979algorithm}
John~A. Hartigan and Manchek~A. Wong.
\newblock Algorithm as 136: A k-means clustering algorithm.
\newblock {\em Journal of the royal statistical society.}, 28(1):100--108,
  1979.

\bibitem{imai1992}
Hiroshi Imai, DT~Lee, and Chung-Do Yang.
\newblock 1-segment center problems.
\newblock {\em ORSA Journal on Computing}, 4(4):426--434, 1992.

\bibitem{kobylkin2017stabbing}
Konstantin Kobylkin.
\newblock Stabbing line segments with disks: complexity and approximation
  algorithms.
\newblock In {\em International Conference on Analysis of Images, Social
  Networks and Texts}, pages 356--367. Springer, 2017.

\bibitem{meulemans2018framework}
Wouter Meulemans, Bettina Speckmann, Kevin Verbeek, and Jules Wulms.
\newblock A framework for algorithm stability and its application to kinetic
  euclidean {MST}s.
\newblock In {\em Proc. 13th Latin American Symposium on Theoretical
  Informatics (LATIN)}, pages 805--819, 2018.

\bibitem{meulemans2019stability}
Wouter Meulemans, Kevin Verbeek, and Jules Wulms.
\newblock Stability analysis of kinetic orientation-based shape descriptors.
\newblock {\em arXiv preprint arXiv:1903.11445}, 2019.

\bibitem{nollenburg2010dynamic}
Martin N{\"o}llenburg, Valentin Polishchuk, and Mikko Sysikaski.
\newblock Dynamic one-sided boundary labeling.
\newblock In {\em Proc. 18th Conference on Advances in Geographic Information
  Systems}, pages 310--319, 2010.

\bibitem{schlipf2014stabbing}
Lena~Marie Schlipf.
\newblock {\em Stabbing and Covering Geometric Objects in the Plane}.
\newblock PhD thesis, FU Berlin, 2014.

\bibitem{10.7554/eLife.19505}
Ariana Strandburg-Peshkin, Damien~R. Farine, Margaret~C. Crofoot, and Iain~D.
  Couzin.
\newblock Habitat and social factors shape individual decisions and emergent
  group structure during baboon collective movement.
\newblock {\em eLife}, 6:e19505, 2016.

\bibitem{toussaint1983solving}
Godfried~T. Toussaint.
\newblock Solving geometric problems with the rotating calipers.
\newblock In {\em IEEE Melecon}, volume~83, page A10, 1983.

\bibitem{wang2002new}
Der~Perng Wang.
\newblock A new algorithm for fitting a rectilinear x-monotone curve to a set
  of points in the plane.
\newblock {\em Pattern Recognition Letters}, 23(1-3):329--334, 2002.

\end{thebibliography}
\end{document}